\documentclass[12pt]{article}
\usepackage{amssymb}
\usepackage{amsmath,bbm}
\usepackage{amsfonts}
\usepackage{geometry}
\usepackage{setspace}
\usepackage[font={small}]{caption}
\usepackage{chbibref}
\usepackage{float}
\usepackage{color}
\usepackage{appendix}
\usepackage{natbib}
\usepackage{hyperref}
\usepackage{enumerate}
\usepackage{graphicx}
\usepackage{caption}
\usepackage{subcaption}
\usepackage{afterpage}
\usepackage{pdflscape}
\usepackage{booktabs}
\usepackage{multirow}
\usepackage{soul}
\usepackage[normalem]{ulem}
\usepackage{rotating}

\usepackage{pstricks}
\newrgbcolor{dgreen}{0 0.5 0}

\setcounter{MaxMatrixCols}{10}

\hypersetup{
    colorlinks,
    citecolor=blue,
    filecolor=black,
    linkcolor=blue,
    urlcolor=black
}
\newtheorem{theorem}{Theorem}

\newtheorem{assumption}{Assumption}

\newtheorem{example}{Example}
\newtheorem{lemma}{Lemma}

\newenvironment{proof}[1][Proof]{\noindent \textbf{#1.} }{\  \rule{0.5em}{0.5em}}

\geometry{left=0.95in,right=0.95in,top=0.95in,bottom=0.95in}

\setbibref{References}{\sffamily}
\allowdisplaybreaks[1]

\doublespacing

\begin{document}

\title{\bf Bounds on Average Effects  in \\ Discrete Choice Panel Data Models\thanks{
We thank Victor Aguirregabiria, 
 Iv\'{a}n Fern\'{a}ndez-Val,
Hide Ichimura, Christophe Gaillac, Jiaying Gu, Bo Honor\'{e}, Louise Laage, Laura Liu, Elena Manresa, Whitney Newey, Claudia Noack, Alexandre Poirier, the Editor and three anonymous Referees, and participants at various seminars and workshops for helpful comments and suggestions.
This research was
supported by   the Economic and Social Research Council through the ESRC Centre for
Microdata Methods and Practice (grant numbers RES-589-28-0001, RES-589-28-0002 and ES/P008909/1), and by  the
European Research Council grants ERC-2014-CoG-646917-ROMIA and
ERC-2018-CoG-819086-PANEDA. 
The authors would like to acknowledge the use of the University of Oxford Advanced Research Computing (ARC). }}
\author{\setcounter{footnote}{2}Cavit Pakel\thanks{%
Department of Economics, University of Oxford. Email: \texttt{cavit.pakel@economics.ox.ac.uk} } \and Martin Weidner%
\thanks{%
Department of Economics and Nuffield College, University of Oxford. 
Email: \texttt{martin.weidner@economics.ox.ac.uk} } }
\date{January 2026}

\maketitle
\thispagestyle{empty}
\vspace{-30pt}
\begin{abstract}

\noindent
In discrete choice panel data, estimation of average effects is crucial for quantifying the effect of covariates, and for policy evaluation and counterfactual analysis. However, in short panels with individual-specific effects, challenges arise due to partial identification and the incidental parameter problem. In particular, estimating the sharp identified set on average effects becomes impractical when covariates have large support sets, such as when they are continuous. This paper proposes a method for estimating outer bounds on the identified set of average effects, which are easy to construct, converge at the parametric rate, and remain computationally feasible even for moderately large samples. Asymptotically valid confidence intervals are also provided.

\end{abstract}

\noindent
\textbf{Keywords:}
Panel data, discrete choice,
average effects, 
set identification, 
outer bounds, female labor force participation.

\noindent
\textbf{JEL Codes:} \textit{C01, C23, C25}

\newpage

\section{Introduction}

Panel data models with individual-specific effects make it possible to control for unobserved heterogeneity and confounding due to omitted variables that are constant over time. Nonlinear models are required to correctly describe discrete outcomes, and the main complication in such nonlinear panel models is the unknown distribution of unobserved heterogeneity, which constitutes an infinite-dimensional parameter. The fixed effects approach leaves this distribution unspecified, eliminating misspecification concerns (as opposed to the correlated random effects approach which models this distribution parametrically). However, lack of sufficient time-series variation in short panels means that this unknown distribution remains set-identified. An important consequence of this is a general lack of point-identification of average effects. While it is theoretically possible to recover the sharp identified set for average effects, in empirically relevant panel dimensions this often becomes an infeasible task due to a curse of dimensionality. This is a serious issue because average effects are typically the ultimate object of interest, especially from the policy perspective. In this paper, considering a general semiparametric setting, we propose alternative outer bounds which are simple to obtain and remain free of the curse of dimensionality in empirically relevant settings.

Formally, let $Y_i=(Y_{i1},\ldots,Y_{iT})$ be the vector of observed outcomes for individual $i=1,\ldots,n$, where $T$ is the number of time periods and $n$ is the number of cross-sectional units. Throughout, we assume that $n\to\infty$ but $T$ remains fixed. The semiparametric panel models we consider in this paper describe the distribution of $Y_i$ conditional on a vector of observed conditioning variables $Z_i$ as
\begin{align}
   f_{Y|Z}(y_i|z_i) \, &= \, \int_{\cal A} \,
             f(y_i \, | \,z_i,a_i;\beta) \,
              \pi(a_i|z_i) \; d a_i .	\label{ModelDist}              
\end{align}
Here, $f(y_i \, | \,z_i,a_i;\beta)$ is the distribution of $Y_i$ conditional on $Z_i$ and the (vector of) unobserved individual effects $A_i$, and it is assumed to be known up to the finite dimensional parameter $\beta$. The distribution of $A_i$ conditional on $Z_i$, given by $\pi(a_i|z_i)$, is left unrestricted. Both $\beta$ and $\pi = \pi(a_i|z_i)$ are unknown. The vector of conditioning variables usually consists of observed covariates $(X_{i1},\ldots,X_{iT})$ and/or initial conditions $(Y_{i0},Y_{i,-1},\ldots)$. 
Given the true distribution of $Y_i$ conditional on $Z_i$, the identified
set for the model parameters consists of all pairs $(\beta,\pi)$ that satisfy \eqref{ModelDist}.

In empirical research, the ultimate object of interest is generally an average effect of the form 
\begin{equation}
	\overline m := \mathbb{E} \, m(Z_i,A_i,\beta), \label{APE}
\end{equation}
where $m(Z_i,A_i,\beta)$ is some function of interest. The exact choice of $m(\cdot,\cdot,\cdot)$ may vary from application to application, leading to different definitions of $\overline{m}$; see, among others, \citet{Chamberlain(84)}, \citet{BlundellPowell(03),BlundellPowell(04)}, \citet{AltonjiMatzkin(05)}, \citet{Wooldridge(05),Wooldridge(05chp)}, \citet{BesterHansen(09)}, \citet{GrahamPowell(12)}, \citet{HoderleinWhite(12)}.\footnote{A different quantity of interest, which we will not consider, is the quantile structural function of \citet{ImbensNewey(09)}.} \citet{AbrevayaHsu21} provide a detailed discussion of different average effects used in the literature.

The average effect in \eqref{APE} can be rewritten as
	\begin{align*}
		\overline m = \int_{\cal Z} \int_{\cal A} \, m\left(z_i,a_i,\beta \right) \, \pi(a_i|z_i) \, f_Z(z_i) \, d a_i d z_i,
	\end{align*}
which clearly depends on $(\beta,\pi)$.
In discrete choice models those model parameters (and in particular $\pi$) are usually only partially-identified, implying that $\overline m$ is also 
typically only partially-identified.

\citet{HonoreTamer06} and \citet{CFHN13} provide methods for obtaining the identified set when covariates are discrete. More recently, there has been an increased interest in the identification and estimation of average effects in various settings; see, e.g., \citet{AC21}, \citet{DDL21}, \citet{LiuPoirierShiu21}, \citet{BM22}, \citet{BMS22}, and \citet{DGK24}.\footnote{Lack of point-identification of $\pi(a_i|z_i)$ does not invariably lead to set-identification of $\overline m$. An interesting contribution in this vein is by \citet{AC21} who obtain point-identification of the average effect with respect to the lagged dependent variable in a dynamic logit model. However, such case-specific results usually remain an exception. A different route is to obtain point-identification of average effects under additional restrictions on the data generating process as in \citet{LiuPoirierShiu21}. In contrast to these approaches, our aim is to provide a method which applies to an arbitrary function $m(Z_i,A_i,\beta)$ in a generic semiparametric framework.
}

Unfortunately, obtaining the sharp identified set is 
often practically infeasible for sample sizes typically encountered in applications, due to a curse of dimensionality. 
This is because obtaining the sharp identified set for $\overline m$ typically requires estimates of the conditional probabilities $f_{Y|Z}(y_i|z_i)$.
Since $Z_i$ usually contains a 
time-vector of (multiple) covariates, 
the curse of dimensionality is obvious
for continuous covariates. However, even with discrete covariates the number of conditional probabilities that would need to be estimated 
is usually large. Suppose, for example, $Y_{it}, X_{it} \in \{0,1\}$, and that $Z_i=(X_{i1},\ldots,X_{iT})$. This implies $2^{2T}$ different conditional probabilities $f_{Y|Z}(y_i|z_i)$; for, say, $T=5$ this yields $1,024$ conditional probabilities.
Estimation of objects of such numbers would require a much larger cross-sectional sample size than available in the majority of applications.\footnote{
Some general inference frameworks, like the ones described in
 \cite{chen2011sensitivity}, 
 are in principle applicable to models of the form \eqref{ModelDist}
 and \eqref{APE}. However, we are not aware of any framework
 that addresses the main statistical challenge that we are facing -- namely that $Z_i$ is high-dimensional and 
every realization of $Z_i$ is unique in  standard panel applications.
}

Motivated by this issue, we propose alternative bounds on the average effect $\overline m$ which can be feasibly obtained in realistic data settings. Our proposal is based on  finding appropriate functions $L(Z_i,Y_i,\beta)$ and $U(Z_i,Y_i,\beta)$ such that
\begin{equation*}
    \mathbb E L(Z_i,Y_i,\beta) 
    \leq 
    \mathbb E m(Z_i,A_i, \beta) 
    \leq 
    \mathbb E U(Z_i,Y_i,\beta).
\end{equation*} 
We show that suitable functions $L(\cdot,\cdot,\cdot)$ and $U(\cdot,\cdot,\cdot)$ 
can be obtained by solving an appropriate linear program for each realized value of $Z_i$. Asymptotically valid lower and upper bounds are then given by 
\begin{align*}
    \frac{1}{n} \sum_{i=1}^n L(Z_i,Y_i,\widehat{\beta})
    \qquad
    \text{and}
    \qquad 
    \frac{1}{n} \sum_{i=1}^n U(Z_i,Y_i,\widehat{\beta}),
\end{align*}
respectively, for some appropriate estimator $\widehat \beta$, assuming
that $\beta$ is point-identified.
We prove the validity of the proposed bounds and provide asymptotically valid inference methods on $\overline m$. Our approach allows for discrete, as well as continuous covariates. We also provide computationally feasible methods for obtaining the suggested bounds. Importantly, these do not require searching over the space of possible distributions for $\pi(a_i|z_i)$, but only over the domain of $A_i$ itself. Consequently, implementation of our method is computationally straightforward and fast.

Our proposal differs from the existing literature in several ways. Firstly, we do not propose a different approach to obtaining the sharp identified set for $\overline m$; rather, we obtain outer bounds on this set. This has the virtue of  avoiding the curse of dimensionality associated with the conditioning variable $Z_i$. Indeed, our outer bounds can be feasibly obtained at standard sample sizes even if the vector of conditioning variables $Z_i$ is continuous, or high-dimensional, or takes on many different values within the sample. Secondly, given our general semiparametric setting, the proposed method can easily be applied to different models (and functions $m(Z_i,A_i,\beta)$) of interest, such as the static logit, dynamic logit or the more complicated random coefficient logit models.

\citet{DDL21} propose an alternative method to achieve inference
on $\overline m$. Their paper initially focuses on
inference on the sharp identified set, but they also 
consider ``outer bounds''  (different from ours) that avoid non-parametric estimation of intermediate objects, similar in spirit to 
our results here. However, their approach currently only applies to static logit and ordered logit models (and for several choices of average effects),
while in this paper we consider  general models of the form
\eqref{ModelDist} (and more general average effects of the form
\eqref{APE}).

Throughout the paper, we consider the case
where  $\beta$ is point-identified. However, our approach can easily be extended to models with partially-identified $\beta$, and we suggest two different extensions in the Supplementary Appendix. 
We however also note that
methods for point-estimation of $\beta$ are
well-established in the literature, and these methods are regularly used by applied researchers. Indeed, for essentially every type  of discrete outcome variable (e.g. binary, count data, ordered choice, multinomial choice, \ldots) there exist appropriate specifications for $f(y_i \, | \,z_i,a_i;\beta)$ that allow point identification and $\sqrt{n}$-consistent estimation of $\beta$ by the conditional likelihood method. In static models, this approach relies on the availability of a sufficient statistic for $A_i$ (conditional on $Z_i$), which is satisfied in exponential-family models.\footnote{To provide a non-exhaustive list of examples, see, e.g., \citet{Rasch(61)}, \citet{Andersen(70)}, \citet{Chamberlain(80)}, \citet{Chamberlain(85)} for binary choice logit;  \citet{lancaster2000incidental}, \citet{BluGriWin2002} for count data Poisson; and \citet{das_panel_1999}, \citet{Baetschmann2015}, \citet{Muris2017} for ordered choice logit models (using binarization).} In dynamic panel models, one can similarly find specifications for $f(y_i \, | \,z_i,a_i;\beta)$ such that estimation of $\beta$ via the generalized method of moments is possible.\footnote{See, for example, \citet{honore2020dynamic}, \citet{kitazawa2021transformations} for dynamic binary choice logit; \citet{BluGriWin2002} for dynamic count data Poisson; and
\citet{honore2021dynamicOrdered} for dynamic ordered choice. \citet{HonoreKyriazidou(00)} and \citet{bartolucci2010dynamic} also consider estimation of $\beta$ in dynamic binary choice panel models.} More generally, the functional differencing method of \cite{bonhomme2012functional} can be viewed as a unifying framework for point-estimation of $\beta$ in both static and dynamic panel models of the form \eqref{ModelDist}.

Notice also that there are interesting models that do not require estimation of any common parameters $\beta$.
A prominent example is the binary choice random coefficient model, which allows for richer forms of heterogeneity than the classical fixed effects specification; see Example~\ref{ex:rclogit} below and our discussion in Section~\ref{sect:setcomparison}.
An alternative approach to such models uses finite discrete mixtures \citep{BC07, BC10, BC14}, for which \citet{BC13} establish identification conditions in terms of the number of time periods and mixture components. Our framework accommodates both continuous and discrete specifications for the distribution of unobserved heterogeneity.

The rest of the paper is organized as follows: 
The main idea of our approach is introduced in Section~\ref{sect:bounds}.
Section~\ref{sect:boundconstruction} presents the general construction of our bounds, including the linear programs used to obtain them.
Section~\ref{sec:illustration} provides further discussion of the bounds, including an illustrative example and comparisons to the sharp identified set.
Section~\ref{sect:betahat} addresses inference when common parameters must be estimated, providing two approaches for constructing asymptotically valid confidence intervals.
Sections~\ref{sect:simulations} and~\ref{sect:empiricalanalysis} present simulation evidence and an empirical application to female labor force participation, respectively. 
Section~\ref{sect:conclusion} concludes. 
Proofs and additional results are provided in the Supplementary Appendix.

\section{Bounds on average effects}\label{sect:bounds}

We observe discrete outcomes $Y_i \in {\cal Y}$, and conditioning variables $Z_i \in {\cal Z}$ for a cross-sectional sample of units $i=1,\ldots,n$. Unobserved heterogeneity is modeled through an unobserved latent variable $A_i \in {\cal A}$. The probability of observing $Y_i = y$ conditional on $Z_i=z$ and $A_i = a$ is given by  $f\left(y\, |\, z,a; \beta_0 \right)$ where $\beta_0 \in  \mathcal{B} \subset \mathbb{R}^{\dim \beta}$ and $f  :  \mathcal{Y} \times  \mathcal{Z}  \times \mathcal{A} \times \mathcal{B}  \rightarrow [0,1]$ is a known function.  The joint distribution of the conditioning variables $Z_i$ and $A_i$ is left unspecified. We focus on  panel data models, where $Y_i = (Y_{i1},\ldots,Y_{iT})$ is a  vector of  outcomes $Y_{it} \in {\cal Y}_t$.
The vector of conditioning variables $Z_i$ can, for example, be equal to $X_i = (X_{i1},...,X_{iT})$ in static models, or to $Z_i=(X_i,Y_{i0})$ in dynamic models where $Y_{i0}$ is the initial condition from time period $t=0$.
We assume throughout that the covariates $X_i$ are strictly exogenous, meaning that $(X_{i1},\ldots,X_{iT})$ is independent of $(\varepsilon_{i1},\ldots,\varepsilon_{iT})$ conditional on $A_i$.\footnote{This rules out predetermined (but not strictly exogenous) covariates such as lagged values of other endogenous variables. However, lagged values of the dependent variable $Y_{it}$ itself can be accommodated, as in our dynamic model examples, because these enter through the conditioning set $Z_i$ rather than as covariates~$X_i$.}
In dynamic models, we assume that the initial condition $Y_{i0}$ is observed.
Our   goal is to provide inference methods on average effects of the form
\begin{align} 
     \overline{m}:=\mathbb{E} \left[ m\left(Z_i,A_i,\beta_0 \right) \right] , 
     \label{AverageEffects}
\end{align}
where $m : \mathcal{Z}  \times \mathcal{A} \times \mathcal{B} \rightarrow \mathbb{R}$ is a known function. 

To focus on the main features and intuition behind our proposed approach, in this section we abstract away from  estimation of  $\beta_0$ and assume that  it  is known. In Section \ref{sect:betahat} we will consider the case where $\beta_0$ is unknown but point-identified. The case of set-identified $\beta$ (along with a simulation analysis for the probit model) is considered in Sections \ref{sec:setidentifiedbeta} and \ref{sec:appendixC} in the Supplementary Appendix.
The random coefficient model in Example \ref{ex:rclogit} below
provides an interesting case where no estimation of $\beta_0$
is necessary, because the model does not feature any such common parameter. In that case, the results in this section are
already fully sufficient for inference on $\overline m$.

While our approach is general enough to accommodate different panel models of interest (including dynamic ones), for illustration purposes we focus on  two running examples.

\begin{example} \label{ex:statlogit} 
In a static binary choice model, outcomes are generated as $Y_{it} = 1\{ X_{it}\beta_0 + A_i \geq \varepsilon_{it} \}$, where $X_{it}$ is a $1\times K$ vector of covariates, ${\rm dim} \, \beta_0 = K$, and $\varepsilon_{it}$ is a logistic or standard normal random variable. Letting $X_{k,it}$ be the $k$-th covariate and $X_{-k,it}$ be a row matrix containing the remaining covariates, typical examples of $m(Z_{i},A_i,\beta_0)$ are
\begin{gather}
    \frac{1}{T} \sum_{t=1}^T \left [ P(Y_{it}=1|X_{k,it}=x_1,X_{-k,it},A_i,\beta_0) -   P(Y_{it}=1|X_{k,it}=x_2,X_{-k,it},A_i,\beta_0) \right ], \label{marg1} 
    \\
    \frac{1}{T} \sum_{t=1}^T \frac{\partial P(Y_{it}=1|X_{it}=x_{it},A_i,\beta_0)}{\partial x_{k,it}}, \label{marg2} 
\end{gather}
for discrete and continuous $X_{k,it}$, respectively, where $x_1, x_2 \in \mathbb{R}$. For binary and multinomial variables, examples are $(x_1=1,x_2=0)$ and $(x_1=x+1,x_2=x)$, for some $x$.
In \eqref{marg2}, $x_{it}$ could be equal to the observed value of $X_{it}$ or its time average, or some other value of interest.
\end{example}

\begin{example}\label{ex:rclogit}
    Our second example is the random coefficient binary choice model, given by $Y_{it} = 1 \{ X_{it} \, A_{2,i} + A_{1,i} \geq \varepsilon_{it} \}$, where $A_{1,i} \in \mathbb{R}$, $A_{2,i}\in \mathbb{R}^{\dim X_{it}}$, and $\varepsilon_{it}$ can have the logistic or standard normal distribution. This allows for richer types of heterogeneity which cannot be captured by the classical fixed effects model (see, for example, \citealt{BC07, BC10, BC14}). For simplicity, we consider the static setting, but our approach remains valid if lagged dependent variables are included as regressors. Defining $A_i=(A_{1,i},A_{2,i})$, examples for $m(Z_i,A_i)$ in this case can be generated analogous to \eqref{marg1} and \eqref{marg2}. We will later consider the case of a single discrete covariate $X_{it} $ and focus on
    \begin{equation}
        \displaystyle \frac{1}{T} \sum_{t=1}^T
        \left[
        P(Y_{it}=1|X_{it}=1,A_{i}) -   P(Y_{it}=1|X_{it}=0,A_{i})
        \right]. \label{margrc}
    \end{equation}
\end{example}

\bigskip
\medskip

Our proposal for inference on $\overline m$ is based on the simple idea that suitable non-random functions $L, U : {\cal Z} \times {\cal Y} \times {\cal B} \rightarrow [b_{\min}, b_{\max}]$ which satisfy,
\begin{align}
       \label{eq:boundcondition}
\sum_{y\in \mathcal{Y}}  L\left( z, y, \beta \right)   f\left(y\, |\, z, a; \beta \right)   \leq \,
  m\left(z,a,\beta \right) 
\,  \leq \,\sum_{y\in \mathcal{Y}}  U\left( z, y, \beta \right)   f\left(y\, |\, z, a; \beta \right)   ,
\end{align}
can be used to obtain asymptotically valid bounds on $\overline{m}$. To see how, notice that when evaluated at $\beta_0$,  the condition in \eqref{eq:boundcondition} is equivalent to
\begin{align*}
    \mathbb{E} \left[L(Z_i,Y_i,\beta_0) \, \big|\, Z_i=z, A_i=a \right] 
    \leq 
    m(z,a,\beta_0) 
    \leq  \mathbb{E} \left[U(Z_i,Y_i,\beta_0) \, \big|\, Z_i=z, A_i=a \right],
\end{align*}
which, by the Law of Iterated Expectations, implies that
\begin{align}
    \mathbb{E}[L(Z_i,Y_i,\beta_0)] \leq \overline{m} \leq \mathbb{E}[U(Z_i,Y_i,\beta_0)].
    \label{eq:popbound}
\end{align}
This suggests that asymptotically valid bounds on $\overline m$ are given by 
\begin{align}
     \widehat L &:= \frac 1 n \sum_{i=1}^{n}  L(Z_i,Y_i, \beta_0 ) ,
     &
     \widehat U &:= \frac 1 n \sum_{i=1}^{n}  U(Z_i,Y_i,  \beta_0 ) .
     \label{DefineBoundEstimatesKNOWN}
\end{align}
To formally show this, we impose the following regularity conditions.

\begin{assumption}~
    \label{ass:MAIN}
       
        \begin{enumerate}[(i)]
            \item $(Y_i, Z_i, A_i)$ are independent and identically distributed across $i=1,\ldots,n$.
     
            \item   
            The conditional distribution of outcomes $Y_i$ satisfies
            \begin{align*}
                 P \left( Y_i=y \, \big| \, Z_i = z, \,A_i=a \right) = f\left(y\, |\, z, a; \beta_0 \right).
            \end{align*}
               
            \item 
            There are known bounds $b_{\min}, b_{\max} \in \mathbb{R}$ such that $b_{\min} \leq  m\left(z,a,\beta  \right) \leq b_{\max}$, 
            for all $z \in {\cal Z}$, $a \in {\cal A}$ and $\beta \in {\cal B}$.
    
        \end{enumerate}
    
\end{assumption} 
Assumption \ref{ass:MAIN}(i) demands cross-sectional sampling.
Assumption \ref{ass:MAIN}(ii) imposes correct specification 
of our parametric model for $Y_i$ conditional on $Z_i$ and $A_i$. This assumption also implies that all covariates contained in $Z_i$ are strictly exogenous (as opposed to pre-determined), as mentioned before.
Assumption~\ref{ass:MAIN}(iii) requires uniform bounds on
the functions $m\left(z,a,\beta  \right)$ that define
the average effect of interest $\overline m$.
This holds for typical choices for $\overline{m}$ such as those in Examples \ref{ex:statlogit} and \ref{ex:rclogit}, and it can easily be confirmed for any given $m(z,a,\beta)$.\footnote{In principle a weaker condition such as $b_{\min} \leq  \mathbb E[m\left(Z_i,A_i,\beta_0  \right) |A_i=a] \leq b_{\max}$ might also be used here, or bounds on
second or higher-order moments of $m\left(Z_i,A_i,\beta_0  \right)$ are also conceivable,  but in all the applications we consider in the paper the original Assumption \ref{ass:MAIN}(iii) holds, and we find it attractive that this assumption can be verified without knowing anything about the data generating process of $Z_i$ and $A_i$.
More generally, Assumption~\ref{ass:MAIN}(iii) could
be replaced by any assumption that guarantees that
${\rm Var}\left[L(Z_i,Y_i, \beta_0 )\right]$, and ${\rm Var}\left[U(Z_i,Y_i, \beta_0 )\right]$ 
are finite in Theorem~\ref{th:ConsistencyKNOWN}.
}
Importantly, we do not put any restriction on the joint distribution of $Z_i$ and $A_i$. In particular, $Z_i$ can be discrete or continuous, $Z_i$ and $A_i$ can be arbitrarily related, and they do not have to be bounded or have bounded moments.

\begin{theorem}
    \label{th:ConsistencyKNOWN}
    Let Assumption \ref{ass:MAIN} hold,
    and let   $L, U : {\cal Z} \times {\cal Y} \times {\cal B} \rightarrow [b_{\min}, b_{\max}]$ satisfy equation \eqref{eq:boundcondition} for $\beta= \beta_0$
    and for all $z \in {\cal Z}$, $a \in {\cal A}$.
    Let  $\overline{m}$, $\widehat L$, $\widehat U$
    be as defined in \eqref{AverageEffects}
    and \eqref{DefineBoundEstimatesKNOWN}.  Then,
    \begin{align*}
        \displaystyle \widehat L +  O_p(n^{-1/2} )  \; \leq \;  \overline{m} \; \leq \; \widehat U +  O_p( n^{-1/2} )
        \qquad
        \text{as }
        n \to \infty.
    \end{align*}
    Furthermore, 
    assume that
    ${\rm Var}\left[L(Z_i,Y_i, \beta_0 )\right]>0$, and ${\rm Var}\left[U(Z_i,Y_i, \beta_0 )\right]>0$,
    and define    
    $\widehat{\sigma}^2_L
    := \frac 1 n \sum_{i=1}^n
    [ L(Z_i,Y_i, \beta_0 ) - \widehat L ]^2$
    and
    $\widehat{\sigma}^2_U
    := \frac 1 n \sum_{i=1}^n
    [ U(Z_i,Y_i, \beta_0 ) - \widehat U ]^2$.
    Then, for $\alpha \in [0,1]$, we have
    \begin{align*}
            \lim_{n\rightarrow \infty }P\left(
        \widehat{L}- \frac{ c_{\alpha/2} \, \widehat{\sigma }_{L}} {\sqrt{n}} \leq \overline m \leq   \widehat{U}+
        \frac{ c_{\alpha/2}\,
        \widehat{\sigma }_{U}} {\sqrt{n}}\right) \geq 1-\alpha ,    
        \quad
        \text{where } \; \;
        \displaystyle c_{\alpha/2}=\Phi^{-1}\left(1- \frac \alpha 2 \right).
        \end{align*}
\end{theorem}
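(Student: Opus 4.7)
My plan is to reduce both claims to standard i.i.d.\ arguments applied to the bounded random variables $L(Z_i,Y_i,\beta_0)$ and $U(Z_i,Y_i,\beta_0)$, using the population inequality $\mu_L := \mathbb{E}[L(Z_i,Y_i,\beta_0)] \leq \overline m \leq \mathbb{E}[U(Z_i,Y_i,\beta_0)] =: \mu_U$ that has already been derived in \eqref{eq:popbound} via the law of iterated expectations. A preliminary observation is that Assumption~\ref{ass:MAIN}(iii), together with the stated codomain $[b_{\min}, b_{\max}]$ for $L$ and $U$, ensures that these random variables are uniformly bounded, and hence have finite means and variances under the i.i.d.\ sampling of Assumption~\ref{ass:MAIN}(i).

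For the first claim, I would simply note that by the central limit theorem (or even Chebyshev's inequality) one has $\widehat L - \mu_L = O_p(n^{-1/2})$ and $\widehat U - \mu_U = O_p(n^{-1/2})$. Combining with $\mu_L \leq \overline m \leq \mu_U$ then gives $\widehat L + O_p(n^{-1/2}) \leq \overline m \leq \widehat U + O_p(n^{-1/2})$, which is the desired asymptotic sandwich.

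For the coverage statement, the i.i.d.\ CLT gives that $\sqrt n (\widehat L - \mu_L)/\sigma_L$ and $\sqrt n (\widehat U - \mu_U)/\sigma_U$ each converge in distribution to a standard normal, where $\sigma_L^2, \sigma_U^2 > 0$ by assumption. The variance estimators $\widehat\sigma_L^2$ and $\widehat\sigma_U^2$ are sample variances of bounded random variables, and a routine LLN argument (applied separately to $\frac 1 n \sum L(Z_i,Y_i,\beta_0)^2$ and $\widehat L^2$, and analogously for $U$) gives $\widehat\sigma_L^2 \to \sigma_L^2$ and $\widehat\sigma_U^2 \to \sigma_U^2$ in probability, so by Slutsky the studentized statistics also converge to the standard normal. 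Consequently each of the one-sided events $\{\widehat L - c_{\alpha/2}\widehat\sigma_L/\sqrt n > \mu_L\}$ and $\{\widehat U + c_{\alpha/2}\widehat\sigma_U/\sqrt n < \mu_U\}$ has asymptotic probability $\alpha/2$. Since $\mu_L \leq \overline m \leq \mu_U$, the event that the stated confidence interval fails to cover $\overline m$ is contained in the union of these two one-sided events, so a Bonferroni bound yields asymptotic coverage of at least $1-\alpha$.

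There is no substantive obstacle here; the argument is essentially the i.i.d.\ CLT plus a union bound, layered on top of the already-established population sandwich. The only point worth flagging explicitly is the one-sided use of each normal limit together with Bonferroni, which explains why the critical value is $c_{\alpha/2}$ rather than $c_\alpha$: the total coverage-error budget $\alpha$ is split evenly between the two sides of the interval. Everything else is routine, and uniform boundedness of $L$ and $U$ removes the need for any moment conditions beyond what Assumption~\ref{ass:MAIN}(iii) already provides.
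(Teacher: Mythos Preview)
Your proposal is correct and follows essentially the same route as the paper's proof: Chebyshev/CLT for the $O_p(n^{-1/2})$ rate on the bounded i.i.d.\ averages, the population sandwich \eqref{eq:popbound}, and then CLT plus Slutsky on the studentized statistics combined with a Bonferroni bound on the two one-sided failure events. The paper writes out the Chebyshev step and the chain of set inclusions in \eqref{LimitR} more explicitly, but the logic is identical to yours.
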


\section{Construction of the bounds}
\label{sect:boundconstruction}

We now introduce
our general construction of the bound
functions $L(z,y,\beta )$
and $U(z,y,\beta )$. 
To concentrate solely on bound construction, in this section we still consider the case with known $\beta_0$. A full theory with estimated $\beta_0$
is provided in Section~\ref{sect:betahat}. In terms of implementation, the construction methods remain the same for given $\beta$, independent of whether it is $\beta_0$ or its estimate.

In obtaining asymptotically valid bounds, the key requirement on the functions $L(z,y,\beta )$
and $U(z,y,\beta )$ is that they satisfy \eqref{eq:boundcondition} and that they are bounded. Of course, one wants the estimated bounds on $\overline m$ to be informative, in the sense that the interval in \eqref{eq:boundcondition} is as narrow as possible. At the same time, importantly, for given $z$ and $\beta$, $L(z,y,\beta )$
and $U(z,y,\beta )$ have to be chosen such that \eqref{eq:boundcondition} holds for all $a\in \mathcal A$. This can be reformulated as a standard optimization problem. Namely, for any given $z \in {\cal Z}$ and $\beta \in {\cal B}$ we can choose $L(z,y,\beta)=\ell(y)$ and $U(z,y,\beta)=u(y)$
as solutions to the following optimization problem with some appropriate objective function $Q(\ell(\cdot),u(\cdot),z,\beta)$,\footnote{
The solutions $L(z,y,\beta)=\ell(y)$ and $U(z,y,\beta)=u(y)$ may not be unique. But in a practical
implementation some concrete solution
will still be obtained by the specific linear solver used for implementation, 
and Theorem~\ref{th:ConsistencyKNOWN} is
still valid, since it only depends on
the constraints being satisfied.
}
\begin{align}
\min_{\ell,u \, : \,  \mathcal{Y} \rightarrow \mathbb{R} }\,  &Q(\ell(\cdot),u(\cdot),z,\beta)
\notag \\
  \text{subject} & \text{ to} \label{eq:GeneralOptimization}  \\
&   \forall y\in \mathcal{Y}:\, \,b_{\min }\leq \ell(y)\leq 
u(y)\leq b_{\max }  \notag \\
\quad \text{and}\quad & \forall a\in \mathcal{A}:\, \, \sum_{y\in \mathcal{Y}}\ell (y) \, f(y \, | \, z, a;\beta )
 \leq m(z,a,\beta )\leq \sum_{y\in \mathcal{Y}}u(y) f(y \, | \, z, a;\beta ).    \notag
\end{align}
In the current setting where we assume that $\beta_0$ is known, \eqref{eq:GeneralOptimization} will be solved at $\beta=\beta_0$. When $\beta_0$ is estimated, \eqref{eq:GeneralOptimization} will be solved at some estimate $\beta=\widehat{\beta}$. When no common parameter is estimated (as in Example \ref{ex:rclogit}), the objective function and the constraints will be free of $\beta$.

The restrictions of the program \eqref{eq:GeneralOptimization} guarantee the conditions of Theorem \ref{th:ConsistencyKNOWN} and also impose that $\ell(y) \leq u(y)$. Consequently, any choice of the objective function $Q(\ell(\cdot),u(\cdot),z,\beta)$ yields valid bounds with $\widehat L \leq \widehat U$. 
It is important to stress that in order to construct the bounds $\widehat L$ and $\widehat U$
we only need to solve the 
program in \eqref{eq:GeneralOptimization}  once for every $i\in \{1,...,n\}$ at $z=Z_i$.   Contrary to the sharp identified set, construction of our bounds does not involve conditional choice probabilities, and therefore remains free of the curse of dimensionality.

Display \eqref{eq:GeneralOptimization} states our approach to obtaining bounds in its most general form, in the sense that the econometrician can choose any objective function $Q(\ell(\cdot),u(\cdot),z,\beta)$ that she sees fit. It is computationally attractive to consider objective functions which turn the optimization problem into a linear program, and we now discuss two intuitive 
choices of objective functions that 
are indeed linear in $\ell(\cdot)$
and $u(\cdot)$.\footnote{
In particular cases, it might be possible to
obtain analytic expressions for the bound functions.
But for the class of semi-parametric panel models and average effects introduced
in Section~\ref{sect:bounds},
 it is unlikely  that analytic expressions for the bounds can be obtained in general.
The distinction between a numerical method
and analytic expressions for the bound functions
 is analogous to the distinction between the functional differencing method in \cite{bonhomme2012functional}
and the analytical moment functions 
in \cite{honore2020dynamic} for the purpose of inference
on $\beta$. 
}

\subsection{Baseline linear program} \label{sec:baselinelp}
A linear program can be implemented by using the objective function
\begin{align}
  Q(\ell(\cdot),u(\cdot),z,\beta) = \int_{\mathcal{A}} \,   \sum_{y\in \mathcal{Y}}\left[ u(y)-\ell(y) \right] 
\,f(y \, | \, z,a;\beta )\,p(a|z)\,da ,  \label{eq:linprog1}
\end{align}
where $p(a|z)$ is some (potentially non-proper) ``prior distribution''. 
In the absence of any additional information on $a$, such as the case considered in this paper, one can simply use $p(a|z)=1$. Indeed, throughout all the applications based on this baseline linear program, we use the flat prior $p(a|z)=1$.
We note, however, that our bounds are valid for
any choice of ``prior''. When available, any extra information on the distribution of unobserved heterogeneity can be incorporated into the choice of $p(a|z)$, but we do not pursue this here.

\subsection{Uniform linear program}\label{sec:uniflp}
If we are unwilling to specify a prior $p(a|z)$, then we can choose 
the objective function
\begin{align}
  Q(\ell(\cdot),u(\cdot),z,\beta)
  =
  \max_{a\in \mathcal{A}} \left[ \sum_{y\in \mathcal{Y}}\left[ u(y)-\ell(y) \right] \,f(y \, | \, z,a;\beta ) \right] ,  
  \label{eq:linprog2}
\end{align}
where, instead of integrating over $a\in \mathcal{A}$ with a 
prior distribution, we choose the worst-case value of 
$a\in \mathcal{A}$ that maximizes the 
expected bounds 
$\sum_{y\in \mathcal{Y}}\left[ u(y)-\ell(y) \right] \,f(y \, | \, z,a;\beta )$. Hence, we call the ensuing approach the \textit{uniform linear program}.
To be precise, this objective function cannot be used directly to yield a linear program since it is not linear in 
$u(y)$ and $\ell(y) $; however, an equivalent representation of this problem as a linear program is obtained as follows:
\begin{align}
\min_{\left\{s \in \mathbb{R} , \; \ell,u \, : \,  \mathcal{Y} \rightarrow \mathbb{R} \right\}}\,  s
\notag \\
  \text{subject to } \quad & 
  \label{eq:UnifOptimization}  
  \\
 &   \forall y\in \mathcal{Y}:\, \,b_{\min }\leq \ell(y)\leq 
u(y)\leq b_{\max } 
\notag \\
  &  \forall a\in \mathcal{A}:\, \, 
 \sum_{y\in \mathcal{Y}}\left[ u(y)-\ell(y) \right] \,f(y \, | \, z,a;\beta ) \leq s
\notag \\  
\quad \text{and}\quad & \forall a\in \mathcal{A}:\, \, \sum_{y\in \mathcal{Y}}\ell (y) \, f(y \, | \, z, a;\beta )
 \leq m(z,a,\beta )\leq \sum_{y\in \mathcal{Y}}u(y) f(y \, | \, z, a;\beta ).    \notag
\end{align}
In this linear program, the variable set is extended by
$s \in \mathbb{R}$. 
When profiling out $s \in \mathbb{R}$ from this program
one finds that for given $\ell,u \, : \,  \mathcal{Y} \rightarrow \mathbb{R}$ the optimal $s$ is given by
\begin{align}
    s = \max_{a\in \mathcal{A}} \left[ \sum_{y\in \mathcal{Y}}\left[ u(y)-\ell(y) \right] \,f(y \, | \, z,a;\beta ) \right] ,
    \label{eq:s}
\end{align}
which is identical to the objective function in \eqref{eq:linprog2}. Thus,
solving the linear program
in \eqref{eq:UnifOptimization} gives the desired bound functions
$L(z,y,\beta)=\ell(y)$ and $U(z,y,\beta)=u(y)$ that correspond
to choosing the objective function 
\eqref{eq:linprog2}
in our general 
program \eqref{eq:GeneralOptimization}.

\subsection{Practical recommendations} \label{sec:practicalrec}
Both the baseline and the uniform linear programs are valid options and will yield valid outer bounds. 
The uniform program focuses on the worst-case value of $a\in \mathcal A$ and is therefore likely to yield (slightly) wider outer bounds compared to the baseline linear program.\footnote{In preliminary analysis available upon request, we have considered both linear programs for the four models analyzed in the simulation study of Section \ref{sect:setcomparison}. Our results reveal for the static logit, dynamic logit and the random coefficient static logit models that while the  uniform linear program led to some widening of the bounds, the change was not substantial. The only significant change was observed for the random coefficient dynamic logit model.}
Nevertheless, our general recommendation is to use the uniform linear program, as it does not require specifying a prior $p(a|z)$ and therefore requires less input from the researcher.

In practice, implementation of the linear programs \eqref{eq:GeneralOptimization} and \eqref{eq:UnifOptimization} requires choosing a grid $\mathcal{A}_g \subset \mathcal{A}$ to approximate the constraints. For logit-based models, computational efficiency can be improved by rewriting the constraints in terms of sufficient statistics. These and other implementational details are discussed in Section~\ref{sec:compdetails} of the Supplementary Appendix.

\section{Further discussion of the bounds}\label{sec:illustration}

\subsection{An illustrative example}\label{sect:iexample}

The following example simply corresponds to the
nonparametric bounds in \citet{CFHN13}.
It is therefore not representative
 of  how we obtain the bounds in this 
paper in general, but we still find 
the example instructive, 
since it provides analytical expressions
for bounds   satisfying 
\eqref{eq:boundcondition}.

 We consider the static binary choice model of Example~\ref{ex:statlogit} for the case where $X_{it}\in\{0,1\}$ is the only covariate and
the error term $\varepsilon_{it}$ is stationary over time $t$. 
 The average effect  is given
by~\eqref{marg1} with $x_1=1$ and $x_2=0$, that is,
\begin{align}
    m(A_i,\beta_0) &= \frac{1}{T} \sum_{t=1}^T \left[ P\left(Y_{it}=1 \, \big| \, X_{it}=1,A_i,\beta_0\right) -  P\left(Y_{it}=1 \, \big| \,X_{it}=0,A_i,\beta_0\right) \right]
 \nonumber   \\
    &= \mathbb{E}\left[Y_{it} \, \big| \, X_{it}=1,A_i,\beta_0\right] 
    -  \mathbb{E}\left[Y_{it} \, \big| \,X_{it}=0,A_i,\beta_0\right],
    \label{AvTreatmentEffect}
\end{align}
where the time averaging is not needed due to stationarity.\footnote{In this case, $Z_i=X_i=(X_{i1},\ldots,X_{iT})$. Notice that, contrary to the general case, here $m(A_i,\beta_0)$ does not depend on $X_i$.
This is because 
the average effect is calculated with respect to  specific  values of $X_{it}$ and there are no other covariates.} 
For $d \in \{0,1\}$, let
\begin{align*}
   v(X_i,d) := \left\{
   \begin{array}{ll}
      0 & \text{if $X_{it}=1-d$ for all $t \in \{1,\ldots,T\}$,}
      \\
      1 & \text{if $X_{it}=d$ for some $t \in \{1,\ldots,T\}$.}
   \end{array}
  \right.
\end{align*}
For $v(X_i,d)=1$ we define $\overline{Y}(Y_i,X_i,d) := \sum_{t \in {\cal T}(X_i,d)} Y_{it} / \left| {\cal T}(X_i,d) \right|$ to be the average of 
$Y_{it}$ over those time periods ${\cal T}(X_i,d)=\{ t \, : \, X_{it}=d\}$ where $X_{it}$ equals $d$. For $v(X_i,d)=0$ we simply let $\overline{Y}(Y_i,X_i,d) :=0$.\footnote{
Essentially, $\overline{Y}(Y_i,X_i,d)$ can be defined as any real number, given that its contribution to the bounds will be equal to zero whenever $v(X_i,d)=0$.}
Valid outer bound functions are then given by
\begin{align}
    L(X_i,Y_i)
    &=
    v(X_i,1) \, \overline{Y}(X_i,Y_i,1) 
    - v(X_i,0) \, \overline{Y}(X_i,Y_i,0)  
    -  [1-v(X_i,0)] ,
    \label{AnalyticalBounds0} \\
    U(X_i,Y_i)
    &=
    v(X_i,1) \, \overline{Y}(X_i,Y_i,1) 
    - v(X_i,0) \, \overline{Y}(X_i,Y_i,0)  
    +  [1-v(X_i,1)] .
    \label{AnalyticalBounds}
\end{align}
The stationarity  assumption then guarantees that
\begin{align}
   \mathbb{E} \left[L(X_i,Y_i) \, \big|\, X_i, A_i \right] 
    \leq
    m(A_i,\beta_0) 
    \leq
    \mathbb{E} \left[U(X_i,Y_i) \, \big|\, X_i, A_i \right],
    \label{eq:weakbound}
\end{align}
which is exactly the condition
\eqref{eq:boundcondition} that our bound functions are supposed to satisfy.\footnote{
Due to stationarity we have
\begin{align*}
   & \mathbb{E} \left[L(X_i,Y_i) \, \big|\, X_i, \, v(X_i,0)=v(X_i,1)=1, \, A_i \right] 
   =
    m(A_i,\beta_0) 
    =  \mathbb{E} \left[U(X_i,Y_i) \, \big|\, X_i,\, v(X_i,0)=v(X_i,1)=1,\,A_i \right],
\end{align*}
while for $v(X_i,d)=0$,  the above
bounds $L(X_i,Y_i)$ and $U(X_i,Y_i)$ simply revert
to the appropriate worst-case bounds (zero or one) that are possible
for the unidentified  expectations.}

Again, we want to point out that this
 example is not  characteristic  of   our bounds more generally.
In particular, here $L(X_i,Y_i)$ and
$U(X_i,Y_i)$ do not depend on $\beta_0$, and
neither the single-index
structure  $X_{it}\beta_0 + A_i + \varepsilon_{it}$
nor the parametric assumption on the
error distribution
are utilized to show validity of the bounds --- the bounds here
are valid for  any model $Y_{it}=g(X_{it},A_i,\varepsilon_{it})$,
as long as the function $g(\cdot,\cdot,\cdot)$ is constant over $t$, and 
the conditional distribution of the shocks $\varepsilon_{it}$ is stationary over $t$.

From the corresponding 
  discussion
in \citet{CFHN13}
we also know that, as $T \rightarrow \infty$, the width of
 these bounds, $ \mathbb{E}[U(X_i,Y_i) - L(X_i,Y_i)]$, shrinks proportionally  to 
 the probability of $X_{it}$ being constant over $t$.
 Under appropriate distributional assumptions on $X_{it}$ (e.g.\ $X_{it}$ independent across $t$ and random), this implies that  the width of
 the bounds shrinks exponentially fast in $T$.
While we do not explore large-$T$ analysis here, we suspect that similar results hold more generally for the bounds in this paper.

\subsection{Comparison to the identified set}
\label{subsec:Comparision1}

The key difference between our outer bounds, $ \mathbb{E} U(X_i,Y_i)$ 
and $\mathbb{E} L(X_i,Y_i)$,
and the identified set for 
$\overline m$ is how they depend on the conditional choice probabilities, $P(Y_i|X_i)$.
In particular, while our outer bounds are linear functions of choice probabilities, the upper and lower
 boundaries  of the identified set are complicated nonlinear functions
of $P(Y_i|X_i)$. The goal of this subsection is to briefly explain this difference
and its consequences for inference
on the average effects.

For simplicity, we stick to the static binary choice example with a single binary
covariate discussed in the last subsection,
and we assume that $\varepsilon_{it}$
has standard logistic distribution.
Let $f(y|x,a;\beta_0)$ be the corresponding conditional distribution of $Y_i|X_i,A_i$.
As long as we have some variation
on the covariates across time, 
  $\beta_0$ is point-identified in this model (see e.g.\ \citealt{Chamberlain(85),Chamberlain(10)}). 

For $x \in \{0,1\}^T$, let
$p(x):=  \big[P(Y_i=y \, |\, X_i=x) \,: \, y \in \{0,1\}^T \big] $ be the $2^{T}$-vector of  choice probabilities conditional on $X_i=x$, and 
define
$\overline m(x) := \mathbb{E}\left[ m(A_i,\beta_0) \, \big| \, X_i=x \right]$.
Next, let $\Pi(x,p(x))$ be the set of conditional distributions $A_i|X_i$ that are compatible with the choice probabilities $p(x)$: that is, we have
$\pi(\cdot|x) \in \Pi(x,p(x))$ if and only if
$P(Y_i=y \, |\, X_i=x)=\int_{\mathbb{R}} f(y|x,a;\beta_0) \pi(a|x) da$.
Since $\beta_0$ and $p(x)$ are point-identified, the only ambiguity in the 
identification of $\overline m(x) $ is due to the unknown distribution of $A_i|X_i$. Then, defining
\begin{align*}
    L_{\rm id}(x,p(x)) &:= \inf_{\pi(\cdot|x) \in \Pi(x,p(x))}
    \int_{\mathbb{R}} m(a,\beta_0) \, \pi(a|x) da ,
  \\
   U_{\rm id}(x,p(x)) &:= \sup_{\pi(\cdot|x) \in \Pi(x,p(x))}
    \int_{\mathbb{R}} m(a,\beta_0) \, \pi(a|x) da ,
\end{align*}
the identified set for $\overline m = \mathbb{E}\left[ \overline m(X_i) \right]$ is given by
$\big[ \mathbb{E} L_{\rm id}(X_i,p(X_i)), \, \mathbb{E} U_{\rm id}(X_i,p(X_i))\big]$. All this is of course well-known.  What we want to highlight here is that the above construction
inevitably yields a complicated nonlinear dependence of the boundaries of the identified set on the observable choice probabilities $p(x)$ through $\Pi(x,p(x))$. In contrast, our bounds
\begin{align*}
   \mathbb{E} \, L(x,Y_i)
   &= \sum_{y \in \{0,1\}^T} L(x,y) \, P(Y_i=y|X_i=x) ,
   \\
   \mathbb{E} \, U(x,Y_i)
   &= \sum_{y \in \{0,1\}^T} U(x,y) \, P(Y_i=y|X_i=x) ,
\end{align*}
are by construction linear functions of the vector of conditional choice probabilities $p(x)$.

This distinction between non-linearity (for the identified set)
vs linearity (for our outer bounds) in $p(x)$ has a fundamental effect on inference: the sample analogs of our bounds,
 $\frac 1 n \sum_{i=1}^n L(X_i,Y_i)$ and
$\frac 1 n \sum_{i=1}^n U(X_i,Y_i)$, avoid estimating $p(x)$ naturally. In contrast, we are not aware of any inference
procedure on the sharp identified set that would avoid consistent estimation of $p(x)$.\footnote{
  \cite{DDL21}  present two different inference
  procedures for average effects in static panel logit models, one that relies on consistent estimation of $p(x)$, and one that does not.
  In the latter case, they also obtain 
   certain outer bounds
  on the identified set that are different from our proposal. In Section \ref{sec:ddlvspw} in the Supplementary Appendix, we consider a brief comparison between their outer bounds and the ones proposed in this paper.
}
Especially when $p(x)$ is hard to estimate, the nonlinear dependence of the identified set on $p(x)$ can cause significant issues in inference. Hence, as already mentioned in the introduction, 
reliable inference on the identified set is problematic
unless the sample size $n$ is much larger than the number of possible
values for $(X_i,Y_i)$. Our outer bounds are by design immune to this.

\begin{figure}[tb]
    \centering
    \begin{minipage}{0.5\textwidth}
        \centering
        \includegraphics[width=1\textwidth]{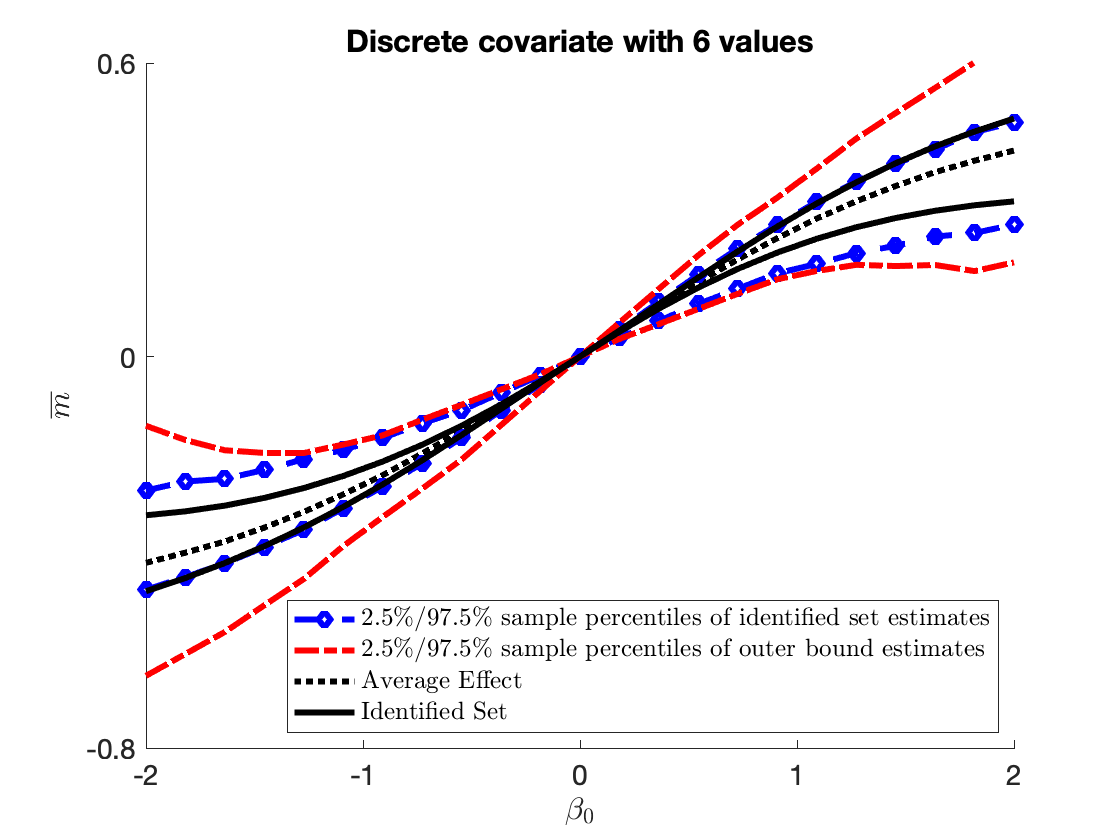}
    \end{minipage}\hfill
    \begin{minipage}{0.5\textwidth}
        \centering
        \includegraphics[width=1\textwidth]{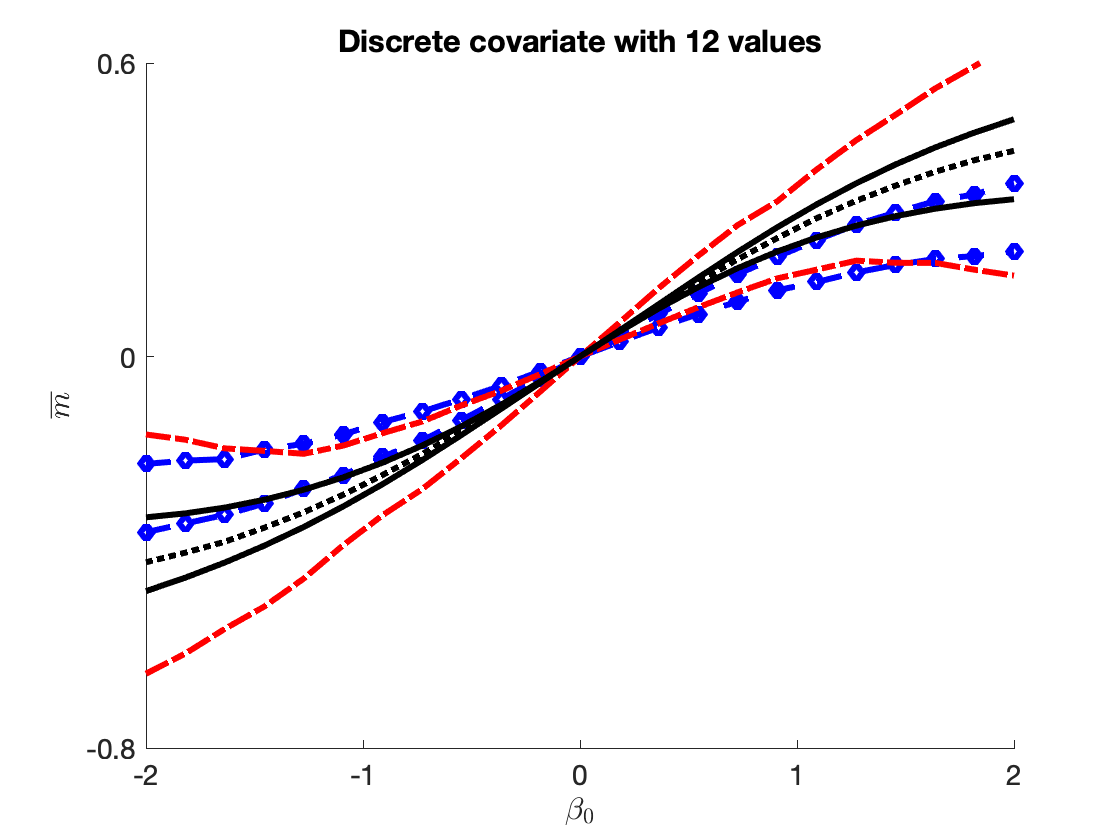}
    \end{minipage}
    \caption{Sample quantiles of estimates of the outer bounds and the identified set. The DGP is $Y_{it} = 1\{ X_{it}\beta + A_{i} \geq \varepsilon_{it}\}$ where 
    $A_{i} \sim N( T^{-1} \sum_{t=1}^T X_{it} - 1/2, 1 )$, $X_{it} = x_{it}/(|\mathcal X|-1)$, $x_{it}$ is discrete uniform with support $[0,|\mathcal X| -1]$, and $\varepsilon_{it}\sim \mathrm{Logit}(0,1)$. The average effect under consideration is $\mathbb{E}[Y_{it} \, \big| \, X_{it}=1,A_i,\beta_0] 
    -  \mathbb{E}[Y_{it} \, \big| \,X_{it}=0,A_i,\beta_0]$. For each $\beta_0 \in [-2,2]$, the quantiles are calculated across 1000 replications of panels with $T=2$ and $n=200$. The left panel contains the results for $|\mathcal X| = 6$ whereas the results for $|\mathcal X| =12$ are presented in the right panel.}
    \label{fig:comb1}
\end{figure}

To illustrate the points made here, we consider a brief simulation exercise. Let
\begin{gather*}
    Y_{it} = 1\{ X_{it}\beta + A_{i} \geq \varepsilon_{it}\},
    \quad
    A_{i} \sim N\left( \frac{1}{T} \sum_{t=1}^T X_{it} - \frac{1}{2}, 1 \right),
    \quad
    X_{it} = x_{it}/(|\mathcal X|-1),
\end{gather*}
where $\varepsilon_{it}\sim \mathrm{Logit}(0,1)$ and $x_{it}$ is discrete uniform with support $[0,\mathcal X -1]$.
Then, $X_{it}$ can take on one of $|\mathcal X|$ equidistant values between 0 and 1.
We consider $|\mathcal X|\in\{6,12\}$. The analysis for either case is based on 1000 replications of panels with $T=2$, $n=200$. The average effect of interest is as in \eqref{AvTreatmentEffect}. For each replication, we obtain the estimated sharp identified set
and our outer bounds based on the
construction in Section~\ref{sect:boundconstruction}.\footnote{
The outer bounds presented here, based on
the construction in
Section~\ref{sect:boundconstruction},
provide much narrower bounds than the simple analytical expressions in \eqref{AnalyticalBounds0}-\eqref{AnalyticalBounds}. This is not surprising, given that our bounds utilize stronger model assumptions.}
Then we report the 2.5\% and 97.5\% sample quantiles of these quantities across all replications.\footnote{More precisely, the reported $2.5\%$ sample percentile for the estimated identified sets corresponds to the $2.5\%$ sample percentile of the estimated lower bounds (of identified sets) across all replications. Similarly, the reported $97.5\%$ sample percentile corresponds to the $97.5\%$ sample percentile of the estimated upper bounds (of identified sets) across all replications.
The sample percentiles for the outer bounds are obtained analogously. See also Section \ref{sec:archident} in the Supplementary Appendix for more information on the calculation of the identified set results.}
Doing so enables us to compare the limits of the estimated confidence intervals, without estimating the confidence bands directly. Results are presented in Figure \ref{fig:comb1}. When $|\mathcal X | = 6$, the lower and upper $2.5\%$ percentiles of the estimated bounds  of the identified set  provide valid coverage. However, when $|\mathcal X|$ increases to 12, the same percentiles  fail to   include the average effect itself almost all the time. This reflects an underlying bias in the estimation of the sharp identified set. The outer bounds are immune to this issue, and still provide valid coverage. This example illustrates that, although the outer bounds are not sharp, they can be more reliable in inference compared to estimators of the sharp identified set itself. The results suggest, as expected, that issues arise as the cardinality of the support of the covariate increases. Therefore, the case with continuous $X_{it}$ will be subject to more pronounced issues.

\subsection{Model-specific comparisons}
\label{sect:setcomparison}

\begin{figure}
	\begin{subfigure}{.5\textwidth}
		\centering
		\includegraphics[width=0.9\linewidth]{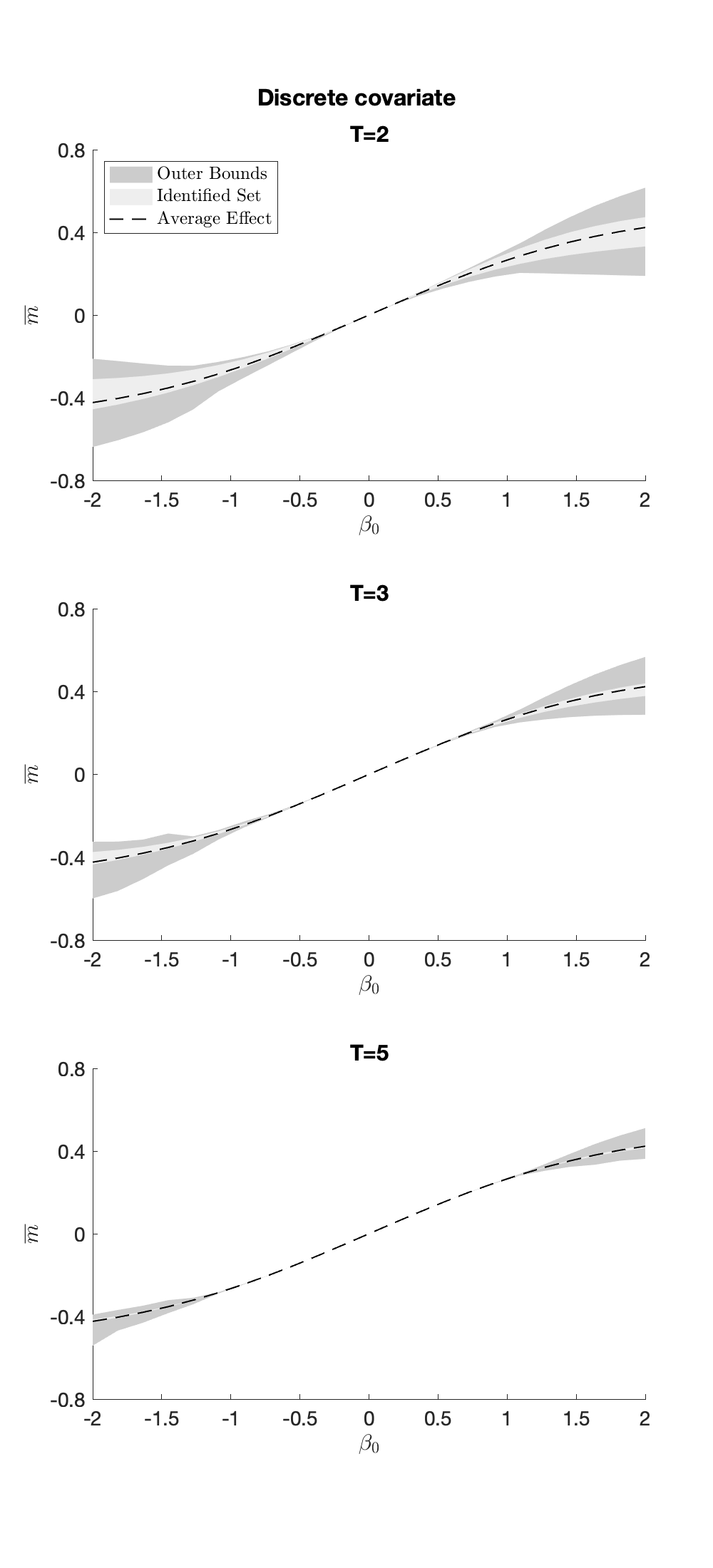}
	\end{subfigure}
	\begin{subfigure}{.5\textwidth}
		\centering
		\includegraphics[width=0.9\linewidth]{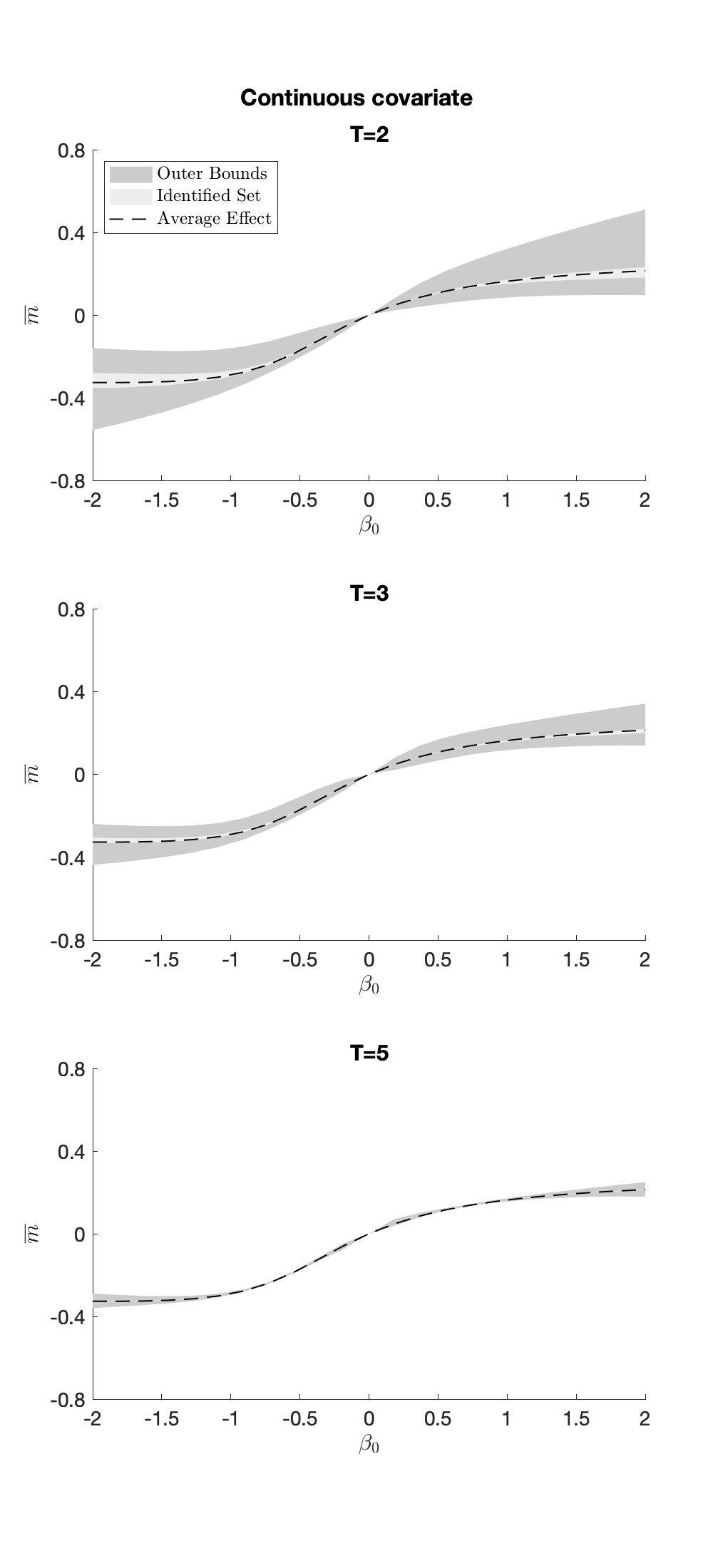}
	\end{subfigure}
	\caption{Comparison of the outer bounds  and the identified set for the static logit model $Y_{it} = 1\left\{ X_{it}\beta + A_i \geq \varepsilon_{it} \right\}$, where $\varepsilon_{it} \sim {\rm Logit} (0,1)$ and $A_i \sim N(0,1)$. Results for each $\beta_0\in[-2,2]$ are based on 1000 replications of panels with cross-section size $n=1000$. Reported outer bounds   are the cross-replication averages. Left panel: single discrete covariate $X_{it} = 1\left\{ A_i \geq \eta_{it} \right\}$ where $\eta_{it} \sim N(0,1)$. The average effect of interest is based on \eqref{marg1} with $(x_1,x_2)=(1,0)$. Right panel: single continuous covariate, $X_{it} \sim N(A_i,1)$. The average effect of interest is based on \eqref{marg2}.}
	\label{fig:idset_logit}
\end{figure}

We now extend this comparison to the population level for several specific logit-based binary choice models: the static logit and random coefficient logit models, and dynamic variants thereof. In all cases (except for the random coefficient dynamic logit model) we use the linear program in \eqref{eq:UnifOptimization}. Here we compare our outer bounds to the population sharp identified set, estimation of which is challenging whenever the support of the conditioning variables $(Z_1,\ldots,Z_T)$ is not small relative to the sample size.\footnote{Similar to \citet{CFHN13} we obtain the sharp identified set by solving an appropriate linear program. See Section \ref{sec:archident} in the Supplementary Appendix for a more detailed discussion.} Our simulation results below present the identified sets and our outer bounds. In Section \ref{sec:addsimdetails} in the Supplementary Appendix, we also provide results concerning the widths of the outer bounds and the identified sets.

For static logit we consider both the discrete and continuous covariate cases, with the data generating processes (DGPs) given by
\begin{align}
	Y_{it} = 1\left\{ X_{it}\beta + A_i \geq \varepsilon_{it} \right\}, \quad A_i \sim N(0,1),  \quad X_{it} = 1\left\{ A_i \geq \eta_{it} \right\}, \quad \eta_{it} \sim N(0,1), \label{eq:idlogitdisc}
\end{align}
and
\begin{align}
	Y_{it} = 1\left\{ X_{it} \beta + A_i \geq \varepsilon_{it} \right\}, \quad A_i \sim N(0,1),  \quad X_{it} \sim N(A_i,1), \label{eq:idlogitcont}
\end{align}
respectively. In both cases, $\varepsilon_{it} \sim {\rm Logit}(0,1)$. For the discrete covariate case we consider the average effect based on \eqref{marg1} with $(x_1,x_2)=(1,0)$. The analysis for the continuous covariate case focuses on the average effect based on \eqref{marg2}. To focus solely on the difference between the bounds and the identified set, we set $\beta=\beta_0$ (a simulation analysis for obtaining bounds when $\beta$ is estimated will be provided in Section \ref{sect:simulations}).

Results are presented in Figure \ref{fig:idset_logit}, where 
  the reported outer bounds are the averages of the estimated bounds across 1000 replications of panels with $n=1000$.\footnote{
   Since we are averaging over a large
   number of replications, the bounds reported
   in Figure \ref{fig:idset_logit} are essentially
   equal to the population outer bounds 
   $ \mathbb{E}[L(Z_i,Y_i,\beta_0)]$
   and $ \mathbb{E}[U(Z_i,Y_i,\beta_0)]$, which
   justifies the comparison to the identified set.  The same comment applies to the comparisons made in Figures \ref{fig:idset_rc}-\ref{fig:idset_rcd}.
  }
 The identified set and   the outer bounds  are obtained for $\beta_0 \in [-2,2]$. The support of $A_i$ is approximated by a grid of 100 equidistant points between $-5$ and 5. Several observations are in order.  First, in all cases, 
 the outer   bounds mimic the behavior of the identified set. In particular, both the identified set and our bounds shrink to a point when $\beta_0=0$ but become wider as $|\beta_0|$ increases. At $T=5$ both the bounds and the identified set become almost a point for the majority of $\beta_0$ we consider. Also, both types of bounds yield the correct sign for the average effect. Second, the difference between the identified set and the outer  bounds 
 vanishes almost completely at $T=5$. This is an important result: as mentioned previously, obtaining the identified set in applications with moderate $T$ is practically infeasible due to the large number of conditional probabilities  $P(Y=y|Z=z)$ one has to estimate, even when $Z$ is discrete. Our results show that the method proposed here stands out as a viable and computationally feasible alternative in such cases. 

\begin{figure}[tb!]
	\centering
	\includegraphics[width=0.8\linewidth]{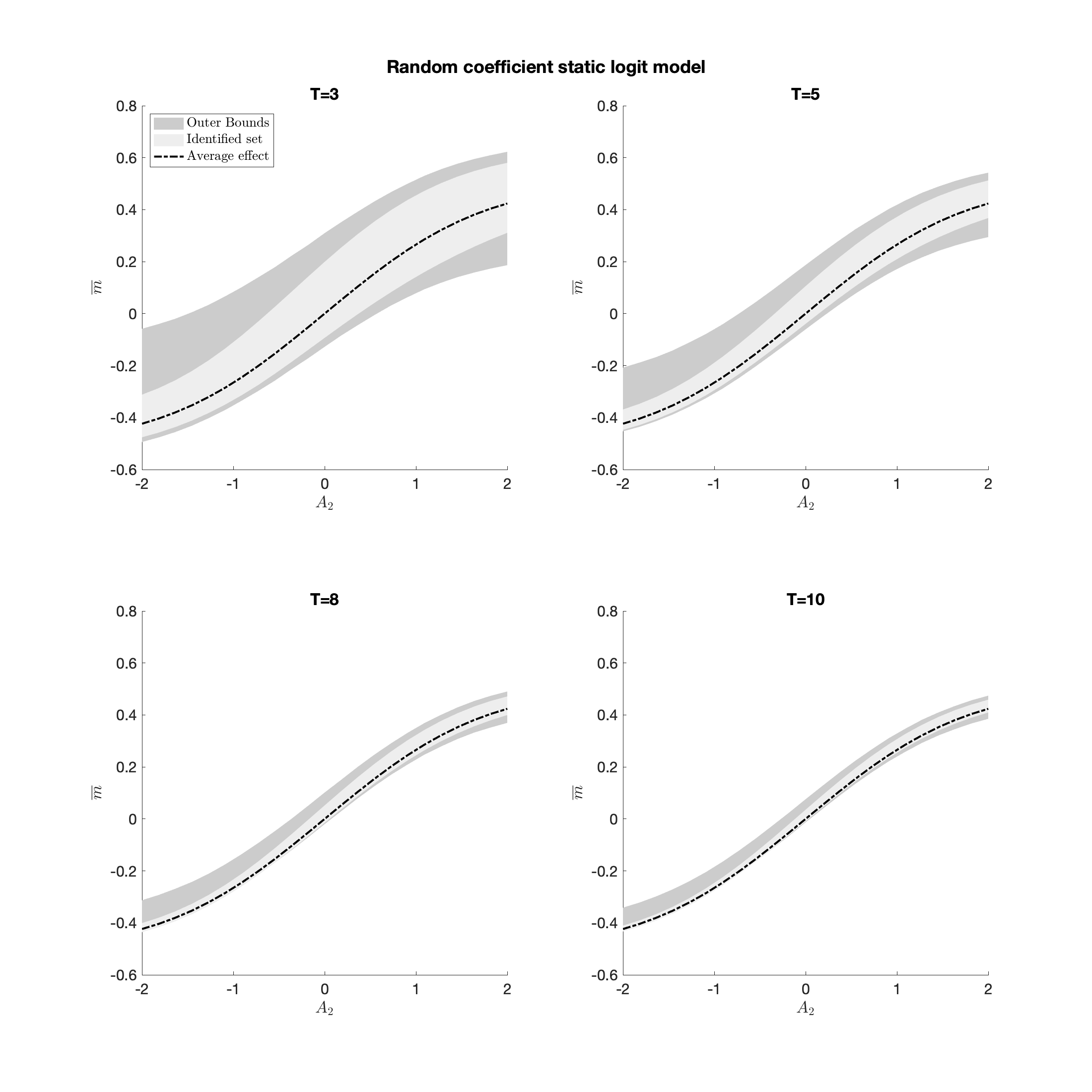}
	\caption{Comparison of the outer bounds  and the identified set for the random coefficient logit model $Y_{it} = 1\left\{ X_{it} A_{2,i} + A_{1,i} \geq \varepsilon_{it} \right\}$, where $\varepsilon_{it} \sim {\rm Logit}(0,1)$, $A_{1,i} \sim N(0,1/\sqrt{2})$, $A_{2,i} \sim N(A_2,1/\sqrt{2})$, 	$X_{it} = 1\left\{ A_{1,i} \geq \eta_{it} \right\}$ and $\eta_{it} \sim N(0,1)$. The average effect of interest is based on \eqref{margrc}. Results for each $A_2\in[-2,2]$ are based on 1000 replications of panels with cross-section size $n=1000$. Reported outer bounds  are the cross-replication averages.}
	\label{fig:idset_rc}
\end{figure}

The random coefficient example is based on the DGP
\begin{eqnarray}
	& Y_{it} = 1\left\{ X_{it} A_{2,i} + A_{1,i} \geq \varepsilon_{it} \right\},  \quad A_{1,i} \sim N(0,1/\sqrt{2}), \quad A_{2,i} \sim N(A_2,1/\sqrt{2}) \label{sim:rc1}, \\
	& X_{it} = 1\left\{ A_{1,i} \geq \eta_{it} \right\}, \quad \eta_{it} \sim N(0,1), \label{sim:rc2}
\end{eqnarray}
where $\varepsilon_{it} \sim {\rm Logit } (0,1)$. Our interest is in identifying the average effect based on \eqref{margrc}. We note that Theorem \ref{th:ConsistencyKNOWN} fully applies here, as there are no structural parameters to be estimated. 

Results are based on 1000 replications, and are presented in Figure \ref{fig:idset_rc}. We consider $n=1000$ and $T\in \{ 3,5,8,10 \}$ with $A_2 \in [-2,2]$. To approximate the supports of $A_{i,1}$ and $A_{2,i}$ we use grids of 50 equidistant points between $-5/5$ and $-7/7$, respectively. Not surprisingly, the presence of a random coefficient renders the average effect more difficult to identify. Indeed, for small $T$ even the sign of the average effect remains inconclusive for values of $A_2$ close to zero. More importantly, although the identified set becomes narrower as $T$ increases, it does not shrink to a point even when $T$ is 8 or 10. For reasons discussed before, obtaining the identified set for such large $T$ will in practice be infeasible. Simulation results confirm that our proposed method provides a reliable alternative. Indeed, the  outer  bounds   are quite close to the identified set at $T=8,10$.

We next focus on the dynamic logit model with a continuous covariate. The DGP is
	\begin{eqnarray*}
		& Y_{it} = 1\left\{ Y_{i,t-1} \gamma + X_{it} \beta + A_i \geq \varepsilon_{it}  \right\} \quad \text{for } t=1,\ldots, T, \\
		& Y_{i0} = 1\left\{ X_{i0} \beta + A_i \geq \varepsilon_{i0}  \right\}, \quad X_{it} \sim N (A_i,1), \quad A_i \sim N(0,1), \quad \varepsilon_{it} \sim {\rm Logit}(0,1),
	\end{eqnarray*}
	and we consider the average effect
	\begin{equation}
		\mathbb{E} 
		\left[ 
			\frac{1}{T} \sum_{t=1}^T \mathbb{E} \left[ m(X_{it},A_i, \gamma,\beta) | X_{it} \right]
		\right], \label{eq:aped}
	\end{equation}
	where
	\begin{align*}    
		m(X_{it},A_i,\gamma,\beta) &= P(Y_{it}=1 | Y_{i,t-1}=1 \, ,X_{it}, \, A_i \, , \gamma , \beta ) 
  \\ & \qquad 
		- P(Y_{it}=1 | Y_{i,t-1}=0 \, ,X_{it} ,A_i, \, \gamma ,\beta ).
	\end{align*}	
   \citet{AC21} have shown that in the given setting, the average effect in 
	\eqref{eq:aped} is point-identified (see their Proposition 3). 
    The comparison in this part is then that between the outer bounds and the \textit{point-identified} average effect. We investigate the behavior of the outer bounds in this case in panels of size $n=1000$ and $T\in \{ 4,6,8 \}$ with $\beta=1$ and $\gamma \in \left[ -2,2 \right]$. The support of $A_i$ is approximated by   a grid of 50 equidistant points between $-5$ and $5$. The results, presented in Figure \ref{fig:idset_dlogit}, are based on 1000 replications and confirm that the outer bounds  nearly point-identify the average effect, unless when $\gamma$ is large; however this issue tends to disappear as $T$ increases. This is not surprising since under a large $\gamma$, the term $\gamma Y_{i,t-1}$ will act similar to a fixed effect.

\begin{figure}[tb!]
	\centering
	\includegraphics[width=1\linewidth]{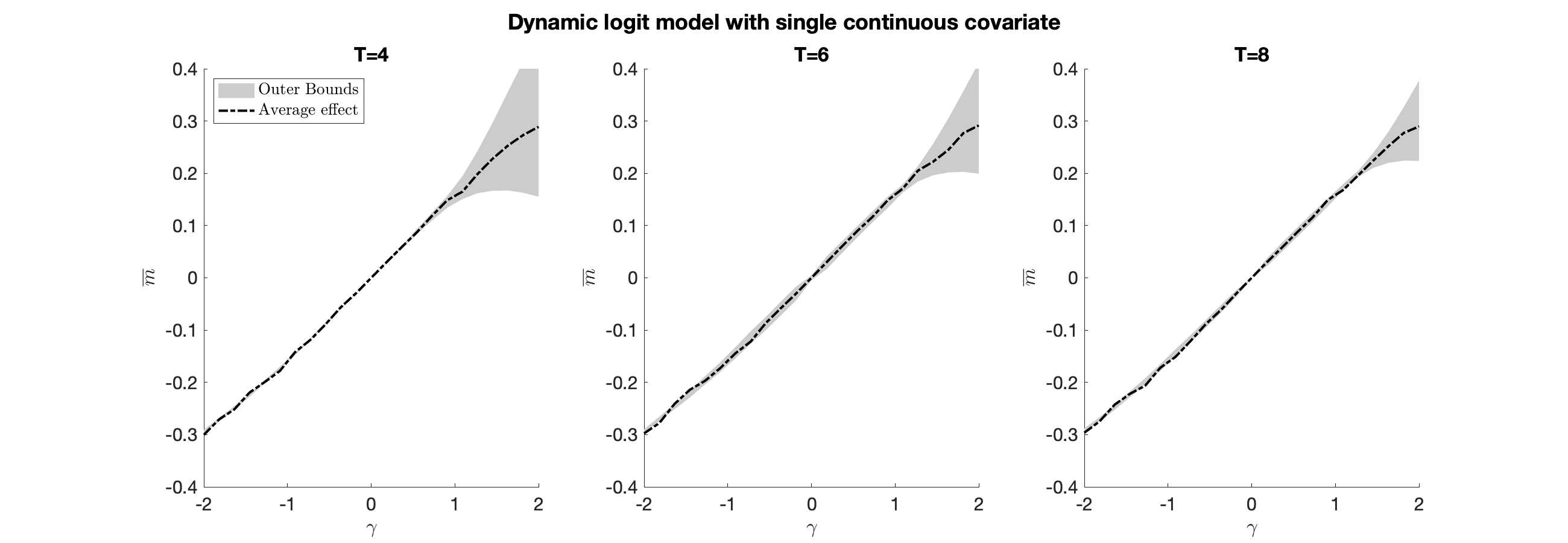}
	\caption{{Comparison of the outer bounds  and the identified set for the dynamic logit model $Y_{it} = 1\left\{ Y_{i,t-1}\gamma + X_{it} \beta + A_i \geq \varepsilon_{it} \right\}$, where $\varepsilon_{it} \sim {\rm Logit} (0,1)$, $A_i \sim N(0,1)$, $X_{it}\sim N(A_i,1)$, and $Y_{i0} = 1\{X_{i0}\beta + A_i \geq \varepsilon_{i0} \}$. Results for each $\beta_0\in[-2,2]$ are based on 1000 replications of panels with cross-section size $n=1000$. Reported  outer bounds   are the cross-replication averages. The average effect of interest is based on \eqref{eq:aped}.}}
	\label{fig:idset_dlogit}
\end{figure}

Finally, we consider the random coefficient dynamic logit model given by
\begin{eqnarray*}
	& Y_{it} = 1 \left\{ Y_{i,t-1} A_{2,i} + A_{1,i} \geq \varepsilon_{it} \right\} \quad \text{for } t=1,\ldots,T, \\
	& Y_{i0} = 1 \left\{ A_{1,i} \geq \varepsilon_{i0} \right\}, \quad A_{1,i} \sim N(0,1/\sqrt{2}) \quad A_{2,i} \sim N(A_2,1/\sqrt{2}), \quad \varepsilon_{it} \sim {\rm Logit}(0,1).
\end{eqnarray*}
For this exercise, we focus on the average effect
\begin{align}
	\mathbb{E} \left[ P(Y_{it}=1 | Y_{i,t-1}=1, A_{1,i},A_{2,i}) - P(Y_{it}=1 | Y_{i,t-1}=0 , A_{1,i},A_{2,i}) \right].
    \label{eq:aerdc}
\end{align}

\begin{figure}[tb!]
	\centering
	\includegraphics[width=0.8\linewidth]{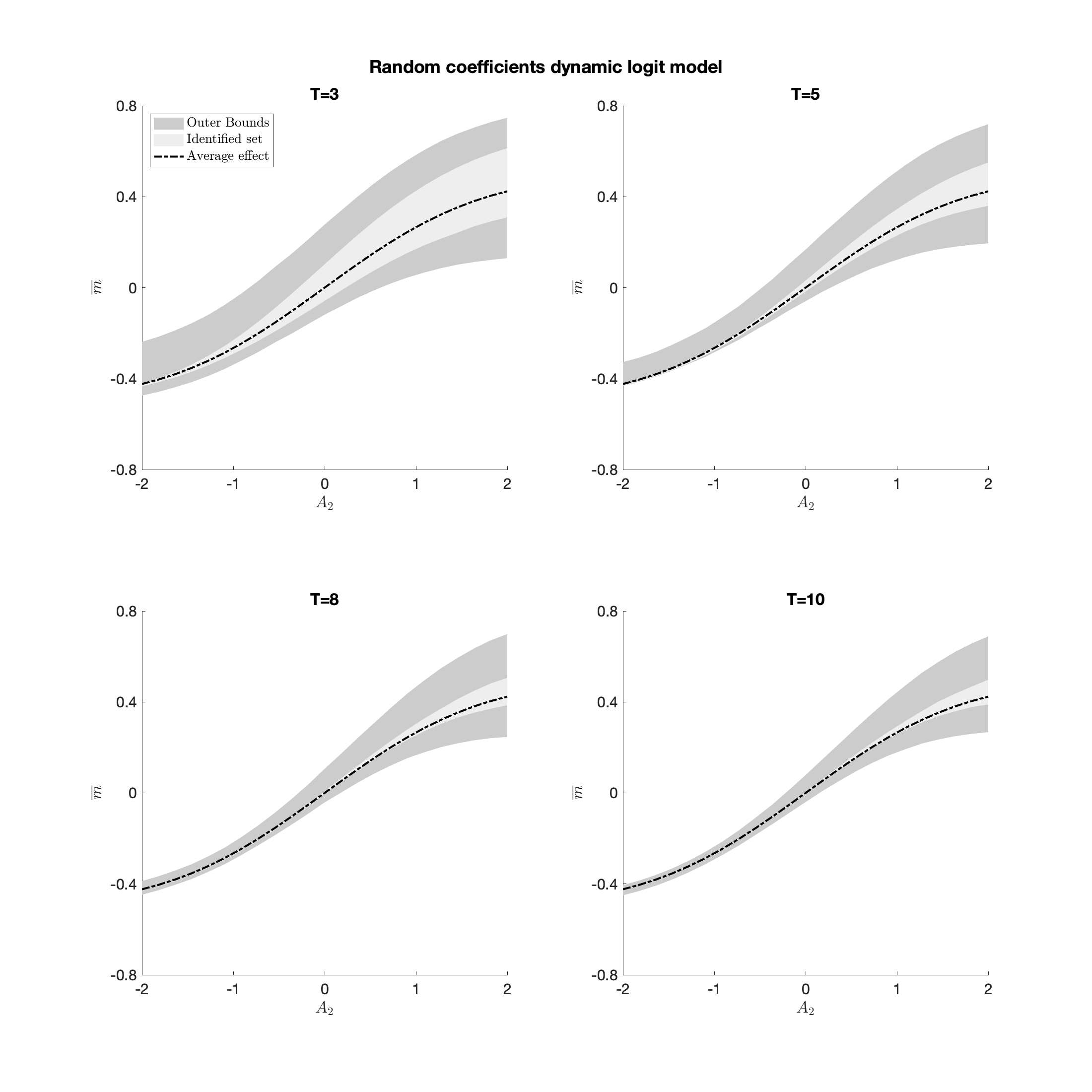}
	\caption{Comparison of the outer bounds  and the identified set for the random coefficient dynamic logit model $Y_{it} = 1\left\{ Y_{i,t-1} A_{2,i} + A_{1,i} \geq \varepsilon_{it} \right\}$, where $\varepsilon_{it} \sim {\rm Logit}(0,1)$, $A_{1,i} \sim N(0,1/\sqrt{2})$, $A_{2,i} \sim N(A_2,1/\sqrt{2})$, and $Y_{i0} = 1\left\{ A_{1,i} \geq \varepsilon_{i0} \right\}$. The average effect of interest is based on \eqref{eq:aerdc}. Results for each $A_2\in[-2,2]$ are based on 1000 replications of panels with cross-section size $n=1000$. Reported  outer bounds   are the cross-replication averages.}
	\label{fig:idset_rcd}
\end{figure}
We consider 1000 replications where $n=1000$ and $T\in \left\{ 4,6,8,10 \right\}$, and vary $A_2$ between $-2$ and $2$. As in the static logit variant of this model, the supports of $A_{i,1}$ and $A_{2,i}$ are approximated by   grids of 50 equidistant points between $-5/5$ and $-7/7$, respectively. Figure~\ref{fig:idset_rcd} reveals that the identified set can be quite wide. This is in line with the earlier observations for the random coefficient static logit model. However, the identified set becomes wider as $A_2$ increases. This is similar to the asymmetry observed in the dynamic logit case. When $A_2$ is large, $Y_i$ is more likely to be a vector of 1s. Hence, again, the effect of $A_{i,2}Y_{i,t-1}$ is hard to distinguish from that of $A_{i,1}$. In results not reported here, we observed that the outer bounds tend to be conservative in this particular case when the uniform linear program \eqref{eq:UnifOptimization} is used. We therefore used the baseline linear program which utilizes \eqref{eq:linprog1} in obtaining the bounds reported in Figure \ref{fig:idset_rcd}. The resulting outer bounds perform well in tracking the identified set as $T$ increases.

\section{Accounting for estimated common parameters}\label{sect:betahat}

We now consider the case where the common parameter vector $\beta_0$
has to be estimated. Our construction of the bound functions
$L\left(z,y,\beta \right)$ and $U\left(z,y,\beta \right)$
 remains 
essentially unchanged, but they are now evaluated at a consistent estimator of $\beta_0$, rather than the true $\beta_0$.
The goal here is
to provide asymptotic results that account for the noise
in the estimation of $\beta_0$.

If the bound functions
$L\left(z,y,\beta \right)$ and $U\left(z,y,\beta \right)$ were 
differentiable in $\beta$, then 
accounting for the estimation of $\beta_0$ when
providing one-sided confidence intervals on 
the bounds
$\mathbb{E}[L(Z_i,Y_i,\beta_0)]$
and $\mathbb{E}[U(Z_i,Y_i,\beta_0)]$ 
would be a straightforward application 
of the delta method.
Unfortunately, because we obtain 
$L\left(z,y,\beta \right)$ and $U\left(z,y,\beta \right)$
as the solution to a linear  program, it is generally not possible to
verify any smoothness of those 
functions in $\beta$.\footnote{For example, in the static logit
model, we know that the upper and lower bound functions are generally not unique for any given $\beta$, implying that they also cannot be continuous or smooth as functions of $\beta$. This shows that smoothness of 
$f(y|z,a,\beta)$ and $m(z,a,\beta)$ in $\beta$ does not 
imply smoothness of the bounds in $\beta$.
}
The convergence rate and inference results in this section
therefore make no assumption whatsoever on the continuity or smoothness of the
bound functions.\footnote{One could, alternatively, construct $L\left(z,y,\beta \right)$ and $U\left(z,y,\beta \right)$ such that they still satisfy the
assumptions of Theorem~\ref{th:ConsistencyKNOWN}, but are also smooth in 
$\beta$ (e.g.\ in a particular model for a particular average effect
of interest, one may simply find explicit analytic expressions
for the bound functions). We leave the exploration of such possibilities to future work.}

\subsection{Consistency and convergence rate of the estimated bounds}

Before discussing inference on $\overline m$,
our first goal is to show that the population 
bounds
$\mathbb{E}[L(Z_i,Y_i,\beta_0)]$
and 
$\mathbb{E}[U(Z_i,Y_i,\beta_0)]$
can be estimated at $\sqrt{n}$ rate,
even if $\beta_0$ is estimated.
For that purpose,
we split the set of observations $\{1,\ldots,n\}$
into the disjoint subsets ${\cal I}_1 = \{1,\ldots,\lfloor n / 2 \rfloor \}$ and  ${\cal I}_2 = \{ \lfloor n / 2 \rfloor +1,\ldots,n\}$.
For any subset of observed units ${\cal I} \subset \{1,\ldots,n\}$ we denote by 
$Y_{({\cal I})}$ and $Z_{({\cal I})}$ the collection of all observations $Y_i$ and $Z_i$ with $i \in {\cal I}$.
Furthermore, we define the function $\bar s \,: \, \{ 1,\ldots, n\} \rightarrow \{1,2\}$ by
\begin{align*}
    \bar s(i) &:= \left\{ \begin{array}{ll}  
           2 & \text{if $i \in {\cal I}_1$},
           \\
           1 & \text{if $i \in {\cal I}_2$}.
      \end{array} \right.
\end{align*}
For each $s \in \{1,2\}$
we have an estimator $\widehat \beta_s = \widehat \beta_s(Y_{({\cal I}_s)},Z_{({\cal I}_s)})  $  that only depends on the observed
data $(Y_i,Z_i)$ for $i \in {\cal I}_s$.
In other words, $\widehat \beta_{1}$ and $\widehat \beta_{2}$ are estimators of $\beta$ obtained using the first and second half-sample, respectively.
Our estimates for the upper and lower bounds in \eqref{DefineBoundEstimatesKNOWN} then generalize to
\begin{align}
     \widehat L_S &:= \frac 1 n \sum_{i=1}^{n}  
     L\left(Z_i,Y_i, \widehat \beta_{\bar s(i)} \right) ,
     &
     \widehat U_S &:= \frac 1 n \sum_{i=1}^{n}  
     U\left(Z_i,Y_i, \widehat \beta_{\bar s(i)} \right) .
     \label{DefineBoundEstimates}
\end{align} 
Notice that the ``cross-fitting'' construction in \eqref{DefineBoundEstimates} ensures that for any $i$, $(Z_i,Y_i)$ and $\widehat \beta_{\bar s (i)}$ are always from two different half-samples, and therefore independent of each other. Consequently, conditional on the half-sample $\mathcal I_{\bar s(i)}$,
$L(Z_i,Y_i,\widehat \beta_{\bar s (i)})$ and $U(Z_i,Y_i,\widehat \beta_{\bar s (i)})$ are independently distributed over $i$. 
In contrast, if the bound estimators were based on $\widehat \beta$ obtained from the full-sample, $L(Z_i,Y_i,\widehat \beta)$ and $U(Z_i,Y_i,\widehat \beta)$ would be arbitrarily dependent over $i$, ruling out a standard  Law of Large Numbers. Along with reasonable assumptions on the behavior of $\widehat \beta_{\bar s(i)}$, as well as smoothness
conditions on the functions $f(y|z,a;\beta)$ and $m(z,a,\beta)$ in $\beta$, 
the \textit{conditional} independence is sufficient for proving the consistency of the bounds in \eqref{DefineBoundEstimates} for $\mathbb{E}[L(Z_i,Y_i,\beta_0)]$
and $\mathbb{E}[U(Z_i,Y_i,\beta_0)]$.

\begin{assumption}~
     \label{ass:MAIN2}
       
     \begin{enumerate}[(i)]

        \item For  $s \in \{1,2\}$
          the estimator $\widehat \beta_s = \widehat \beta_s(Y_{({\cal I}_s)},Z_{({\cal I}_s)})  $ 
          satisfies
          $    \widehat \beta_s = \beta_0 + O_p\left( n^{-1/2} \right) $.
          
            \item   
            There exists $\epsilon>0$ such that for an $\epsilon$-ball $B_\epsilon(\beta_0)$ around $\beta_0$ we have
           \begin{align*}
                       \sup_{\beta \in {B}_\epsilon(\beta_0)}  \sum_{y\in \mathcal{Y}} \mathbb{E}  \left\|   \frac{\partial    f\left(y\,  |\, Z_i, A_i;  \beta \right) } {\partial \beta}  \right\|   &< \infty,
                       &
                        \sup_{\beta \in {B}_\epsilon(\beta_0)} 
                         \mathbb{E}  \left\|   \frac{\partial   m\left(Z_i,A_i,\beta  \right)  } {\partial \beta}  \right\|   &< \infty .
          \end{align*}
     \end{enumerate}

\end{assumption}

\begin{theorem}
     \label{th:Consistency}
     Let Assumptions~\ref{ass:MAIN} and~\ref{ass:MAIN2} hold,
     and let   $L, U : {\cal Z} \times {\cal Y} \times {\cal B} \rightarrow [b_{\min}, b_{\max}]$ be two non-random functions that satisfy 
     \eqref{eq:boundcondition}
     for all $z \in {\cal Z}$, $a \in {\cal A}$ and $\beta \in {\cal B}$.  Let, finally, $\overline{m}$ be as defined in \eqref{AverageEffects}, and  $\widehat L_S$ and   $\widehat U_S$ be as defined in \eqref{DefineBoundEstimates}.
    Then, as $n \rightarrow \infty$, we have
     \begin{align*}
          \widehat L_S +   O_{p}(n^{-1/2} )  \; \leq \;  \overline{m} \; \leq \; \widehat U_S +  O_{p} ( n^{-1/2} )  .
     \end{align*}
\end{theorem}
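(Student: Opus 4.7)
The plan is to mimic the proof of Theorem~\ref{th:ConsistencyKNOWN} while exploiting the cross-fitting in \eqref{DefineBoundEstimates} to sidestep the fact that $L(z,y,\beta)$ and $U(z,y,\beta)$ are not assumed to be smooth in $\beta$. I focus on the lower bound; the upper bound is symmetric. The crucial structural hypothesis is that \eqref{eq:boundcondition} holds for every $\beta \in \mathcal B$, not only at $\beta_0$.

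First I would decompose $\widehat L_S = \frac{1}{n} \sum_{i \in \mathcal I_1} L(Z_i, Y_i, \widehat \beta_2) + \frac{1}{n} \sum_{i \in \mathcal I_2} L(Z_i, Y_i, \widehat \beta_1)$ and condition on the half-sample that determines $\widehat \beta_s$. On the complementary fold the observations $(Z_i, Y_i)$ are i.i.d.\ and, thanks to Assumption~\ref{ass:MAIN}(iii), each summand is bounded in absolute value by $B := \max(|b_{\min}|, |b_{\max}|)$. A conditional Chebyshev bound therefore yields
\begin{align*}
    \widehat L_S \;=\; \tfrac12\, \mathbb E\!\left[L(Z_i, Y_i, \widehat \beta_2)\,\big|\,\widehat \beta_2\right] + \tfrac12\, \mathbb E\!\left[L(Z_i, Y_i, \widehat \beta_1)\,\big|\,\widehat \beta_1\right] + O_p(n^{-1/2}),
\end{align*}
where the $|\mathcal I_s|/n - 1/2 = O(1/n)$ discrepancy is also absorbed into the remainder.

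The heart of the argument is the bound on each conditional expectation. Evaluating \eqref{eq:boundcondition} pointwise at $\beta = \widehat \beta_s$ (which is legitimate by the theorem's hypothesis) and using that $(Z_i, A_i)$ is independent of $\widehat \beta_s$ for $i \in \mathcal I_{3-s}$, I obtain
\begin{align*}
    \mathbb E\!\left[\sum_{y \in \mathcal Y} L(Z_i, y, \widehat \beta_s)\, f(y\,|\,Z_i, A_i; \widehat \beta_s)\,\Big|\,\widehat \beta_s\right] \;\leq\; \mathbb E\!\left[m(Z_i, A_i, \widehat \beta_s)\,\big|\,\widehat \beta_s\right].
\end{align*}
A first-order Taylor expansion around $\beta_0$, combined with the moment bound on $\partial_\beta m$ from Assumption~\ref{ass:MAIN2}(ii) and $\widehat \beta_s - \beta_0 = O_p(n^{-1/2})$, shows that the right-hand side equals $\overline m + O_p(n^{-1/2})$. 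On the left-hand side, replacing $f(y|Z_i, A_i; \widehat \beta_s)$ by the true $f(y|Z_i, A_i; \beta_0)$ changes the expression by at most $B \cdot \mathbb E [\sum_y |f(y|Z_i, A_i; \widehat \beta_s) - f(y|Z_i, A_i; \beta_0)| \mid \widehat \beta_s]$, which is again $O_p(n^{-1/2})$ by the mean-value theorem and the integrability of $\partial_\beta f$ in Assumption~\ref{ass:MAIN2}(ii). Once $f$ has been swapped to $\beta_0$, the true conditional law of $Y_i$ collapses the expression to $\mathbb E[L(Z_i, Y_i, \widehat \beta_s) \mid \widehat \beta_s]$, yielding $\mathbb E[L(Z_i, Y_i, \widehat \beta_s) \mid \widehat \beta_s] \leq \overline m + O_p(n^{-1/2})$. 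Substituting back gives $\widehat L_S \leq \overline m + O_p(n^{-1/2})$, and the mirror argument with the right-hand inequality in \eqref{eq:boundcondition} delivers $\overline m \leq \widehat U_S + O_p(n^{-1/2})$.

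The main obstacle is exactly the non-smoothness of $L$ and $U$ in $\beta$, which rules out a delta-method argument that would compare $\widehat L_S$ to the known-$\beta_0$ estimator $\widehat L$ of Theorem~\ref{th:ConsistencyKNOWN}. The decisive trick is never to perturb $L$ at all: one instead evaluates \eqref{eq:boundcondition} pointwise at $\widehat \beta_s$ and only perturbs the regular objects $f$ and $m$ through Assumption~\ref{ass:MAIN2}(ii). Cross-fitting is what legitimizes the conditional-expectation manipulations by making $\widehat \beta_s$ independent of the fold on which $L(Z_i, Y_i, \cdot)$ is evaluated.
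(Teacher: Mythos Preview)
Your proposal is correct and follows essentially the same route as the paper's proof: use cross-fitting and a conditional Chebyshev/Markov argument to replace each half-sample average by its conditional mean, evaluate the bound condition \eqref{eq:boundcondition} at $\widehat\beta_s$ rather than at $\beta_0$, and then mean-value expand only the smooth objects $f$ and $m$ (never $L$ or $U$) to pass between $\widehat\beta_s$ and $\beta_0$. The only cosmetic difference is the order in which the swap of $f$ between $\beta_0$ and $\widehat\beta_s$ is performed, and that you condition on $\widehat\beta_s$ while the paper conditions on the full half-sample data; both are equivalent here.
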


This theorem generalizes consistency of the outer bounds to the case of estimated $\beta_0$. The proof is straightforward and provided in the appendix.
By contrast,
obtaining inference results under estimated $\beta_0$ is more 
complicated due to the linear program yielding potentially non-smooth bound functions. This non-smoothness is not specific to our case: it arises generically in methods that construct estimating equations or bounds via linear programming. For example, \cite{bonhomme2012functional} constructs moment functions for panel data models by solving linear programs, and faces the same challenge that small changes in parameters can cause discrete jumps in the solution. Similarly, the bounds in \citet{CFHN13} are obtained via linear programming, and their ``perturbed bootstrap'' inference method is specifically designed to circumvent the non-smoothness problem. The bottom line is that one cannot simply deploy the delta method to account for randomness introduced
by the estimation of $\beta_0$, and so a different approach is needed.
In the remainder of this section, we introduce two inference methods.

\subsection{First inference method}
\label{sec:perturbed}

Our first inference method is inspired by the handling 
of common parameters in the ``perturbed bootstrap'' approach of \citet{CFHN13}. The idea is to simply take the union of our
``known $\beta_0$'' confidence intervals in Section~\ref{sect:bounds} over a confidence set of the unknown $\beta_0$. For that purpose, define
\begin{align*}
    \widehat L(\beta) &:= \frac 1 n \sum_{i=1}^{n}  L(Z_i,Y_i, \beta ) ,
    &
    \widehat U(\beta) &:= \frac 1 n \sum_{i=1}^{n}  U(Z_i,Y_i,  \beta ) ,
    \\
    \widehat{\sigma}^2_L(\beta)
    &:= \frac 1 n \sum_{i=1}^n
    \left[ L(Z_i,Y_i, \beta ) - \widehat L(\beta) \right]^2,
    &
    \widehat{\sigma}^2_U(\beta)
    &:= \frac 1 n \sum_{i=1}^n
    \left[ U(Z_i,Y_i, \beta ) - \widehat U(\beta) \right]^2.
\end{align*}

We then have the following theorem.

\begin{theorem}\label{theorem:perturbed}
    Let, for some $0<\gamma<1$, $\mathcal B_{1-\gamma}$ be such that $\lim_{n\to\infty} P (\beta_0 \in \mathcal B_{1-\gamma}) \geq 1-\gamma$. Then,
    \begin{align*}
        \lim_{n\rightarrow \infty }P\left\{
        \inf_{\beta \in {\cal B}_{1-\gamma}} 
        \left[ \widehat{L}(\beta) - \frac{ c_{\alpha/2} \, \widehat{\sigma }_{L}(\beta)} {\sqrt{n}} \right] \leq \overline m  
        \leq   
        \sup_{\beta \in {\cal B}_{1-\gamma}} 
        \left[ \widehat{U}(\beta)+
        \frac{ c_{\alpha/2}\,
        \widehat{\sigma }_{U}(\beta)} {\sqrt{n}}
        \right]
        \right\} \geq 1-\alpha - \gamma,
\end{align*}
where $   c_{\alpha/2}=\Phi^{-1}\left(1- \frac \alpha 2 \right)$.
\end{theorem}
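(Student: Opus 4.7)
The plan is to combine Theorem~\ref{th:ConsistencyKNOWN} (which handles known $\beta_0$) with a simple union bound that absorbs the additional uncertainty coming from not knowing $\beta_0$. Concretely, define the two events
\[
A_n := \left\{\widehat L(\beta_0) - \tfrac{c_{\alpha/2}\widehat\sigma_L(\beta_0)}{\sqrt n} \le \overline m \le \widehat U(\beta_0) + \tfrac{c_{\alpha/2}\widehat\sigma_U(\beta_0)}{\sqrt n}\right\},
\qquad B_n := \{\beta_0 \in \mathcal B_{1-\gamma}\}.
\]
Theorem~\ref{th:ConsistencyKNOWN} applied at the (nonrandom) true value $\beta_0$ yields $\liminf_{n\to\infty}P(A_n) \ge 1-\alpha$, while the hypothesis on $\mathcal B_{1-\gamma}$ gives $\liminf_{n\to\infty}P(B_n) \ge 1-\gamma$. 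The elementary inclusion--exclusion bound $P(A_n \cap B_n) \ge P(A_n) + P(B_n) - 1$ then immediately yields $\liminf_{n\to\infty} P(A_n \cap B_n) \ge 1 - \alpha - \gamma$.

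Next, I would argue that $A_n \cap B_n$ is contained in the event whose probability Theorem~\ref{theorem:perturbed} asks us to bound. On $B_n$, $\beta_0$ itself is a feasible point in the infimum/supremum, so by monotonicity of $\inf$ and $\sup$,
\[
\inf_{\beta \in \mathcal B_{1-\gamma}}\!\left[\widehat L(\beta) - \tfrac{c_{\alpha/2}\widehat\sigma_L(\beta)}{\sqrt n}\right] \;\le\; \widehat L(\beta_0) - \tfrac{c_{\alpha/2}\widehat\sigma_L(\beta_0)}{\sqrt n},
\]
and symmetrically $\widehat U(\beta_0) + c_{\alpha/2}\widehat\sigma_U(\beta_0)/\sqrt n \le \sup_{\beta \in \mathcal B_{1-\gamma}}[\widehat U(\beta) + c_{\alpha/2}\widehat\sigma_U(\beta)/\sqrt n]$. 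Chaining these with the inequalities that define $A_n$ shows that the claimed two-sided enclosure of $\overline m$ holds on $A_n \cap B_n$, so the coverage probability of the perturbed interval is at least $P(A_n \cap B_n)$, and taking $\liminf$ delivers the theorem.

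There is no substantive obstacle here; the argument is a conservative two-source union bound and requires no smoothness of $L,U$ in $\beta$, which is precisely what motivates this approach in the first place. The only point that deserves care is making sure Theorem~\ref{th:ConsistencyKNOWN} is applied correctly: its conclusion concerns the random variables $\widehat L(\beta_0), \widehat U(\beta_0), \widehat\sigma_L(\beta_0), \widehat\sigma_U(\beta_0)$ evaluated at the true parameter, which is exactly what we need to plug into $A_n$. The mild variance positivity conditions of Theorem~\ref{th:ConsistencyKNOWN} should be inherited as implicit background assumptions (if a variance were zero the corresponding one-sided bound becomes a deterministic inequality and the same conclusion holds a fortiori).
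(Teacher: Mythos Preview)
Your proposal is correct and essentially identical to the paper's own proof: both show that on the intersection of the events $\{\beta_0 \in \mathcal B_{1-\gamma}\}$ and $\{\widehat L(\beta_0) - c_{\alpha/2}\widehat\sigma_L(\beta_0)/\sqrt n \le \overline m \le \widehat U(\beta_0) + c_{\alpha/2}\widehat\sigma_U(\beta_0)/\sqrt n\}$ the perturbed interval covers $\overline m$, and then combine Theorem~\ref{th:ConsistencyKNOWN} with the confidence-set assumption via a union/Bonferroni bound. Your inclusion--exclusion formulation $P(A_n\cap B_n)\ge P(A_n)+P(B_n)-1$ is just a repackaging of the paper's decomposition $P(A_n)\le P(A_n\cap B_n)+P(B_n^c)$, so the arguments coincide.
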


Theorem \ref{theorem:perturbed} provides a straightforward albeit potentially conservative way of obtaining confidence bands  that incorporate the uncertainty due to estimation of $\beta_0$. This uncertainty is captured by $\gamma$ whereas $\alpha$ parameterizes the uncertainty due to estimation of the population outer bounds by sample averages. For a  desired level of confidence $1-c$, one can trade off between these two sources of uncertainty by choosing $\alpha$ and $\gamma$ as desired. Another option is to find the narrowest confidence interval across all $(\alpha,\gamma)$ such that $c=\alpha+\gamma$. Notice that the infimum and supremum cannot be calculated exactly, so one has to do a grid search across a sufficiently large selection of $\beta\in \mathcal B _{1-\gamma}$. 
Especially when $\beta$ contains several parameters, this method can be demanding. Nevertheless, the attraction of 
Theorem~\ref{theorem:perturbed} is that as long as
a valid confidence interval for $\beta_0$ can be constructed,
 inference on $\overline m$ requires only a straightforward application
of the methods described in Section~\ref{sect:bounds}.
Fortunately, there is a large literature on obtaining  valid
confidence intervals on the common parameters $\beta_0$
in the type of panel data models with fixed effects that
we consider here; see, for example, \cite{arellano2003discrete}
and \cite{ArellanoBonhomme(11)} for  reviews, as well as our
discussion in the introduction.

Interestingly, the confidence set
for $\beta_0$ in 
Theorem~\ref{theorem:perturbed}
can also accommodate cases where $\beta_0$
is not point-identified, as long as a valid
confidence set ${\cal B}_{1-\gamma}$ can be constructed. This is particularly relevant for models such as the probit, where the common parameters are generally set-identified when $T$ is fixed \citep[see also][for examples of partial identification in logit models]{DGK24}. In Section~\ref{sec:setidentifiedbeta} of the Supplementary Appendix, we discuss two approaches for obtaining outer bounds when $\beta_0$ is set-identified: the first uses a random coefficients specification, while the second directly incorporates the identified set for $\beta_0$ into bound construction. Section~\ref{sec:appendixC} provides simulation evidence for the probit model.

\subsection{Second inference method}
\label{sec:alternativeinf}

As mentioned before, evaluating the infimum
and supremum over $\beta \in {\cal B}_{1-\gamma}$
 in Theorem~\ref{theorem:perturbed} can be challenging.
As an alternative inference method, we therefore suggest modifying
the linear program that is used to calculate the upper and 
lower bounds for $\overline m$ such that the uncertainty about
$\beta_0$ is accounted for within the constraints of the 
linear program.

In Section~\ref{sect:bounds}, the crucial requirement on our bound functions $L\left( z, y, \beta \right)$
and $U\left( z, y, \beta \right)$ was that they satisfy
the inequalities in \eqref{eq:boundcondition}
for a fixed value $\beta$. To account for the 
fact that the true $\beta_0$ is unknown, we now 
slightly generalize this idea.
Given a {\it finite} set ${\cal B}_{\rm{sub}} \subset {\cal B}$
of possible values for $\beta$, we demand
that the bound functions 
$L\left( z, y, {\cal B}_{\rm{sub}} \right)$
and $U\left( z, y, {\cal B}_{\rm{sub}} \right)$ 
satisfy the inequalities in \eqref{eq:boundcondition}
for each value $\beta \in {\cal B}_{\rm{sub}}$, that is,
we demand
\begin{align}
       \label{eq:boundconditionSET}
       \forall \beta \in {\cal B}_{\rm{sub}}:\, 
\sum_{y\in \mathcal{Y}}  L\left( z, y, {\cal B}_{\rm{sub}} \right)   f\left(y\, |\, z, a; \beta \right)   \leq \,
  m\left(z,a,\beta \right) 
\,  \leq \,\sum_{y\in \mathcal{Y}}  U\left( z, y, {\cal B}_{\rm{sub}} \right)   f\left(y\, |\, z, a; \beta \right)   .
\end{align}
As in \eqref{eq:boundcondition}, we want 
the inequality in \eqref{eq:boundconditionSET}
to hold for all $z \in {\cal Z}$ and $a \in {\cal A}$.\footnote{For consistency of notation, our previous bounds 
$L\left( z, y, \beta \right)$
and $U\left( z, y, \beta \right)$ could have been written as 
$L\left( z, y, \{\beta\} \right)$
and $U\left( z, y, \{\beta\} \right)$ 
to agree with \eqref{eq:boundconditionSET}, but this is a minor mismatch of notation.
 }

Next, for each half-sample $s \in \{1,2\}$,
let $\widehat {\cal B}_{s}$ be a set of points
estimated only from observations $i \in {\cal I}_s$, such that
the convex hull of $\widehat {\cal B}_{s}$, ${\rm Conv}(\widehat {\cal B}_{s})$,
provides a $1-\gamma/2$ confidence set for $\beta_0$. 
For example, for a one-dimensional 
parameter $\beta$, we  choose $\widehat {\cal B}_{s}=\{\widehat \beta_{{\rm low},s},\widehat \beta_{{\rm up},s}\}$
to consist of the lower and upper bounds of a confidence interval for $\beta_0$. Then, ${\rm Conv}(\widehat {\cal B}_{s})=
[\widehat \beta_{{\rm low},s},\widehat \beta_{{\rm up},s}]$
is just a standard confidence interval in that
case. More generally, we have to find a confidence set that can be generated as a 
convex hull of a finite number of points.\footnote{
If $\beta$ is higher-dimensional, then one simple choice
for $\widehat {\cal B}_{s}$ would be the Cartesian product
of one-dimensional confidence bounds for each component
of $\beta$, using a Bonferroni correction to maintain
the correct confidence level $1-\gamma/2$. However, this yields a set with $2^{\dim(\beta)}$ vertices, which may be computationally costly.
A more efficient construction uses the cross-polytope, which requires only $2 \dim(\beta)$ vertices. Suppose that $\sqrt{n/2}(\widehat\beta_s - \beta_0) \xrightarrow{d} N(0, \Sigma_\beta)$ with consistent estimator $\widehat\Sigma_{\beta,s}$. Let $c_{1-\gamma/2}^{(d)}$ be the $(1-\gamma/2)$-quantile of $\|Z\|_1 = \sum_{j=1}^d |Z_j|$ where $Z \sim N(0, I_d)$. Then 
\begin{align*}
\widehat{\cal B}_{s} = \left\{ \widehat\beta_s + \frac{c_{1-\gamma/2}^{(d)}}{\sqrt{n/2}} \widehat\Sigma_{\beta,s}^{1/2} v \; : \; v \in \{\pm e_1, \ldots, \pm e_d\} \right\},
\end{align*}
where $e_j$ denotes the $j$-th standard basis vector in $\mathbb{R}^d$. 
The set $\widehat{\cal B}_{s}$ consists of $2d$ vertices whose convex hull provides an asymptotically valid $(1-\gamma/2)$ confidence set for $\beta_0$.
The critical value $c_{1-\gamma/2}^{(d)}$ can be computed numerically; for example, $c_{0.975}^{(1)} \approx 2.24$, $c_{0.975}^{(2)} \approx 3.02$, and $c_{0.975}^{(3)} \approx 3.67$.
}
Let also ${\rm diam}\left( {\cal B}_{\rm{sub}} \right)$ be the diameter of the set
${\cal B}_{\rm{sub}} $.
Finally, we define
\begin{align}
     \widehat L_C  &:= \frac 1 n \sum_{i=1}^{n}  
     L\left(Z_i,Y_i,  \widehat {\cal B}_{\bar s(i)} \right) ,
     &
     \widehat U_C  &:= \frac 1 n \sum_{i=1}^{n}  
     U\left(Z_i,Y_i,  \widehat {\cal B}_{\bar s(i)} \right) .
     \label{DefineBoundEstimatesBf}
\end{align} 

We require the following additional assumptions for this
inference method, which strengthen Assumption~\ref{ass:MAIN2}(ii) and also formalize the requirement that  ${\rm Conv}(\widehat{ \mathcal B} _s)$  is a confidence interval.

\begin{assumption}~ \label{ass:Bf}

    \begin{enumerate}

        \item[(i)]\label{ass:Bf1}
        $\mathbb E [ 
            {\rm diam}( {\widehat{\mathcal B}}_{s} )
        ]^2 
        = 
        o(n^{-1/2})
        $ and $\lim_{n\rightarrow \infty }P\{ \beta_0  \in {\rm Conv}(\widehat{\mathcal B}_{ s})\}  \geq 1- \gamma/2$ where $s\in \{1,2\}$.
        
        \item[(ii)] There exists $\epsilon>0$ such that for an $\epsilon$-ball $B_\epsilon(\beta_0)$ around $\beta_0$ we have
        \begin{align*}
            \sup_{\beta \in B_\epsilon(\beta_0)}  \sum_{y\in \mathcal{Y}} \mathbb{E}  \left\|   \frac{\partial^2    f\left(y\,  |\, Z_i, A_i;  \beta \right) } {\partial \beta^2}  \right\|   &< \infty,
            &
            \sup_{\beta \in B_\epsilon(\beta_0)} 
            \mathbb{E}  \left\|   \frac{\partial^2   m\left(Z_i,A_i,\beta  \right)  } {\partial \beta^2}  \right\|   &< \infty .
          \end{align*}
    \end{enumerate}
    
\end{assumption}

Then, the following lemma shows that conditional
on $ \beta_0  \in  \widehat{\mathcal B}_{\bar s}$,
$ L(Z_i,Y_i,\widehat{\mathcal B}_{\bar{s}})$
and  $U(Z_i,Y_i,\widehat{\mathcal B}_{\bar{s}})$
provide
valid bounds on $\overline m$ in expectation.

\begin{lemma}\label{lem:inflemma}
Let Assumptions \ref{ass:MAIN} and \ref{ass:Bf} hold.
Let
    $L(\cdot,\cdot,\cdot)$ and $U(\cdot,\cdot,\cdot)$
    satisfy \eqref{eq:boundconditionSET}
    for all $z \in {\cal Z}$ and $a \in {\cal A}$,
    and let  $\widehat{\mathcal B}_{\bar s}$ be as defined after 
    display \eqref{eq:boundconditionSET}.
    Let ${\mathcal B}_{\rm sub} \subset
    {\cal B}$ be such that
    $ \beta_0  \in {\rm Conv}({\mathcal B}_{\rm sub})$.    
    Then, for sufficiently large $n$, we have
    \begin{align*}
    &  \mathbb E 
    \left[
    \left.
    L(Z_i,Y_i,\widehat{\mathcal B}_{\bar{s}})
    \, \right | \,  \widehat{\mathcal B}_{\bar s}
    ={\mathcal B}_{\rm sub}
    \right]
    + 
    O\left([{\rm {diam}}({\mathcal B}_{\rm sub})]^2\right)
    \leq 
    \overline m
    \\ & \qquad \qquad\qquad\qquad\qquad\qquad
    \leq 
    \mathbb E 
    \left[
    \left.
    U(Z_i,Y_i,\widehat{\mathcal B}_{\bar{s}})
    \, \right | \,  \widehat{\mathcal B}_{\bar s}
    ={\mathcal B}_{\rm sub}
    \right]
    + 
     O\left([{\rm {diam}}({\mathcal B}_{\rm sub})]^2\right).
\end{align*}
\end{lemma}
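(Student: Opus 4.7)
My plan is to prove the two bound inequalities by combining the cross-fitting independence (which reduces the conditional expectation to an unconditional one for a fixed $\mathcal B_{\rm sub}$) with a quadratic Taylor expansion that transfers the pointwise inequalities from the vertices of $\mathcal B_{\rm sub}$ to the interior point $\beta_0$. I focus on the lower bound; the upper bound is symmetric.

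First, conditionally on $\widehat{\mathcal B}_{\bar s(i)}=\mathcal B_{\rm sub}$, the cross-fitting construction makes $(Z_i,Y_i)$ independent of $\widehat{\mathcal B}_{\bar s(i)}$ (as $\widehat{\mathcal B}_{\bar s(i)}$ is computed on the complementary half-sample). Hence
\[
\mathbb E\bigl[L(Z_i,Y_i,\widehat{\mathcal B}_{\bar s})\mid \widehat{\mathcal B}_{\bar s}=\mathcal B_{\rm sub}\bigr]
= \mathbb E\bigl[L(Z_i,Y_i,\mathcal B_{\rm sub})\bigr].
\]
By iterated expectations and Assumption~\ref{ass:MAIN}(ii), this equals $\mathbb E\sum_{y}L(Z_i,y,\mathcal B_{\rm sub})f(y\mid Z_i,A_i;\beta_0)$. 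Define
\[
G_L(\beta):=\mathbb E\!\left[\sum_{y\in\mathcal Y} L(Z_i,y,\mathcal B_{\rm sub})\,f(y\mid Z_i,A_i;\beta)-m(Z_i,A_i,\beta)\right],
\]
so the desired inequality $\mathbb E L(Z_i,Y_i,\mathcal B_{\rm sub})-\overline m\le O([\mathrm{diam}(\mathcal B_{\rm sub})]^2)$ reduces to bounding $G_L(\beta_0)$.

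Second, condition \eqref{eq:boundconditionSET} holds pointwise in $(z,a)$ for every $\beta\in\mathcal B_{\rm sub}$, so taking the expectation gives $G_L(\beta)\le 0$ for all $\beta\in\mathcal B_{\rm sub}$. Since $\beta_0\in\mathrm{Conv}(\mathcal B_{\rm sub})$, write $\beta_0=\sum_k\lambda_k\beta_k$ with $\beta_k\in\mathcal B_{\rm sub}$, $\lambda_k\ge 0$, $\sum_k\lambda_k=1$. For $n$ large enough that $\mathrm{diam}(\mathcal B_{\rm sub})<\epsilon$, every $\beta_k$ lies in $B_\epsilon(\beta_0)$, so Assumption~\ref{ass:Bf}(ii) and the uniform bound $|L|\le\max(|b_{\min}|,|b_{\max}|)$ allow differentiation under the integral and provide a uniform bound $\|H_L(\tilde\beta)\|\le C$ for all $\tilde\beta\in B_\epsilon(\beta_0)$, where $H_L$ is the Hessian of $G_L$. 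A second-order Taylor expansion of $G_L$ around $\beta_0$ evaluated at each $\beta_k$ gives
\[
G_L(\beta_k)=G_L(\beta_0)+\nabla G_L(\beta_0)^\top(\beta_k-\beta_0)+\tfrac{1}{2}(\beta_k-\beta_0)^\top H_L(\tilde\beta_k)(\beta_k-\beta_0),
\]
and averaging with the weights $\lambda_k$ annihilates the linear term (because $\sum_k\lambda_k(\beta_k-\beta_0)=0$), producing
\[
G_L(\beta_0)=\sum_k\lambda_k G_L(\beta_k)-\tfrac{1}{2}\sum_k\lambda_k(\beta_k-\beta_0)^\top H_L(\tilde\beta_k)(\beta_k-\beta_0).
\]
Using $G_L(\beta_k)\le 0$ and the Hessian bound, $|G_L(\beta_0)|\le \tfrac{C}{2}[\mathrm{diam}(\mathcal B_{\rm sub})]^2$, which yields the lower-bound inequality. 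The upper-bound inequality follows by applying the same Taylor argument to $G_U(\beta):=\mathbb E[m(Z_i,A_i,\beta)-\sum_y U(Z_i,y,\mathcal B_{\rm sub})f(y\mid Z_i,A_i;\beta)]$.

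The main obstacle is the convex-hull-to-vertices reduction in step two: constraint \eqref{eq:boundconditionSET} only asserts an inequality at the finitely many $\beta\in\mathcal B_{\rm sub}$, yet we need it at the interior point $\beta_0$. The clean resolution is the convex-combination Taylor identity above, where the cancellation of the linear term (thanks to $\beta_0$ being a convex combination of the $\beta_k$) is what delivers the quadratic-in-diameter error rate rather than merely a linear one; the smoothness bounds in Assumption~\ref{ass:Bf}(ii) are exactly what is needed to make this quantitative, and the "sufficiently large $n$" phrase accommodates the requirement that $\mathcal B_{\rm sub}$ fit inside the smoothness neighborhood $B_\epsilon(\beta_0)$.
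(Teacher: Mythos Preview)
Your proof is correct and follows essentially the same approach as the paper: write $\beta_0$ as a convex combination of the points in $\mathcal B_{\rm sub}$, Taylor-expand to second order, and exploit $\sum_k\lambda_k(\beta_k-\beta_0)=0$ to cancel the linear term, leaving an $O\bigl([\mathrm{diam}(\mathcal B_{\rm sub})]^2\bigr)$ remainder. The only differences are cosmetic---you take expectations first and then expand the scalar function $G_L$, whereas the paper expands $f$ and $m$ pointwise in $(z,a)$ before averaging---and your final claim $|G_L(\beta_0)|\le \tfrac{C}{2}[\mathrm{diam}(\mathcal B_{\rm sub})]^2$ is a slight overstatement (your argument only yields the upper bound $G_L(\beta_0)\le \tfrac{C}{2}[\mathrm{diam}(\mathcal B_{\rm sub})]^2$, since $\sum_k\lambda_k G_L(\beta_k)$ can be arbitrarily negative), but only that one-sided bound is needed.
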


Once Lemma~\ref{lem:inflemma} is obtained, then all that is
left to do is to account for the sampling uncertainty when
replacing the expected value over
$ L(Z_i,Y_i,\widehat{\mathcal B}_{\bar{s}})$
and
$ U(Z_i,Y_i,\widehat{\mathcal B}_{\bar{s}})$
by the sample averages in \eqref{DefineBoundEstimatesBf},
analogously to Theorem~\ref{th:ConsistencyKNOWN}.

\begin{theorem}
    \label{th:ConsistencyBf}
    Let Assumptions \ref{ass:MAIN}    and \ref{ass:Bf} hold.
    For $s\in\{1,2\}$ and $\overline{s} = 3-s$, 
    let
    $L(\cdot,\cdot,\cdot)$ and $U(\cdot,\cdot,\cdot)$
    satisfy \eqref{eq:boundconditionSET}
    for all $z \in {\cal Z}$ and $a\in \mathcal A$, and be such that $b_{\rm{min}} \leq L(z,y,\widehat{\mathcal B}_{\bar s}) \leq b_{\rm{max}}$ and $b_{\rm{min}}\leq U(z,y,\widehat{\mathcal B}_{\bar s}) \leq b_{\rm{max}}$.
    Assume further that ${\rm Var}\left[L(Z_i,Y_i, \beta )\right]>0$ and ${\rm Var}\left[U(Z_i,Y_i, \beta )\right]>0$ for all $\beta$ in some neighborhood around $\beta_0$.
    Let  $\overline{m}$, $\widehat L_C$, $\widehat U_C$
    be as defined in \eqref{AverageEffects}
    and \eqref{DefineBoundEstimatesBf},
    let $\widehat \sigma_{L,s}$ and   $\widehat \sigma_{U,s}$ be the
    sample standard deviations over $i \in {\cal I}_s$ of $L(Z_i,Y_i, \widehat {\cal B}_{\bar s} )$
    and $U(Z_i,Y_i, \widehat {\cal B}_{\bar s} )$, respectively.\footnote{
    Formally,
    letting $\widehat{L}_{C,s} = (2/n) \sum_{i\in \mathcal I_s}
    L(Z_i,Y_i, \widehat {\cal B}_{\bar s(i)} ) $
    and 
    $\widehat{U}_{C,s} = (2/n) \sum_{i\in \mathcal I_s}
    U(Z_i,Y_i, \widehat {\cal B}_{\bar s(i)} ) $
    we have
    $\widehat{\sigma}^2_{L,s}
    := (2/n)  \sum_{i\in \mathcal I_s}
    [ L(Z_i,Y_i, \widehat {\cal B}_{\bar s(i)} ) - \widehat L_{C,s} ]^2$
    and
    $\widehat{\sigma}^2_{U,s}
    := (2/n)  \sum_{i\in \mathcal I_s}
    [ U(Z_i,Y_i, \widehat {\cal B}_{\bar s(i)} ) - \widehat U_{C,s} ]^2$.
    }        
Let $\alpha \in [0,1]$.    
Then, as $n \rightarrow \infty$ we have:
\begin{align*}
    \lim_{n\rightarrow \infty }P\left(
        \widehat L_C- \frac{ c_{\alpha/4} \,
        (\widehat{\sigma }_{L,1} + \widehat{\sigma }_{L,2})/2} {\sqrt{n/2}}
        \leq \overline m \leq  \widehat U_C+
        \frac{ c_{\alpha/4}\,
          (\widehat{\sigma }_{U,1} + \widehat{\sigma }_{U,2})/2} {\sqrt{n/2}}
         \right) \geq 1 - \alpha - \gamma  ,
\end{align*}
with $ \displaystyle c_{\alpha/4}=\Phi^{-1}\left(1- \frac \alpha 4 \right)$.
\end{theorem}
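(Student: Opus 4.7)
\textbf{Proof plan for Theorem~\ref{th:ConsistencyBf}.} The approach combines (a) the validity-in-expectation result of Lemma~\ref{lem:inflemma}, (b) a standard CLT applied within each half-sample by exploiting the cross-fitted independence, and (c) a Bonferroni argument to aggregate across two half-samples and two (upper/lower) sides. The split $\alpha/4$ appearing in $c_{\alpha/4}$ and the extra $-\gamma$ in the coverage level both come naturally from this union-bound accounting.

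First I would condition on the event
$E_n := \{\beta_0 \in \mathrm{Conv}(\widehat{\mathcal B}_1)\} \cap \{\beta_0 \in \mathrm{Conv}(\widehat{\mathcal B}_2)\}$,
which by Assumption~\ref{ass:Bf}(i) and a union bound satisfies $\liminf_{n\to\infty} P(E_n) \geq 1-\gamma$. On $E_n$, Lemma~\ref{lem:inflemma} applies to each $s\in\{1,2\}$ with ${\cal B}_{\mathrm{sub}}=\widehat{\mathcal B}_{\bar s}$, giving, conditional on $\widehat{\mathcal B}_{\bar s}$,
\begin{align*}
  \mathbb E\!\left[L(Z_i,Y_i,\widehat{\mathcal B}_{\bar s})\,\big|\,\widehat{\mathcal B}_{\bar s}\right] - R_s
  \;\leq\; \overline m \;\leq\;
  \mathbb E\!\left[U(Z_i,Y_i,\widehat{\mathcal B}_{\bar s})\,\big|\,\widehat{\mathcal B}_{\bar s}\right] + R_s,
\end{align*}
where $R_s = O([{\rm diam}(\widehat{\mathcal B}_{\bar s})]^2)$. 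By Assumption~\ref{ass:Bf}(i) and Markov's inequality, $R_s = o_p(n^{-1/2})$, so the remainder is asymptotically negligible relative to the CLT rate.

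Next, for each $s$, the key observation of the cross-fitted construction is that, conditional on the \emph{other} half-sample (and hence on $\widehat{\mathcal B}_{\bar s}$), the summands $\{L(Z_i,Y_i,\widehat{\mathcal B}_{\bar s}):i\in\mathcal I_s\}$ are i.i.d.\ with bounded range $[b_{\min},b_{\max}]$. Thus a conditional Lindeberg--L\'evy CLT yields
$\sqrt{n/2}\,\bigl(\widehat L_{C,s}-\mathbb E[L(Z_i,Y_i,\widehat{\mathcal B}_{\bar s})\mid \widehat{\mathcal B}_{\bar s}]\bigr) \xrightarrow{d} N(0,\sigma_{L,s}^2)$ conditionally, with $\widehat\sigma_{L,s}^2$ consistent for $\sigma_{L,s}^2>0$ (the latter by Assumption~\ref{ass:MAIN}(iii) bounding range, Assumption~\ref{ass:MAIN2}/\ref{ass:Bf}(ii) for smoothness near $\beta_0$, and the variance positivity hypothesis). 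Combined with the previous step, one obtains for each $s$ and each side:
\begin{align*}
  P\!\left(\widehat L_{C,s} - \tfrac{c_{\alpha/4}\widehat\sigma_{L,s}}{\sqrt{n/2}} \leq \overline m \;\Big|\; E_n\right) \;\geq\; 1 - \alpha/4 + o(1),
\end{align*}
and analogously for $\widehat U_{C,s}$ on the upper side.

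Finally I would combine the four one-sided half-sample statements by Bonferroni: averaging the two lower (resp.\ upper) half-sample statements gives
$\widehat L_C = \tfrac12(\widehat L_{C,1}+\widehat L_{C,2}) \leq \overline m + \tfrac{c_{\alpha/4}(\widehat\sigma_{L,1}+\widehat\sigma_{L,2})/2}{\sqrt{n/2}}$
(and symmetrically for the upper side) with joint probability at least $1-\alpha$ conditional on $E_n$. Unconditioning via $P(E_n)\geq 1-\gamma$ yields the stated $1-\alpha-\gamma$ bound. The main obstacle, in my view, is the careful conditional CLT argument: one must ensure that the variance estimators $\widehat\sigma_{L,s}^2$ and $\widehat\sigma_{U,s}^2$ are consistent uniformly over $\widehat{\mathcal B}_{\bar s}$ in a shrinking neighborhood of $\beta_0$ (using Assumption~\ref{ass:Bf}(ii) and the boundedness in $[b_{\min},b_{\max}]$), and that one can legitimately combine the conditional-on-$\widehat{\mathcal B}_{\bar s}$ CLT across the two half-samples via Bonferroni --- the remaining bookkeeping of $R_s=o_p(n^{-1/2})$ and the union bound for $E_n$ is then routine.
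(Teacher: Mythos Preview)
Your proof plan follows essentially the same strategy as the paper: invoke Lemma~\ref{lem:inflemma} on the event that $\beta_0$ lies in the convex hull of each half-sample set $\widehat{\mathcal B}_{\bar s}$, apply a within-half-sample normal approximation for the Studentized deviation $\sqrt{n/2}\,(\widehat L_{C,s}-\overline L(\widehat{\mathcal B}_{\bar s}))/\widehat\sigma_{L,s}$, and aggregate via a four-way Bonferroni split (two sides $\times$ two half-samples) plus a separate $\gamma$ loss from the $\beta_0$-coverage event. The paper handles the coverage events $\{\beta_0\in\mathrm{Conv}(\widehat{\mathcal B}_{\bar s})\}$ separately for each $s$ (losing $\gamma/2$ each), whereas you intersect them up front into $E_n$; this is immaterial.

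The one substantive difference is in the normal-approximation step. The paper does not rely on a conditional Lindeberg--L\'evy CLT but instead invokes a Berry--Esseen bound for Student's statistic (Theorem~1.1 of \citealt{BG96}), which gives a finite-sample bound
\[
\sup_{c\in\mathbb R}\left|\,P\!\left(\sqrt{n/2}\,\frac{\widetilde L_{C,s}}{\widehat\sigma_{L,s}}<c\ \Big|\ \mathcal G_{\bar s}\right)-\Phi(c)\,\right|=O(n^{-1/2}),
\]
uniform in $c$ and with a constant depending only on conditional moments (which are bounded because the summands lie in $[b_{\min},b_{\max}]$). This uniformity is what allows the paper to (a) plug in the random argument $c_{\alpha/4}+\sqrt{n/2}\,\delta_{L,\bar s}/\widehat\sigma_{L,s}$, (b) Taylor-expand $\Phi$ around $c_{\alpha/4}$, and (c) take unconditional expectations over $\mathcal G_{\bar s}$ to conclude $P(\cdot)=\alpha/4+o(1)$, using Assumption~\ref{ass:Bf}(i) to control $\mathbb E|\delta_{L,\bar s}|$. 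Your ``conditional CLT'' is the right intuition, and you correctly flag it as the main obstacle; the Berry--Esseen route is precisely the tool that resolves it, since otherwise one would need a separate argument for convergence that is uniform in the random conditioning set $\widehat{\mathcal B}_{\bar s}$ (which itself varies with $n$).
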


Theorem~\ref{th:ConsistencyKNOWN}
 demands  that  equation \eqref{eq:boundcondition}  holds, but
does not specify any explicit construction
of the bound functions. Analogously, Theorem~\ref{th:ConsistencyBf} requires that equation 
 \eqref{eq:boundconditionSET} hold, but again does not 
 specify any explicit construction of the bounds.
In order to actually construct the bounds we use the methods
described earlier, but we replace the constraint
 \eqref{eq:boundcondition}
by  \eqref{eq:boundconditionSET}. Specifically, 
the program in display \eqref{eq:GeneralOptimization}
then gets modified as follows:
For any given $z \in {\cal Z}$ and any finite set ${\cal B}_{\rm{sub}} \subset {\cal B}$
with $\overline \beta = \left| {\cal B}_{\rm{sub}} \right|^{-1} \sum_{\beta \in {\cal B}_{\rm{sub}}} \beta$
we can choose $L(z,y,{\cal B}_{\rm{sub}})=\ell(y)$ and $U(z,y,{\cal B}_{\rm{sub}})=u(y)$
as solutions to the following optimization problem:\footnote{Here, the choice of evaluating the objective at $\overline\beta$ is for convenience. Alternative choices, such as $\sum_{\beta \in {\cal B}_{\rm{sub}}} Q(\ell, u, z, \beta)$, are equally valid since the constraints (not the objective function) guarantee validity of the bounds.}
\begin{align}
& 
\min_{\ell,u \, : \,  \mathcal{Y} \rightarrow \mathbb{R} }\,   Q(\ell(\cdot),u(\cdot),z,\overline \beta)
\notag \\
 & \text{subject to }   \label{LinearProgramBf}   \\
&  \qquad  \forall y\in \mathcal{Y}:\, \,b_{\min }\leq \ell(y)\leq 
u(y)\leq b_{\max }  \notag \\
& \ \text{and} \; \;
\forall \beta \in {\cal B}_{\rm{sub}}:\, \, 
\forall a\in \mathcal{A}:\, \, \sum_{y\in \mathcal{Y}}\ell (y) \, f(y \, | \, z, a;\beta )
 \leq m(z,a,\beta )\leq \sum_{y\in \mathcal{Y}}u(y) f(y \, | \, z, a;\beta ).    \notag
\end{align}
By choosing the objective function
$Q(\ell(\cdot),u(\cdot),z,\overline \beta)$ as in
\eqref{eq:linprog1} or \eqref{eq:linprog2}, we again have to solve a linear 
program to obtain the bounds.

\section{Simulation evidence}\label{sect:simulations}

In this part we investigate the small sample behavior of the proposed bounds and confidence bands. We focus on the static logit and random coefficient logit models. The setting largely follows Section \ref{sect:setcomparison}. In particular, we use the DGPs in \eqref{eq:idlogitdisc} and \eqref{sim:rc1}-\eqref{sim:rc2} with a single discrete covariate, and focus on the same average effects. The main difference is that we now estimate $\beta_0$ in the static logit model, and also provide confidence bands. The results, presented in Figures \ref{fig:sim_logit_pb}-\ref{fig:sim_randc}, provide the population average effect, and the cross-replication averages of estimated bounds and 95\% confidence bands.

\begin{figure}[tb!]
	\includegraphics[width=0.9\linewidth]{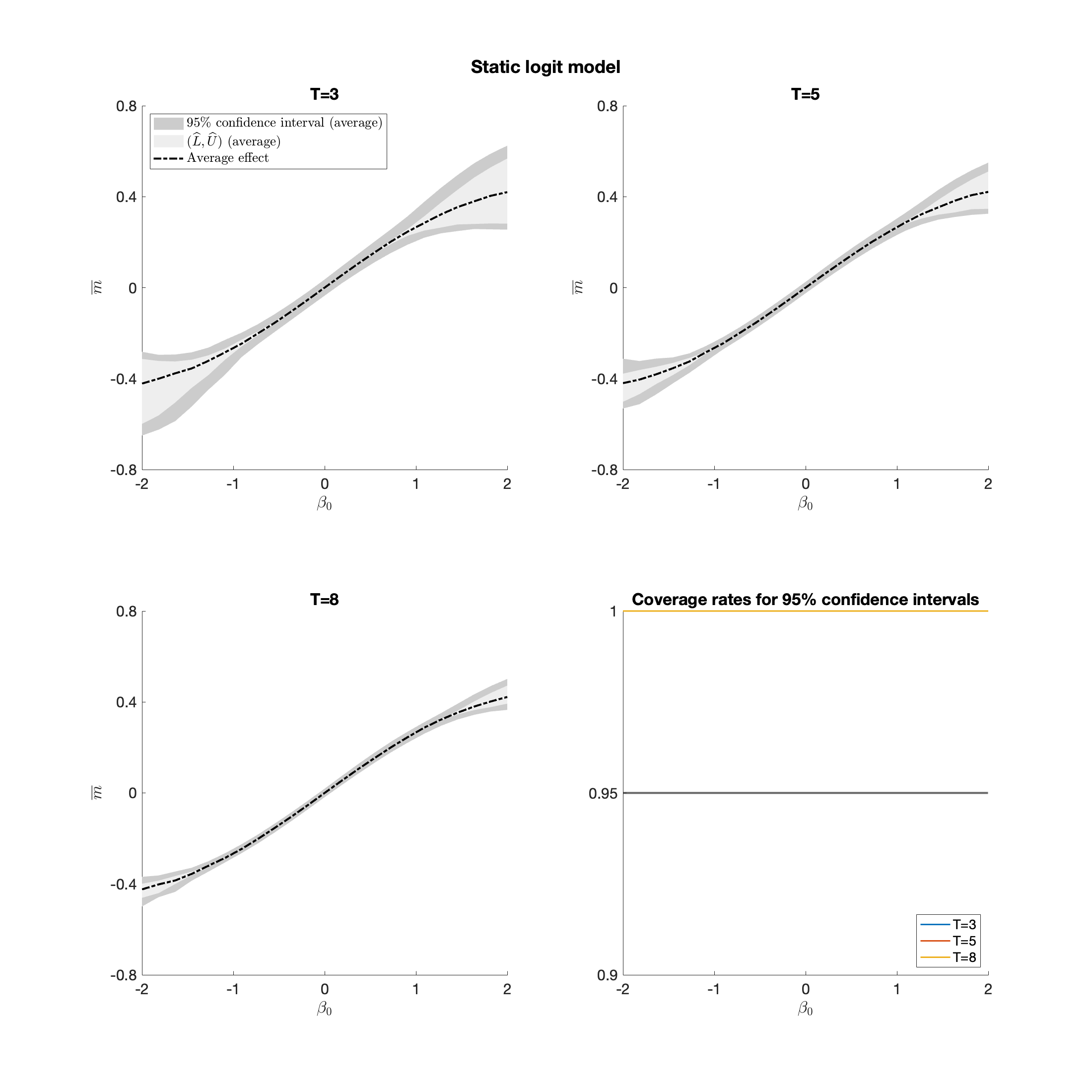}
	\caption{Simulation results for the static logit model with a single discrete covariate: $Y_{it} = 1\left\{ X_{it}\beta + A_i \geq \varepsilon_{it} \right\}$ where $\varepsilon_{it} \sim {\rm Logit} (0,1)$, $A_i \sim N(0,1)$, $X_{it} = 1\left\{ A_i \geq \eta_{it} \right\}$ and $\eta_{it} \sim N(0,1)$. Average effects are based on \eqref{marg1} with $(x_1,x_2)=(1,0)$. Results for each $\beta_0 \in [-2,2]$ are based on 1000 replications of panels with cross-section size $n=5000$. For each replication, $\widehat{L}$ and $\widehat{U}$ are obtained by the linear program of Section \ref{sec:uniflp}, using the conditional likelihood estimator $\widehat \beta$ of $\beta_0$. Confidence intervals are based on the inference method of Section \ref{sec:perturbed}, using $\gamma=0.0001$. $\mathcal B_{1-\gamma}$ is approximated by a grid of 5000 equidistant points. Reported confidence intervals and $(\widehat{L},\widehat{U})$ are cross-replication averages. The lower right panel presents the coverage rates.}
	\label{fig:sim_logit_pb}
\end{figure}

We first consider the static logit model. $\beta_0$ is estimated using the conditional likelihood method. For inference we use the two inference methods proposed in Sections \ref{sec:perturbed} and 
\ref{sec:alternativeinf}. In either case, we consider 1000 replications of panels of size $n=5000$ and $T\in \{ 3,5,8 \}$. For $\mathcal{A}_g$, we use a grid of 100 equidistant points between $-5$ and $5$. 

The results using the inference method of Section 
\ref{sec:perturbed} are based on $\gamma = 0.0001$, and $\mathcal B_{1-\gamma}$ is approximated by a grid of 5000 equidistant points on $\mathcal B_{1-\gamma}$. Outer bounds for this case are obtained by the uniform linear program of Section \ref{sec:uniflp}. Results are presented in Figure \ref{fig:sim_logit_pb}.
For moderate $T$, which is the main focus of this study, both the bounds and the confidence bands are quite tight. Interestingly, this is despite the fact that the bounds are based on a uniform linear program. In all cases, the confidence bands yield the correct sign for the average effect. The coverage rates of confidence bands are, not surprisingly, conservative. This is expected in partially identified settings and is acceptable given that the bands remain informative.

\begin{figure}
	\includegraphics[width=0.9\linewidth]{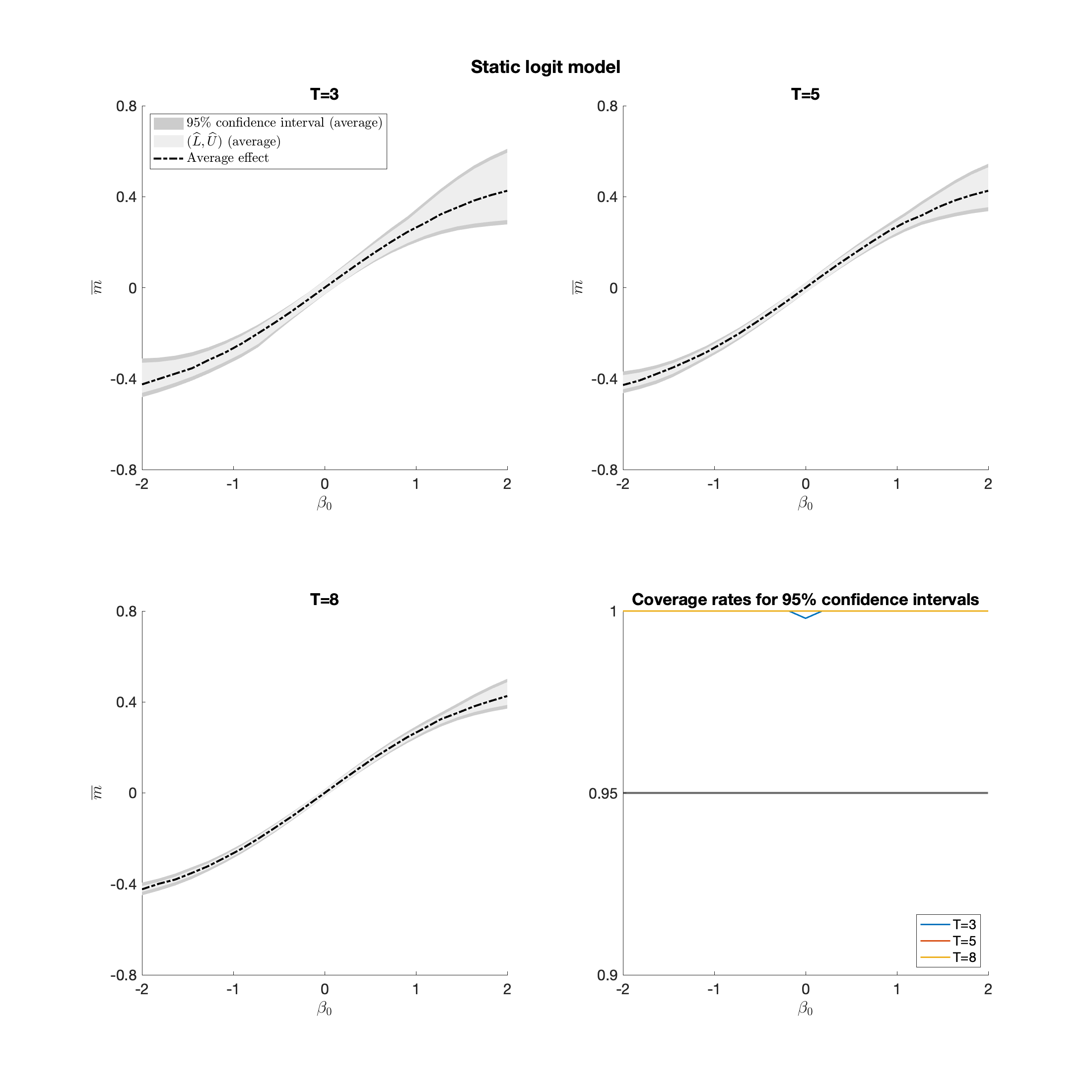}
	\caption{Simulation results for the static logit model with a single discrete covariate: $Y_{it} = 1\left\{ X_{it}\beta + A_i \geq \varepsilon_{it} \right\}$ where $\varepsilon_{it} \sim {\rm Logit} (0,1)$, $A_i \sim N(0,1)$, $X_{it} = 1\left\{ A_i \geq \eta_{it} \right\}$ and $\eta_{it} \sim N(0,1)$. Average effects are based on \eqref{marg1} with $(x_1,x_2)=(1,0)$. Results for each $\beta_0 \in [-2,2]$ are based on 1000 replications of panels with cross-section size $n=5000$. For each replication, outer bounds and confidence bands are obtained by the methods outlined in 
    Section~\ref{sec:alternativeinf}, using the conditional likelihood estimator $\widehat \beta$ of $\beta_0$. Confidence intervals are based on $\alpha = \frac{2}{3} \times 0.05$ and $\gamma = \frac{1}{3} \times 0.05$. Reported confidence intervals and $(\widehat{L},\widehat{U})$ are cross-replication averages. The lower right panel presents the coverage rates.}
	\label{fig:sim_logit_nb}
\end{figure}

We next consider the  inference approach of Section \ref{sec:alternativeinf}, the results of which are presented in Figure~\ref{fig:sim_logit_nb}. Confidence 
bands are based on $\alpha = \frac{2}{3} \times 0.05$ and $\gamma = \frac{1}
{3} \times 0.05$.\footnote{The choice of $\alpha = 2\gamma $ is not crucial and was only 
imposed to compensate for the fact that the confidence interval ${\rm Conv}(\widehat 
{\mathcal B} _s)$ is subject to one Bonferroni split, whereas the interval for estimated 
outer bounds is subject to two.}
Both the confidence bands and the outer bounds are based on the linear program defined in \eqref{LinearProgramBf}.
Relative to the inference method of Section \ref{sec:perturbed}, there are two differences: first, the confidence bands are overall visibly closer to the estimated bounds, across all $T$. This is not surprising given that the inference method of Section \ref{sec:perturbed} is based on the infimum/supremum bands. Second, while the outer bounds improve with $T$, they are not as tight as the bounds produced by the linear program in \eqref{eq:GeneralOptimization}. This likely results from \eqref{LinearProgramBf} incorporating the uncertainty due to $\widehat{\beta}$ in outer bound estimation (as opposed to \eqref{eq:GeneralOptimization} which incorporates the same in the inference stage).

\begin{figure}
	\includegraphics[width=0.9\linewidth]{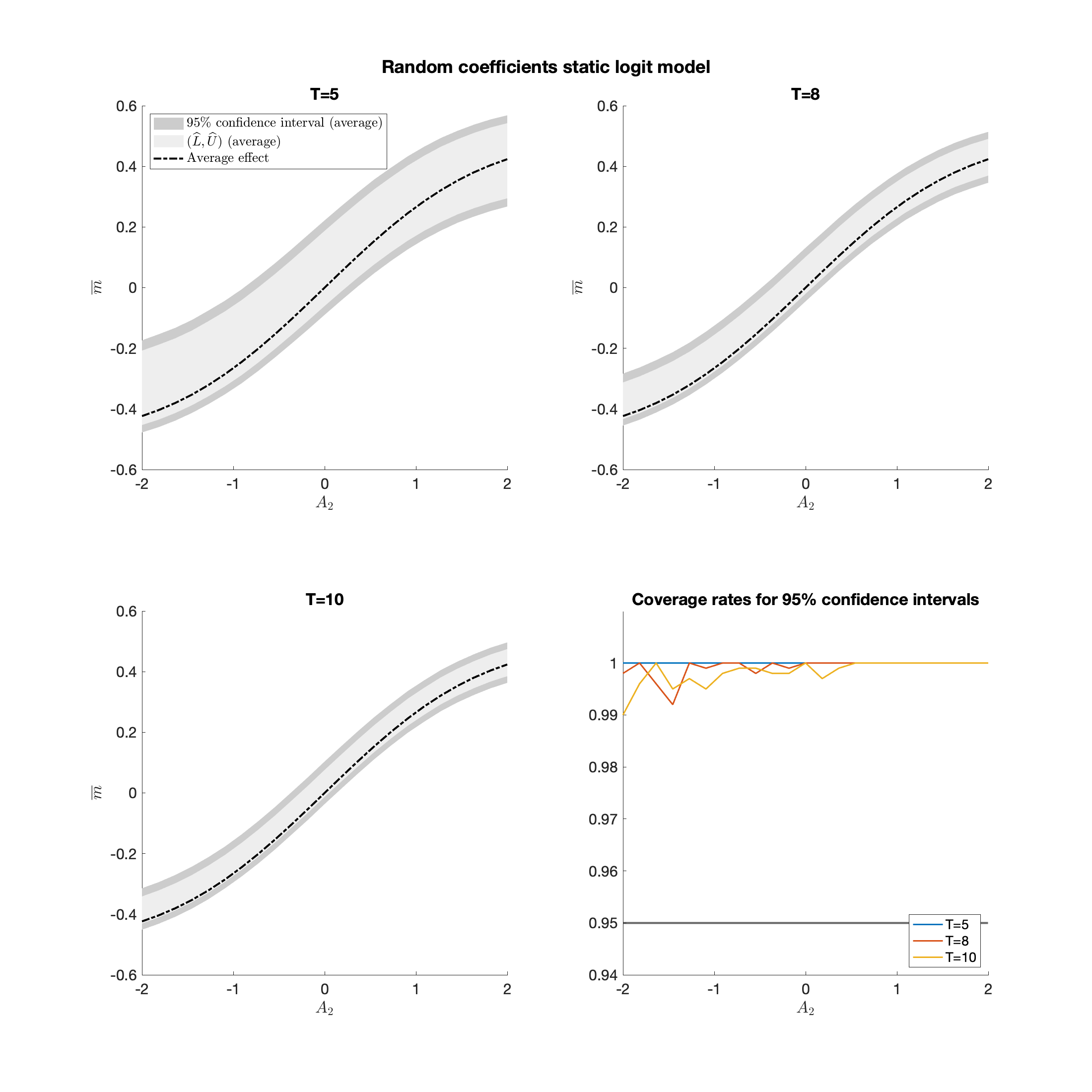}
	\caption{Simulation results for the random coefficient logit model with a single discrete covariate: $Y_{it} = 1\left\{ X_{it} A_{2,i} + A_{1,i} \geq \varepsilon_{it} \right\}$, where $\varepsilon_{it} \sim {\rm Logit}(0,1)$, $A_{1,i} \sim N(0,1/\sqrt{2})$, $A_{2,i} \sim N(A_2,1/\sqrt{2})$, 	$X_{it} = 1\left\{ A_{1,i} \geq \eta_{it} \right\}$ and $\eta_{it} \sim N(0,1)$. Average effects are based on \eqref{margrc}. Results for each $A_2 \in [-2,2]$ are based on 1000 replications of panels with cross-section size $n=1000$. For each replication, $\widehat{L}$ and $\widehat{U}$ are obtained by the linear program in \eqref{eq:UnifOptimization} Confidence intervals are based on Theorem \ref{th:ConsistencyKNOWN}. Reported confidence intervals and $(\widehat{L},\widehat{U})$ are cross-replication averages. The lower right panel presents the coverage rates.}
	\label{fig:sim_randc}
\end{figure}

We move to the random coefficient static logit example. Figure \ref{fig:sim_randc} presents results based on 1000 replications of panels of size $n=1000$ and $T\in \{ 3,5,10  \}$. We construct $\mathcal{A}_g$ using 50 equidistant grid points between $-5$ and $5$ for $A_{1,i}$, and between $-7$ and $7$ for $A_{2,i}$, leading to 2,500 grid points in total. We note that the average effects and outer bounds are the same as in Section \ref{sect:setcomparison}, since no parameter estimation is involved in this setting. The new result is the confidence bands, which are based on Theorem \ref{th:ConsistencyKNOWN}. On average the confidence bands are reasonably close to the outer bounds.

A general observation across the three simulation exercises is that while the confidence bands are generally informative, the coverage rates are  conservative.  
Of course, since we are looking at coverage rates for the
true value of the average effects (as opposed to the identified set), it is well-known that coverage will generally be conservative in partially identified settings (e.g., see \citealt{ImbensManski04} and \citealt{Stoye21}).\footnote{{\citet{ImbensManski04} provide a generic method for obtaining narrower confidence bands on sets which can also be used here. In results not reported here we nevertheless observed that the decrease in the width of confidence bands was quite moderate, with little or no change in the actual coverage rates.}}

Finally, we note that the total width of the confidence bands can be decomposed into two components: (i) the gap between the outer bounds and the identified set, which measures the cost of using outer bounds rather than sharp identification, and (ii) the gap between the confidence bands and the outer bounds, which measures the cost of estimation uncertainty. This decomposition helps assess how much of the total uncertainty is due to our methodological choice versus statistical imprecision. We report these results in Section~\ref{sec:addsimdetails} of the Supplementary Appendix.

\section{Empirical Analysis}\label{sect:empiricalanalysis}

 We consider an empirical analysis of female labor force participation, using the National Longitudinal Survey of Youth (NLSY) 1979 dataset.
 Our sample consists of data on women who were married throughout the sample and who were not in active forces or going to school.\footnote{An individual is classified as ``in the labor force'' if her status was recorded as \textit{working}, \textit{with job not at work} or \textit{unemployed}. Individuals are considered as not in the labor force if their recorded status was \textit{keeping house}, \textit{unable to work} or \textit{other}.}  Also, we only include individuals who were observed at all periods under consideration. 

First, we consider a random coefficient logit specification\footnote{Labor force participation is often modeled with state dependence \citep[e.g.,][]{hyslop1999state}. We focus on static models here for comparability with existing methods; see Section~\ref{sect:setcomparison} for simulation evidence on dynamic specifications.}:
\begin{equation}
	LFP_{it} = 1\left\{\alpha_i + \beta_i \, kids3_{it} \geq \varepsilon_{it} \right\}, \label{emp:rc}
\end{equation}
where, for individual $i$ and at time $t$, $LFP_{it}$ is the labor force participation indicator whereas $kids3_{it}$ is a binary variable which equals one if the individual has at least one child below the age of three. This is almost identical to the example considered by \citet{CFHN13}, except that they assume a homogeneous coefficient $\beta$ for all individuals. Our objective is to obtain a confidence interval on the average effect 
\begin{equation*}
	\mathbb{E} \left[ 
	P(LFP_{it}=1|kids3_{it}=1, \alpha_i,\beta_i) 
	- 
	P(LFP_{it}=1|kids3_{it}=0, \alpha_i,\beta_i)
	\right].
\end{equation*}
Our sample period for this analysis covers all even years from 1986 to 1998, which yields data on 929 individuals over seven years. For comparison, we also report the average effects based on the fixed effects logit (FE logit) and probit (FE probit) models, as well as the linear fixed effects model. We note that all these alternatives impose homogeneity of $\beta_i$, and calculate the average effects using estimated $(\alpha_i,\beta)$. Hence, they provide a point-estimate for the average effect. We also use (i) the bias-corrected logit (BC logit) and probit (BC probit) methods, which analytically correct $\widehat{\beta}$ for the incidental parameter bias, following \citet{cruz2017bias}, and (ii) the split-panel jackknife method (SPJ probit and SPJ logit) of \citet{DhaeneJochmans2015}, which directly corrects average marginal effects rather than $\widehat{\beta}$. We note that none of these alternative methods are designed for short-$T$ samples where average effects are not necessarily point-identified.
For all methods under consideration, we provide the 95\% confidence intervals. For the outer bounds this is obtained by using the normal approximation of Theorem \ref{th:ConsistencyKNOWN}.

\afterpage{
\begin{landscape}

\begin{table}[tbp] 
\setlength\tabcolsep{2pt}
\begin{tabular*}{\textwidth}{@{\extracolsep{4pt}}*{9}{c}} 
\multicolumn{9}{c}{$ LFP_{it} = 1\left\{\alpha_i + \beta_i \, \, kids3_{it} \geq \varepsilon_{it} \right\} $, \quad $n=929$, \quad $T=7$} 
\\
\cline{1-9}
\\
& heterogeneous $\beta_i$ & \multicolumn{7}{c}{$\beta_i = \beta$ }
\\
\cline{2-2}  \cline{3-9}
\\
& $(\widehat{L},\widehat{U})$ & FE logit & BC logit & SPJ logit & FE probit & BC probit & SPJ probit & Linear model 
\\ 
$kids3$ 
& $\underset {[-.272 \, , \, -.044]} {-.245 \, ; \, -.067}$ 
& $\underset {[ -.142 \, , \,  -.102]} { -.122}$   
& $\underset {[ -.153 \, , \,  -.113]} {-.133}$ 
& $\underset {[-.179 \, , \, -.139]} {-.159}$ 
& $\underset {[ -.142 \, , \,  -.102]} {-.122}$ 
& $\underset {[-.151 \, , \, -.112]} {-.132}$ 
& $\underset {[-.178 \, , \, -.138]} {-.158}$ 
& $\underset {[-.144 \, , \, -.104]} {-.124}$  
\\ 
\\ 
\\ 
\multicolumn{9}{c}{$ LFP_{it} = 1\left\{\alpha_i + \beta \, \, kids3_{it} + \gamma \, \, educ_{it} + \delta \, \, \ln(spouseinc_{it}) \geq \varepsilon_{it} \right\} $, \quad $n=993$, \quad $T=5$} 
\\ 
\cline{1-9}
\\ 
& $(\widehat{L},\widehat{U})$ & FE logit & BC logit & SPJ logit & FE probit & BC probit & SPJ probit & Linear model  
\\ 
\\ 
$kids3$ 
& $\underset {[-.169 \, , \, -.045]} {-.101 \, ; \, -.098}$     
& $\underset {[ -.118 \, , \, -.076]} {-.097}$  
& $\underset {[ -.141 ,  -.099]} { -.120 }$   
& $\underset {[ -.145 \, , \, -.103]} {-.124}$ 
& $\underset {[ -.116 \, , \,   -.076]} {-.096}$ 
& $\underset {[-.137 \, , \,  -.096]} { -.116}$ 
& $\underset {[-.143 \, , \,  -.102]} {-.123}$ 
& $\underset {[ -.117 \, , \,  -.072]} {-.095}$ 
\\ 
\\ 
$educ$ 
& $\underset {[-.036 \, , \, .078]} {.018 \, ; \, .020}$         
& $\underset {[-.024 \, , \, .060]} {.018}$   
& $\underset {[ -.020  ,  .063]} {.022}$      
& $\underset {[ -.024 \, , \, .059]} {.017}$ 
& $\underset {[ -.023 \, , \, .056]} { .016}$ 
& $\underset {[-.020 \, , \, .059]} {.020}$ 
& $\underset {[-.026 \, , \, .054]} {.014}$ 
& $\underset {[-.017 \, , \, .039]} { .011}$ 
\\ 
\\ 
$\ln(spouseinc)$ 
& $\underset {[-.112 \, , \, .077]} {-.101 \, ; \,-.040}$ 
& $\underset {[-.165   ,  .051]} {-.057}$  
& $\underset {[-.177  ,  .039]} {-.069}$  
& $\underset {[-.187 \, , \, .029]} {-.079}$ 
& $\underset {[-.137 \, , \, .042]} { -.047}$ 
& $\underset {[-.146 \, , \,   .033]} { -.057}$ 
& $\underset {[-.149 \, , \,  .030]} {-.059}$ 
& $\underset {[-.065 \, , \,  -.020]} { -.042}$ 
\end{tabular*}
\caption{Empirical analysis results. For the average effects of interest in each case, see the discussion in Section \ref{sect:empiricalanalysis}. $(\widehat{L},\widehat{U})$ are the outer bounds. FE logit and FE probit are the fixed effects panel logit and panel probit models. BC logit and BC probit are the bias-corrected versions, which analytically correct for the incidental parameter bias in estimating $\beta_0$. SPJ logit and SPJ probit, on the other hand, are the average effects obtained using the split-panel jackknife method. Linear model is the linear panel fixed effects model. Numbers in brackets are the 95\% confidence bands. All methods other than the outer bounds provide point estimates of the average effects. In addition, on the top panel these alternative methods impose homogeneity of $\beta_i$.}
\label{tbl:emp}
\end{table}

\end{landscape}
}

Results for this first illustration are reported in the top panel of Table \ref{tbl:emp}. All methods agree that having at least one child younger than three has a negative impact on labor force participation. This is also in line with the results obtained by \citet{CFHN13} who consider a shorter sample, covered by our dataset (see their Table III). The confidence intervals for the outer bounds are wider than the rest, but this is normal as it is the only method that allows for heterogeneity of $\beta_i$. Heterogeneity of $\beta_i$ is quite likely, as the effect of having a child younger than three will vary depending on various conditions. For example, families with higher income will have easier (and better) access to child care. Geographical proximity of grandparents (who can, at least from time to time, provide child care) is also likely to have an effect on $\beta_i$. Moreover, the effect of having children younger than three may differ depending on the actual number of children. The wider confidence bands provided by our method reflect all such considerations.\footnote{An alternative approach to accommodating heterogeneity in $\beta_i$ is to use finite discrete mixtures \citep{BC10, BC14}. However, with $T=7$ periods, such models can identify at most approximately $T/2 \approx 3$ mixture components \citep{BC13}, which may not fully capture the richness of heterogeneity in labor force participation responses.}

In the second illustration, we consider the static logit specification with a richer set of covariates:
\begin{equation}
	LFP_{it} = 1\left\{\alpha_i + \beta \, kids3_{it} + \gamma \, educ_{it} + \delta  \,  \ln(spouseinc_{it}) \geq \varepsilon_{it} \right\}, \label{emp:stat}
\end{equation}
where $educ_{it}$ is the highest completed grade (as of May 1 of the survey year) and $spouseinc_{it}$ is the total income of the spouse from wages and salary in past calendar year. The sample for this exercise covers all even years from 1990 to 1998. We do not include individuals whose spouse had zero income at any point during this period. Average effects for the covariates $kids3_{it}$ and $educ_{it}$ are based on \eqref{marg1}, where we use $(x_1,x_2)=(1,0)$ and $(x_1,x_2)=(educ_{it}+1,educ_{it})$, respectively. Average effects for log spouse income are calculated using $\eqref{marg2}$ with $x_{k,it}=\ln(spouseinc_{it})$. The outer bounds are obtained using the uniform linear program of Section \ref{sec:uniflp} whereas the inference approach of Section 
\ref{sec:alternativeinf} is used to generate the confidence bands.\footnote{We first obtain the confidence bands across a selection of $\alpha$ and $\gamma$ such that $\alpha+\gamma=0.05$, and then report the shortest confidence interval among these.}

Results are reported in the bottom panel of Table \ref{tbl:emp}. All methods agree that the average effect of $kids3$ is negative. For $educ$, all confidence bands are ambiguous about the size of the effect. However, for all methods these bands are mostly on the positive side. In addition, estimated average effects and outer bounds all point to a positive effect of $educ$ on labor force participation.  We note that although almost all the point estimates for the average effects with respect to $kids3$ and $educ$ are outside the respective outer bounds, they are all covered by the confidence bands for the outer bounds.
Finally, for log of spouse income, confidence bands by all alternatives (other than the linear model) are inconclusive about the sign of the average effect, though they mostly lie on the negative side. Interestingly,  in this particular case the confidence bands for all methods other than the linear probability model lie partially outside the confidence intervals for the outer bounds. This is not necessarily surprising, given that none of the alternative methods considered here are designed to work in short samples.

\section{Conclusion}\label{sect:conclusion}

In this paper, we have introduced a new method for estimating bounds on average effects in discrete choice panel data models with fixed effects, including two approaches for obtaining asymptotically valid confidence intervals on the average effects. 
For realistic models and sample sizes, 
inference based on our outer bounds is easier and more robust than inference based on the sharp identified set. A key strength of our approach is its broad applicability: it is suitable for models with both discrete and continuous covariates, and it can be adapted for a variety of static and dynamic panel models.

We have focused here primarily on the case where the common model parameters $\beta_0$
are point-identified and can be estimated at the parametric rate. In the Supplementary Appendix, we show how our approach extends to cases where the structural parameters are set-identified, with simulation evidence for the static probit model.

Another potential extension is to models with continuous outcomes $Y_{it}$, where the sums over $\mathcal{Y}$ in our linear programs would be replaced by integrals. In principle, such integrals could be approximated by sums over a discretized support, similar to how we approximate the constraints over $\mathcal{A}$ by a grid $\mathcal{A}_g$. We leave the exploration of this extension to future work.

\setstretch{1.39}
 \setlength{\bibsep}{5pt plus 0.3ex}

\clearpage

\setcounter{page}{1}
\renewcommand{\thepage}{S\arabic{page}}

\onehalfspacing
\begin{appendix}

\begin{center}
    \Large{Online Appendix
    \\
    \vspace{20pt}
    Bounds on Average Effects in Discrete Choice Panel Data Models
    \\
    \vspace{20pt}
    Cavit Pakel
    \&
    Martin Weidner
    \vspace{20pt}
    \\
    Not Intended for Publication}
\end{center}

\section{Mathematical appendix}

\begin{proof}[Proof of Theorem \ref{th:ConsistencyKNOWN}]
     \underline{\# Part (i):}
     Define $U_{i}=U(Z_i, Y_i, \beta _{0})$ and $\overline{U} = \mathbb{E}  \left( U_{i} \right)$. 
     We have 
     $ {{\rm Var}(U_i)} < \infty $ since $U_i$ is, by design, uniformly bounded. Then, by Assumption \ref{ass:MAIN} and Chebychev's inequality, for any $\varepsilon >0$ we have
    \begin{align*}
    P\left\{ | \widehat{U}-\overline{U} | \geq \varepsilon \right\}
    &= 
    P\left\{ \left[ \frac 1 n \sum_{i=1}^n \left( U_i-\overline{U}  \right)\right]^2 \geq \varepsilon^2 \right\}
    \\
    &\leq  \frac 1 {n^2 \, \epsilon^2} \sum_{i=1}^n \sum_{j=1}^n
      \mathbb{E}\left[  \left( U_i-\overline{U} \right) 
      \left( U_j-\overline{U} \right) \right]
     =\frac { {\rm Var}(U_i)} {n \, \epsilon^2} = O\left( \frac 1 n \right)  .
    \end{align*}%
    We therefore have $\widehat{U}-\overline{U}=O_{p}(n^{-1/2})$.
    According to \eqref{eq:popbound} we have
    $\overline m \leq \overline U$, and therefore
    $\overline m \leq \widehat{U} + O_{p}(n^{-1/2}) $.
    By analogous arguments we obtain $\widehat{L} + O_{p}(n^{-1/2}) \leq \overline{m}$.  
    \medskip

    \underline{\# Part (ii):}    
     Define   $\sigma_{U}^{2}={\rm Var}[U(Z_i,Y_i, \beta_0 )]$.
     By the Weak Law of Large Numbers we have $\widehat{\sigma}_U^2 \to_p \sigma_U^2$.
    Remember that $\sigma_{U}^{2}>0$, by assumption. Then, by the Lindeberg–L\'{e}vy CLT it follows that
    \begin{equation*}
        \frac{1}{\sqrt{n}}\sum_{i=1}^{n}
        U_i - \overline{U} 
        \overset{d}{\rightarrow }\mathcal{N}(0,\sigma_{U}^{2}),
    \end{equation*}
    and also using Slutsky's theorem we thus obtain  
   \begin{eqnarray}
         \lim_{n\rightarrow \infty }P\left( \overline{U}\leq \widehat{U}+ \frac{ c_{\alpha/2}\widehat{\sigma }_{U}} {\sqrt{n}}\right) =\Phi     (c_{\alpha/2}).
         \label{LimitUasy}
    \end{eqnarray}
    By analogous arguments, 
    \begin{eqnarray}
         \lim_{n\rightarrow \infty }P\left( \overline{L}\geq  \widehat{L}-\frac{c_{\alpha/2}\widehat{\sigma }_{L} } {\sqrt{n}}\right) =  \Phi     (c_{\alpha/2}).
         \label{LimitLasy}
    \end{eqnarray}
    Next, notice that
    \begin{align}
    P\left(
        \widehat{L}- \frac{ c_{\alpha/2} \, \widehat{\sigma }_{L}} {\sqrt{n}} \leq \overline m \leq   \widehat{U}+
        \frac{ c_{\alpha/2}\,
        \widehat{\sigma }_{U}} {\sqrt{n}}\right)
        &=
        P\left( \overline{m}\geq \widehat{L}-c_{\alpha/2}\frac{\widehat{\sigma }_{L}}{\sqrt{n}} \; \; \cap \; \; \overline{m}\leq \widehat{U}+c_{\alpha/2}\frac{\widehat{\sigma }_{U}}{\sqrt{n}}\right)
        \notag
    \\
        &\geq 
        P\left( \overline{L}\geq \widehat{L}-c_{\alpha/2}\frac{\widehat{\sigma }_{L}}{\sqrt{n}} \; \; \cap \; \; \overline{U}\leq \widehat{U}+c_{\alpha/2}\frac{\widehat{\sigma }_{U}}{\sqrt{n}}\right) 
        \notag
    \\
        &=1-P\left( \overline{L}\leq \widehat{L}-c_{\alpha/2}\frac{\widehat{\sigma }_{L}}{\sqrt{n}} \; \; \cup \; \; \overline{U}\geq \widehat{U}+c_{\alpha/2}\frac{\widehat{\sigma }_{U}}{\sqrt{n}}\right) 
        \notag
	\\
	    &\geq 1-P\left( \overline{L}\leq \widehat{L}-c_{\alpha/2}\frac{\widehat{\sigma }_{L}}{\sqrt{n}}\right) -P\left( \overline{U}\geq \widehat{U}+c_{\alpha/2}\frac{\widehat{\sigma }_{U}}{\sqrt{n}}\right),
    \label{LimitR}
    \end{align}
    where in the first inequality we have used $\overline{L} \leq \overline{m} \leq \overline{U}$.
    Using \eqref{LimitUasy} and \eqref{LimitLasy} in 
    \eqref{LimitR}, and then taking limits, we finally obtain
    \begin{align*}
        \lim_{n\rightarrow \infty }P\left(
        \widehat{L}- \frac{ c_{\alpha/2} \, \widehat{\sigma }_{L}} {\sqrt{n}} \leq \overline m \leq   \widehat{U}+
        \frac{ c_{\alpha/2}\,
        \widehat{\sigma }_{U}} {\sqrt{n}}\right) 
        &\geq    
        1-\left( 1-\Phi (c_{\alpha/2})\right) -\left( 1-\Phi (c_{\alpha/2})\right) 
    = 1 - \alpha,
    \end{align*}
    as stated.
\end{proof}

\begin{proof}[\bf Proof of Theorem~\ref{th:Consistency}]
    For $s \in \{1,2\}$ let $\bar s = 3-s$ and $n_s = | {\cal I}_s |$,
    which is either
    $\lfloor n / 2 \rfloor$ 
    or $\lceil n / 2 \rceil$.
Define also
    \begin{align*}
        \widehat L_s &= \frac 1 {n_s}  \sum_{i \in {\cal I}_s }  L(Z_i,Y_i, \widehat \beta_{\bar s} ) ,
        &
        \overline L(\beta) &=   \mathbb{E}\left[  \sum_{y\in \mathcal{Y}}  L\left( Z_i, y,  \beta \right)   f\left(y\, |\, Z_i, A_i; \beta \right)  \right].
    \end{align*}  
Note, importantly, that whenever $i \in {\cal I}_s$
    \begin{align*}
        \overline L(\beta) 
        &=
        \mathbb{E}\left[  \left. \sum_{y\in \mathcal{Y}}  L\left( Z_i, y,  \beta \right)   f\left(y\, |\, Z_i, A_i; \beta \right)
        \right| 
        Y_{({\cal I}_{\bar s})},Z_{({\cal I}_{\bar s})}
        \right]
    \end{align*}
due to cross-sectional independence.
Now, conditional on  $(Y_{({\cal I}_{\bar s})},Z_{({\cal I}_{\bar s})}) $ the terms $ L(Z_i,Y_i, \widehat \beta_{\bar s} )$ are independent and identically distributed across $i$
and have a variance bounded by $(b_{\max} - b_{\min})^2$, which implies that
\begin{align*}
{\rm Var}( \widehat L_s   \, | \,Y_{({\cal I}_{\bar s})},\, Z_{({\cal I}_{\bar s})} ) 
      \leq \frac{(b_{\max} - b_{\min})^2}  {n_s} = O(n^{-1}) .
\end{align*}
By an application of Markov's inequality  we therefore obtain 
\begin{align*}
    \widehat L_s  &=  \mathbb{E}\left[ \left. L(Z_i,Y_i, \widehat \beta_{\bar s} ) \, \right| \, Y_{({\cal I}_{\bar s})},\, Z_{({\cal I}_{\bar s})}  \right]  + O_p(n^{-1/2}),
\end{align*}
where here and in the following $i \in {\cal I}_s$. Evaluating the expectation over $Y_i$ gives
\begin{align}
  \widehat L_s
    &=  \mathbb{E}\left[  \sum_{y\in \mathcal{Y}}  L( Z_i, y, \widehat \beta_{\bar s} )   f\left(y\, |\, Z_i, A_i; \beta_0 \right) 
     \, \Bigg| \, Y_{({\cal I}_{\bar s})},\, Z_{({\cal I}_{\bar s})}  \right]   + O_p(n^{-1/2})
     \notag
  \\
    &=     \mathbb{E}\left[  \sum_{y\in \mathcal{Y}}  L( Z_i, y, \widehat \beta_{\bar s} )   f(y \, |\, Z_i, A_i; \widehat \beta_{\bar s} ) 
     \, \Bigg| \, Y_{({\cal I}_{\bar s})},\, Z_{({\cal I}_{\bar s})}  \right] 
     \notag
  \\ & \quad   
     -   \mathbb{E}\left[ \left. \sum_{y\in \mathcal{Y}}  L( Z_i, y, \widehat \beta_{\bar s} )
     \frac{\partial    f(y\, |\, Z_i, A_i;  \widetilde \beta ) } {\partial \beta'} 
     \, \right| \, Y_{({\cal I}_{\bar s})},\, Z_{({\cal I}_{\bar s})}  \right]  
        ( \widehat \beta_{\bar s}  - \beta_0 ) + O_p(n^{-1/2}) 
        \notag
  \\
  &=      \overline L( \widehat \beta_{\bar s}   )
     + O_p(n^{-1/2}) ,
     \label{eq:1more}
\end{align}
where in the second step we performed a mean-value expansion of $   f\left(y\,  |\, Z_i, A_i;   \beta \right)$ around $\beta_0$,
with $ \widetilde \beta$ being some value between $\beta_0$ and $\widehat \beta_{\bar s}$,
and in the last step we used the definition of $\overline L(\beta) $ as well as $\widehat \beta_{\bar s}  - \beta_0=O_p(n^{-1/2})$ and
\begin{align*}
  & \left\|  \mathbb{E}\left[ \left.  \sum_{y\in \mathcal{Y}}  L( Z_i, y, \widehat \beta_{\bar s} )
     \frac{\partial    f(y\, | \, Z_i, A_i;  \widetilde \beta ) } {\partial \beta} 
     \, \right| \, Y_{({\cal I}_{\bar s})},\, Z_{({\cal I}_{\bar s})}  \right]  
    \right\|
  \\
  &\qquad  \leq    \max(|b_{\min}|,|b_{\max}|)  
     \sup_{\beta \in {B}_\epsilon(\beta_0)}  \sum_{y\in \mathcal{Y}} \mathbb{E}  \left\|   \frac{\partial    f\left(y\, |\, Z_i, A_i;  \beta \right) } {\partial \beta}  \right\| = O(1).
\end{align*}
Here we also used that by the consistency of $\widehat \beta_{\bar s}$ one has 
 $ \widetilde \beta   \in {B}_\epsilon(\beta_0)$, for an $\epsilon>0$, with probability approaching one. Next, we define
\begin{align*}
  \overline m(\beta) &=   \mathbb{E}\left[    m\left( Z_i, A_i,  \beta \right)     \right] .
\end{align*}  
Then, by another mean-values expansion in $\beta$ we find that
\begin{align}
        \overline{m} =   \overline m\left(  \beta_0  \right) =   \overline m(  \widehat \beta_{\bar s} ) + O_p(n^{-1/2}).
        \label{eq:11more}
\end{align}
By condition \eqref{eq:boundcondition}, $\overline L( \widehat \beta_{\bar s}  ) \leq  \overline m(  \widehat \beta_{\bar s} )$,
and together with \eqref{eq:1more} and \eqref{eq:11more} this implies
$ \widehat L_s + O_p(n^{-1/2}) \leq    \overline{m} $,
as stated.
The derivation of $ \overline{m}  \leq  \widehat U_s + O_p(n^{-1/2}) $ is analogous. This completes the proof.
\end{proof}

\begin{proof}[Proof of Theorem~\ref{theorem:perturbed}]
Let $\widehat U = \widehat U(\beta_0)$, $\overline U = \mathbb E \left[ U(Z_i,Y_i, \beta_0 ) \right]$ and $\widehat{\sigma}_U^2 = \widehat{\sigma}_U^2(\beta_0)$, and let $\widehat{L},\overline{L}$, and $\widehat{\sigma}_L^2$ be defined analogously.
Remember that it was already obtained in \eqref{LimitUasy} and \eqref{LimitLasy} in the proof of Theorem \ref{th:ConsistencyKNOWN} that
\begin{align*}
    \lim_{n\rightarrow \infty }P\left( \overline{U}\leq \widehat{U} + \frac{ c_{\alpha/2}\widehat{\sigma }_{U}} {\sqrt{n}}\right)
    =
    \Phi(c_{\alpha/2}) 
    \qquad
    \text{and}
    \qquad
    \lim_{n\rightarrow \infty }P\left( \overline{L}\geq \widehat{L} - \frac{ c_{\alpha/2}\widehat{\sigma }_{L}} {\sqrt{n}}\right)
    =
    \Phi(c_{\alpha/2}) .
\end{align*}
To keep the notation simple, define (with some abuse of notation)
\begin{align*}
    L(1-\gamma,\alpha) 
    &=
    \inf_{\beta \in \mathcal B_{1-\gamma}}
            \left( \widehat{L}(\beta)-c_{\alpha/2}\frac{\widehat{\sigma }_{L}(\beta)}{\sqrt{n}}\right),
    \qquad
    L_0(\alpha) 
    =
    \left( \widehat{L}-c_{\alpha/2}\frac{\widehat{\sigma }_{L}}{\sqrt{n}}\right),
    \\
    U(1-\gamma,\alpha) 
    &=
    \sup_{\beta \in \mathcal B_{1-\gamma}}
        \left(
            \widehat{U}(\beta)+c_{\alpha/2}\frac{\widehat{\sigma }_{U}(\beta) }{\sqrt{n}}
            \right),
    \qquad
    U_0(\alpha) 
    =
    \left(
    \widehat{U}+c_{\alpha/2}\frac{\widehat{\sigma }_{U} }{\sqrt{n}}
            \right).
\end{align*}

Now, notice that 
\begin{align}
    P\left( 
            L_0(\alpha)
            \leq
            \overline{m}
            \leq 
            U_0(\alpha) \notag
        \right)
    \geq &
        P\left( 
            \overline{L}\geq L_0(\alpha) 
            \, \, \cap \, \,
            \overline{U}\leq U_0(\alpha)
        \right)     \notag
    \\
    = & 
        1-P\left(
            \overline{L}\leq L_0(\alpha) 
            \; \; \cup \; \; 
            \overline{U}\geq U_0(\alpha) 
            \right)     \notag
	\\
	    \geq & 1 - P\left( \overline{L}\leq L_0(\alpha) \right) 
            - P\left( \overline{U}\geq U_0(\alpha) \right) ,
        \label{mres1}
\end{align}
where in obtaining the first inequality we have used $\overline{L} \leq \overline{m} \leq \overline{U}$.
Moreover, analogous to the arguments used in the Proof of Theorem 11 of \citet{CFHN13},
\begin{align}
    P\left( L_0(\alpha)
        \leq
        \overline{m}
        \leq 
        U_0(\alpha)
    \right)
    &=
    P\left( L_0(\alpha)
        \leq
        \overline{m}
        \leq 
        U_0(\alpha)
        \,\,\, \bigcap \,\,\,
        \beta_0 \in \mathcal B_{1-\gamma}
    \right)
    \notag
    \\
    & \quad + 
    P\left( L_0(\alpha)
        \leq
        \overline{m}
        \leq 
        U_0(\alpha)
        \,\,\, \bigcap \,\,\,
        \beta_0 \notin \mathcal B_{1-\gamma}
    \right)
    \notag
    \\
    & \leq
    P\left( L_0(\alpha)
        \leq
        \overline{m}
        \leq 
        U_0(\alpha)
        \,\,\, \bigcap \,\,\,
        \beta_0 \in \mathcal B_{1-\gamma}
    \right)
    + 
    P\left( \beta_0 \notin \mathcal B_{1-\gamma} \right)
    \notag
    \\
    & \leq
    P\left( L(1-\gamma, \alpha)
        \leq
        \overline{m}
        \leq 
        U(1-\gamma, \alpha)
    \right)
    +
    \gamma.
    \label{mres2}
\end{align}
Combining \eqref{mres1} and \eqref{mres2}, and taking limits, it follows that
\begin{align*}
    \lim_{n\to\infty} 
    P\left( L(1-\gamma, \alpha)
        \leq
        \overline{m}
        \leq 
        U(1-\gamma, \alpha)
    \right)
    \geq
    1-\alpha - \gamma,
\end{align*}
as stated.
\end{proof}

\begin{proof}[\bf Proof of Lemma \ref{lem:inflemma}]
Given $\beta_0  \in {\rm Conv}(\mathcal B_{\rm sub})$,
by the definition of the convex hull we have
$
   \beta_0 = \sum_{\beta \in  \mathcal B_{\rm sub}} \lambda_\beta \,  \beta , 
$
for some convex weights $\lambda_\beta \geq 0$ such that $\sum_{\beta \in  {\mathcal B}_{\rm sub}} \lambda_\beta  = 1$. 
To keep the notation simple, define $\ell(y)=L(z,y, {\mathcal B}_{\rm sub})$ and $u(y)=U(z,y,{\mathcal B}_{\rm sub})$.
Then, under the assumption that $L(z,y,\widehat{\mathcal B}_{\bar s})$ and $U(z,y,\widehat{\mathcal B}_{\bar s})$ satisfy \eqref{eq:boundconditionSET}
for all
$ \beta \in \mathcal B_{\rm sub}$ and
$  a\in \mathcal{A}$,
conditional on the event $\widehat{\mathcal B}_{\bar s} = \mathcal{B}_{\rm sub}$
we have
$$\sum_{y\in \mathcal{Y}}\ell (y) \, f(y \, | \, z, a;\beta )
 \leq m(z,a,\beta )\leq \sum_{y\in \mathcal{Y}}u(y) f(y \, | \, z, a;\beta ) .
 $$
Multiplying this expression by $\lambda_\beta$ and summing
over $\beta \in \mathcal B_{\rm sub}$ then gives
\begin{align}
    \sum_{y\in \mathcal{Y}}\ell (y) \,
    \sum_{\beta \in \mathcal B_{\rm sub}} \lambda_\beta \, f(y \, | \, z, a;\beta )
    \leq 
    \sum_{\beta \in \mathcal B_{\rm sub}} \lambda_\beta \,  m(z,a,\beta )
    \leq
    \sum_{y\in \mathcal{Y}}u(y)
    \sum_{\beta \in \mathcal B_{\rm sub}} \lambda_\beta \,
    f(y \, | \, z, a;\beta ) .
    \label{eq:thm3bnd}
\end{align}
Using Assumption \ref{ass:Bf} we can employ
a second-order mean value expansion of $f(y \, | \, z, a;\beta )$ around $\beta_0$, to find for $\beta \in {\mathcal B}_{\rm sub}$,
\begin{align*}
    f(y \, | \, z, a;\beta ) 
    &=
    f(y \, | \, z, a;\beta_0 )
    +
    \frac{\partial f(y \, | \, z, a;\beta_0 )}{\partial \beta'}
    (\beta - \beta_0)
    +
    (\beta - \beta_0)'
    \frac{\partial^2 f(y \, | \, z, a; \widetilde \beta )}{\partial \beta \partial \beta'}
    (\beta - \beta_0),
    \\
    &=
    f(y \, | \, z, a;\beta_0 )
    +
    \frac{\partial f(y \, | \, z, a;\beta_0 )}{\partial \beta'}
    (\beta - \beta_0)
    +
    O\left({\rm d}_{\rm sub}^2\right),
\end{align*}
where $\widetilde \beta$ is a mean value between $\beta$ and $\beta_0$ and ${\rm d}_{\rm sub} = {\rm diam}(\mathcal B_{\rm sub})$. 
By definition $\sum_{\beta \in \mathcal B_{\rm sub} } \lambda_\beta (\beta - \beta_0)=0$. It then follows from the expansion in the previous display that 
\begin{align}
    \sum_{\beta \in  {\mathcal B}_{\rm sub}} \lambda_\beta \, f(y \, | \, z, a;\beta )
    &=   
    f(y \, | \, z, a;\beta_0 )
    +  
    O\left({\rm d}_{\rm sub}^2\right).
    \label{eq:thm3exp1}
\end{align}
Analogous arguments also yield
\begin{align}
    \sum_{\beta \in  \mathcal B_{\rm sub}} \lambda_\beta \,  m(z,a,\beta )
    &= 
    m(z,a,\beta_0 )
    +  
    O\left({\rm d}_{\rm sub}^2\right).
    \label{eq:thm3exp2}
\end{align}
Then, by combining \eqref{eq:thm3bnd}, \eqref{eq:thm3exp1} and \eqref{eq:thm3exp2} we obtain
\begin{align}
    \sum_{y\in \mathcal{Y}}\ell (y) \, f(y \, | \, z, a;\beta_0 )
    + 
    O\left({\rm d}_{\rm sub}^2\right)
    &\leq 
    m(z,a,\beta_0 )
    \leq \sum_{y\in \mathcal{Y}}u(y) f(y   \, | \, z, a;\beta_0 )
    +  
    O\left({\rm d}_{\rm sub}^2\right).
    \label{eq:lem1ineq}
\end{align}
Taking expectations of all sides of \eqref{eq:lem1ineq} finally yields,
\begin{align*}
    & \mathbb E 
    \left[
    \left.
    L(Z_i,Y_i,\widehat{\mathcal B}_{\bar{s}})
    \, \right | \, 
    \widehat{\mathcal B}_{\bar s} = \mathcal B_{\rm sub}
    \right]
    + 
    O\left({\rm d}_{\rm sub}^2\right)
    \leq 
    \overline m
    \leq 
    \mathbb E 
    \left[
    \left.
    U(Z_i,Y_i,\widehat{\mathcal B}_{\bar{s}})
    \, \right | \,
     \widehat{\mathcal B}_{\bar s} = \mathcal B_{\rm sub}
    \right]
    + 
    O\left({\rm d}_{\rm sub}^2\right),
\end{align*}
where the conditioning on $\widehat{\mathcal B}_{\bar s} = \mathcal{B}_{\rm sub}$ is required as the derivations leading up to \eqref{eq:lem1ineq} are based on this condition. Finally, $\mathbb E[m(Z_i,A_i,\beta_0)
    \, | \,
    \widehat{\mathcal B}_{\bar s} = \mathcal B_{\rm sub}] 
    = 
    \mathbb E[m(Z_i,A_i,\beta_0)]$
follows since the marginal distribution of $(Y_i,Z_i,A_i)$ is independent across $i$. This completes the proof.
\end{proof}

\begin{proof}[Proof of Theorem \ref{th:ConsistencyBf}]
Before moving to the proof, we make a series of definitions for notational brevity. First, define
\begin{align*}
     \overline L(  \widehat {\cal B}_{\bar s(i)}  ) 
     &=  
     \mathbb{E}\left[ L(Z_i,Y_i,   \widehat {\cal B}_{\bar s(i)}  ) 
      \, \Big| \,  \widehat {\cal B}_{\bar s(i)} 
       \right] ,
\end{align*}
where the expectation is with respect to the joint distribution of $(Y_i,Z_i)$ with $i\in \mathcal I _s$, conditional on $\widehat {\cal B}_{\bar s(i)}$. 
We next define the centered quantity
\begin{align*}
   \widetilde L(Z_i,Y_i,  \widehat {\cal B}_{\bar s(i)} ) 
   &= 
   L(Z_i,Y_i,  \widehat {\cal B}_{\bar s(i)} ) 
     - \overline L( \widehat {\cal B}_{\bar s(i)} ),
\end{align*}
and the half-sample averages
\begin{align*}
    \widetilde L_{C,s}
    =
    \frac{2}{n}
    \sum_{i\in \mathcal I_s}\widetilde L(Z_i,Y_i,  \widehat {\cal B}_{\bar s(i)} )
    \qquad
    \text{and}
    \qquad
    \widehat L_{C,s}  
    = 
    \frac 2 n \sum_{i \in {\cal I}_s} 
    L(Z_i,Y_i,  \widehat {\cal B}_{\bar s(i)} ).
\end{align*}
The corresponding quantities 
$\overline U( \widehat {\cal B}_{\bar s(i)} ) $, 
$\widetilde U(Z_i,Y_i,  \widehat {\cal B}_{\bar s(i)} ) $,
$\widetilde U_{C,s}$
and
$\widehat U_{C,s}  $
are defined analogously.
We further define
\begin{align*}
    \mathcal L_{C,s}
    =
    \widehat L_{C,s} - \frac{ c_{\alpha/4} \,
        \widehat{\sigma }_{L,s}} {\sqrt{n/2}}
    \qquad
    \text{and}
    \qquad
    \mathcal U_{C,s}
    =
    \widehat U_{C,s} + \frac{ c_{\alpha/4} \,
        \widehat{\sigma }_{U,s}} {\sqrt{n/2}}.
\end{align*}
Notice that
$ \widehat L_C =  (\widehat L_{C,1} + \widehat L_{C,2})/2 $ and $ \widehat U_C =  (\widehat U_{C,1} + \widehat U_{C,2})/2 $.
Finally, let 
\begin{align*}
    \mathcal L_C
    =
    \frac{\mathcal L_{C,1} + \mathcal L_{C,2}}{2}
    \qquad
    \text{and}
    \qquad
    \mathcal U_C
    =
    \frac{\mathcal U_{C,1} + \mathcal U_{C,2}}{2}.
\end{align*}
These quantities are equivalent to the lower and upper bounds in the probability statement of Theorem \ref{th:ConsistencyBf}. We therefore want to prove 
\begin{align*}
    P\left(
        \mathcal L_C
        \leq
        \overline m
        \leq  
        \mathcal U_C
    \right)
    \geq
    1-\alpha - \gamma + o(1),
    \qquad
    \text{as } n\to\infty.    
\end{align*}
Lemma~\ref{lem:inflemma} states that 
conditional on $\beta_0 \in  {\rm Conv}(\widehat {\cal B}_{\bar s(i)})$, we have
\begin{align}
    \overline L( \widehat {\cal B}_{\bar s(i)} )
    +
    \delta_{L}(\widehat {\cal B}_{\bar s(i)})
    \leq 
    \overline m
    \leq
    \overline U( \widehat {\cal B}_{\bar s(i)} )
    +
    \delta_{U}(\widehat {\cal B}_{\bar s(i)}),
    \label{LemmaReformulated}
\end{align}
where we have introduced the notation
$\delta_{L}(\widehat {\cal B}_{\bar s(i)})$
and 
$\delta_{U}(\widehat {\cal B}_{\bar s(i)})$
for
the upper and lower bound
$O([{\rm diam}(\widehat {\cal B}_{\bar s})]^2)$
remainder terms in Lemma~\ref{lem:inflemma}.
In what follows, let ${\cal A}_{s}$ denote the event that $ \beta_0 \in  {\rm Conv}(\widehat {\cal B}_{s})$, with the complement given by ${\cal A}_{s}^c$. Now, observe that
\begin{align}
     P & \left(
    \mathcal{L}_{C,s}\geq \overline m 
    \, \bigcup \, 
    \mathcal{U}_{C,s}\leq \overline m
    \right)
     \notag
    \\ 
    &= 
    P\left(
    \left\{
    \mathcal{L}_{C,s}\geq \overline m 
    \, \bigcup \, 
    \mathcal{U}_{C,s}\leq \overline m
    \right\}
    \, \bigcap \mathcal A_{\bar s}
    \right)
    \notag
    +
    P\left(
    \left\{
    \mathcal{L}_{C,s}\geq \overline m 
    \, \bigcup \, 
    \mathcal{U}_{C,s}\leq \overline m
    \right\}
    \, \bigcap \mathcal A_{\bar s}^c
    \right)
    \notag
    \\
    &\leq 
    P\left(
    \mathcal{L}_{C,s}\geq \overline m 
    \, \bigcap \mathcal A_{\bar s}
    \right)
    +
    P\left(
    \mathcal{U}_{C,s}\leq \overline m
    \, \bigcap \mathcal A_{\bar s}
    \right)
    +
    P\left(
        \mathcal A_{\bar s}^c
    \right)
    \notag
    \\
    &=
    P\left(
    \left.
    \mathcal{L}_{C,s}\geq \overline m 
    \, \right| \, \mathcal A_{\bar s}
    \right)
    P(\mathcal A_{\bar s})
    +
    P\left(
    \left.
    \mathcal{U}_{C,s}\leq \overline m
    \, \right| \, \mathcal A_{\bar s}
    \right)
    P(\mathcal A_{\bar s})
    +
    P\left(
        \mathcal A_{\bar s}^c
    \right)
     \notag
     \\
    &\leq
    P\left(
    \left.
    \mathcal{L}_{C,s}
        \geq 
        \overline L(  \widehat {\cal B}_{\bar s} )
        +
        \delta_{L,\bar s} 
        \, \right| \, \mathcal A_{\bar s}
    \right) P(\mathcal A_{\bar s})
    +
    P\left(
    \left.
    \mathcal{U}_{C,s}
        \leq 
        \overline U(  \widehat {\cal B}_{\bar s} )
        +
        \delta_{U,\bar s} 
          \, \right| \, \mathcal A_{\bar s}
    \right) P(\mathcal A_{\bar s})
    +
    \frac \gamma 2 
    + o(1)
       \notag
       \\
    &=
    P\left(
    \mathcal{L}_{C,s}
        \geq 
        \overline L(  \widehat {\cal B}_{\bar s} )
        +
        \delta_{L,\bar s} 
    \right)  
    +
    P\left(
    \mathcal{U}_{C,s}
        \leq 
        \overline U(  \widehat {\cal B}_{\bar s} )
        +
        \delta_{U,\bar s} 
    \right) 
    +
    \frac \gamma 2 
    + o(1),
    \label{eq:be0}
\end{align}
where in the second to last step we have used Assumption \ref{ass:Bf}(i) 
and \eqref{LemmaReformulated},
and we defined
 $ \delta_{L,\bar s} =  \delta_{L}(  \widehat {\cal B}_{\bar s} ) $
 and 
 $ \delta_{U,\bar s} =  \delta_{U}(  \widehat {\cal B}_{\bar s} ) $.
Next, we obtain
\begin{align}
    P\left(
        \mathcal L_C
        \leq
        \overline m
        \leq  
        \mathcal U_C
    \right)
    \geq &
    P\left(
        \mathcal{L}_{C,1} \leq \overline m \leq \mathcal{U}_{C,1}
        \, \bigcap \,
        \mathcal{L}_{C,2} \leq \overline m \leq \mathcal{U}_{C,2}
    \right)
    \notag
    \\
    \geq &
    1
    -
    P\left(
    \mathcal{L}_{C,1}\geq \overline m 
    \, \bigcup \, 
    \mathcal{U}_{C,1}\leq \overline m
    \right)
    -
    P\left(
    \mathcal{L}_{C,2}\geq \overline m 
    \, \bigcup \, 
    \mathcal{U}_{C,2}\leq \overline m
    \right)
    \notag
    \\
    \geq &
    1
    -\gamma
    -
    \sum_{s=1}^2
    \left[
    P\left(
    \mathcal{L}_{C,s}
        \geq 
        \overline L(  \widehat {\cal B}_{\bar s} )
        +
        \delta_{L,\bar s} 
    \right)
    +
    P\left(
    \mathcal{U}_{C,s}
        \leq 
        \overline U(  \widehat {\cal B}_{\bar s} )
        +
        \delta_{U,\bar s} 
    \right)
    \right] 
    + o(1)
    \notag 
    \\
    \geq &
    1
    -
    \gamma
    -
    \sum_{s=1}^2
    P\left( 
    \sqrt{n/2}  \frac {\widetilde L_{C,s} } {\widehat{\sigma}_{L,s}} 
    \geq 
    c_{\alpha/4}+
    \sqrt{n/2}\frac{ \delta_{L,\bar s} }
    {\widehat{\sigma}_{L, s}} 
    \right) 
    \notag
    \\
    &-
    \sum_{s=1}^2
    P\left( 
    \sqrt{n/2}  \frac {\widetilde U_{C,s} } {\widehat{\sigma}_{U,s}} 
    \leq 
    - c_{\alpha/4}
    +
    \sqrt{n/2}\frac{ \delta_{U,\bar s} }
    {\widehat{\sigma}_{U, s}} 
    \right)  
     + o(1)
    \label{eq:be1}
\end{align}
where the third inequality follows from \eqref{eq:be0} and the last inequality applies the various definitions we made earlier.

It remains to show that the probabilities $P(\cdot)$ that explicitly appear in the last inequality of \eqref{eq:be1} are all bounded from above by $\alpha/4 + o(1)$.
To show this, we first note that conditional on ${\cal G}_{\bar s}=\{(Y_j,Z_j):j\in \mathcal I_{\bar s}\}$, 
$\widetilde L (Z_i,Y_i,\mathcal B_{\bar s})$ is centered and  iid over $i \in \mathcal I_s$. Next, define $M_{r,\bar{s}} = \mathbb E \left[\left. |\widetilde L (Z_i,Y_i,\mathcal B_{\bar s})|^r \, \right| \, \mathcal{G}_{\bar s}\right]$. Since $L(z,y,\widehat{\mathcal B}_{\bar s})$ is bounded by $b_{\rm min}$ and $b_{\rm max}$, $M_{r,\bar s}$ exists for any $r>0$.
It follows by Theorem 1.1 of \citet{BG96} that there exists some $k_{\bar s}>0$ such that
\begin{align*}
    \sup_{c\in \mathbb{R}}
    \left|
        P\left(
        \left.
            \sqrt{n/2} 
            \frac{\widetilde L_{C,s}}{\widehat{\sigma}_{L,s}}
            <c
        \right|
        \mathcal G_{\bar s}
        \right)
        -
        \Phi(c)
    \right|
    \leq 
    \frac{1}{\sqrt{n/2}}\frac{k_{\bar s} M_{3,\bar s}}{(M_{2,\bar s})^{3/2}}.
\end{align*}
The second part of this upper bound is finite for any $\overline s$. Hence, it equivalently holds that
\begin{align}
    P\left(
        \left.
            \sqrt{n/2} 
            \frac{\widetilde L_{C,s}}{\widehat{\sigma}_{L,s}}
            <c
        \, \right| \,
        \mathcal G_{\bar s}
    \right)
    =
    \Phi(c) + O\left( n^{-1/2} \right),
    \label{eq:BE1}
\end{align}
where the rate $O\left( n^{-1/2} \right)$ holds uniformly over $c\in \mathbb R$. Choosing 
$c = c_{\alpha/4}
    +
    \sqrt{n/2} \delta_{L,\bar s}
    / 
    \widehat{\sigma}_{L,s} $, 
equation \eqref{eq:BE1} yields
\begin{align}
    & P\left(
        \left.
            \sqrt{n/2} 
            \frac{\widetilde L_{C,s}}{\widehat{\sigma}_{L,s}}
            \geq
            c_{\alpha/4}+
            \sqrt{n/2}\frac{ \delta_{L,\bar s}}
            {\widehat{ \sigma}_{L,s}}
        \, \right| \,
        \mathcal G_{\bar s}
    \right)
    =
    1
    - 
    \Phi\left( 
        c_{\alpha/4} +
        \sqrt{\frac{n}{2}}
        \frac{ \delta_{L,\bar s}}
        {\widehat{ \sigma}_{L,s}}
    \right) 
    + 
    O\left( n^{-1/2} \right)
    \notag
    \\
  & \qquad \qquad =
    1
    - 
    \Phi\left( c_{\alpha/4} \right) 
    +
    O\left( 
        \sqrt{\frac{n}{2}}
        \left| 
        \frac{ \delta_{L,\bar s}}
        {\widehat{ \sigma}_{L,s}}
        \right|
        +
        \frac{1}{\sqrt{n}}
    \right) 
    =
    \frac{\alpha}{4}
    +
    O\left( 
        \sqrt{\frac{n}{2}}
        \left| 
        \frac{ \delta_{L,\bar s}}
        {\widehat{ \sigma}_{L,s}}
        \right|
        +
        \frac{1}{\sqrt{n}}
    \right),
    \label{eq:BE2}
\end{align}
where the second equality expands $\Phi$ around $ c_{\alpha/4}$,
and the final equality follows from the definition of $c_{\alpha/4}$.
Taking expectations over  ${\cal G}_{\bar s}$ in \eqref{eq:BE2} and
applying the Law of Iterated Expectations yields
\begin{align}
    P\left(
        \sqrt{n/2} 
        \frac{\widetilde L_{C,s}}{\widehat{\sigma}_{L,s}}
        \geq
        c_{\alpha/4}+
        \sqrt{n/2}\frac{ \delta_{L,\bar s}}
        {\widehat{ \sigma}_{L,s}}
    \right)
    &=
    \frac{\alpha}{4}
    +
    O\left( 
        \sqrt{\frac{n}{2}}
        \mathbb E \left|  \delta_{L,\bar s}
        \right|
         \mathbb E \left| 
        \frac{1}
        {\widehat{ \sigma}_{L,s}} 
        \right|
        +
        \frac{1}{\sqrt{n}}
    \right)
    =
    \frac{\alpha}{4}
    +
    o\left( 1 \right)
    \label{eq:be2},
\end{align}
where in obtaining the final $o(1)$ rate we have used Assumption \ref{ass:Bf}(i). We have also used the assumption ${\rm Var}\left[L(Z_i,Y_i, \beta )\right]>0$ which implies that $\mathbb E |1/\widehat \sigma_{L,s}|$ is bounded for some sufficiently large $n$. By analogous arguments one obtains
\begin{align}
    P\left( 
    \sqrt{n/2}  \frac {\widetilde U_{C,s} } {\widehat{\sigma}_{U,s}} 
    \leq 
    - c_{\alpha/4}
    +
    \sqrt{n/2}\frac{ \delta_{U,\bar s} }
    {\widehat{\sigma}_{U, s}} 
    \right) 
    =
    \frac{\alpha}{4} + o(1).
    \label{eq:be3}
\end{align}
Combining \eqref{eq:be1}, \eqref{eq:be2} and \eqref{eq:be3} yields the stated result.
\end{proof}

\newpage

\section{Models with set-identified structural parameters}\label{sec:setidentifiedbeta}

Our standard approach assumes that the structural parameters are point-identified. This assumption covers a large variety of models that are used in empirical analysis (e.g. binary, count data, ordered choice and multinomial choice) based on logit or Poisson specifications. Nevertheless, other models of interest with set-identified structural parameters are also used in the empirical literature. The classical example is the probit model---but even some logit-based models have set-identified common parameters; see, e.g., \citet{DGK24}. In this part, we propose two approaches for obtaining outer bounds when the structural parameters are set-identified. To fix ideas, let the data generating process be given by
\begin{align}
    Y_{it} = 1\{ X_{it}\beta + A_i \geq \varepsilon_{it} \},
    \label{eq:dgpsetid}
\end{align}
where $\beta$ is potentially set-identified. 
To facilitate the discussion, henceforth we will assume that $X_{it}$ is scalar although all our arguments will be valid for vector-valued covariates, as well.

\subsection{First approach: random coefficients specification}\label{sec:seta1}
A straightforward approach to obtaining outer bounds when $\beta$ is set-identified is to use the random coefficient model. To see why, notice that
\begin{align}
    Y_{it} = 1\{ X_{it}\beta_i + A_i \geq \varepsilon_{it} \},
    \label{eq:rcsetid}
\end{align}
contains \eqref{eq:dgpsetid} as a special case and is, therefore, a valid specification even when the DGP is given by \eqref{eq:dgpsetid}. Consequently, we can use \eqref{eq:rcsetid} to obtain valid outer bounds. The advantage of this approach is that it side-steps the issue of set-identified $\beta$: this is because the random coefficient approach treats $\beta_i$ as a fixed effect, and is therefore completely agnostic to set-identification of $\beta$. The implementation of this idea is straightforward: one just has to obtain the outer bounds on the average effect of interest by treating \eqref{eq:rcsetid} as the DGP. Estimation of outer bounds under a random coefficient specification has already been considered earlier; indeed, since there are no structural parameters to be estimated, the theory developed in Theorem \ref{th:ConsistencyKNOWN} directly applies.

This approach has the advantage of being generically applicable to many different models. However, depending on the specific data generating process, the random coefficient variant can result in conservative outer bounds.  Nevertheless, especially in complicated models with arbitrarily set-identified parameters or when the identified set for $\beta$ is already suspected to be quite wide, this approach provides a viable alternative.

\subsection{Second approach: using the identified set for $\beta$}\label{sec:seta2}
A second and more principled approach is to incorporate the identified set for $\beta$ into the estimation of outer bounds. This can be done easily by using our baseline approach for obtaining bounds. Specifically, let $\mathcal {B} _{\mathrm{id}}$ be the identified set for $\beta_0$. For any $\beta \in \mathcal B _{\mathrm{id}}$, one can solve the linear program
\begin{align}
\min_{\ell,u \, : \,  \mathcal{Y} \rightarrow \mathbb{R} }\,  &Q(\ell(\cdot),u(\cdot),z,\beta)
\notag 
\\
  \text{subject} & \text{ to} \notag  
\\
&   \forall y\in \mathcal{Y}:\, \,b_{\min }\leq \ell(y)\leq 
u(y)\leq b_{\max }  \notag 
\\
\quad \text{and}\quad & \forall a\in \mathcal{A}:\, \, \sum_{y\in \mathcal{Y}}\ell (y) \, f(y \, | \, z, a;\beta )
 \leq m(z,a,\beta )\leq \sum_{y\in \mathcal{Y}}u(y) f(y \, | \, z, a;\beta ).    \notag
\end{align}
and obtain the solutions $\widehat L(\beta)$ and $\widehat U(\beta)$ in the same way as in Section \ref{sect:boundconstruction}. The outer bounds that incorporate the identified set for $\beta_0$ can then be obtained by searching for the minimum lower bound and maximum upper bound across all $\beta \in \mathcal B _{\mathrm{id}}$; that is
\begin{align}
    \widehat L (\mathcal{B}_{\mathrm{id}}) = \min_{\beta \in \mathcal B _{\mathrm{id}}} \widehat L (\beta)
    \qquad
    \text{and}
    \qquad
    \widehat U (\mathcal{B}_{\mathrm{id}}) = \max_{\beta \in \mathcal B _{\mathrm{id}}} \widehat U (\beta).
    \label{eq:boundswithid}
\end{align}
will yield the final lower and upper outer bound estimators, respectively. 

In implementation one first has to estimate the identified set $\mathcal B _{\mathrm{id}}$ and also use a suitable grid to approximate this estimated set.  To conduct inference, suppose that $\widehat{\mathcal B}_{1-\gamma}$ is a confidence set for $\mathcal B_{\mathrm{id}}$ with asymptotic coverage at least $1-\gamma$. One can then apply Theorem \ref{theorem:perturbed} by replacing $\mathcal B_{1-\gamma}$ with $\widehat{\mathcal B}_{1-\gamma}$: compute the bounds $\widehat L(\beta)$ and $\widehat U(\beta)$ together with their standard errors for each $\beta$ on a grid approximating $\widehat{\mathcal B}_{1-\gamma}$, and then take the infimum and supremum as in Theorem \ref{theorem:perturbed}. The resulting confidence interval will have asymptotic coverage of at least $1-\alpha-\gamma$. Estimating $\mathcal B _{\mathrm{id}}$ and constructing confidence sets for partially identified parameters is outside the scope of this paper; see, for example, \citet{CFHN13} and \citet{ChernozhukovHongTamer(07)} for general methods.

\newpage

\section{Outer bounds for the probit model}\label{sec:appendixC}
The probit model is routinely used in empirical and theoretical studies but, unlike the logit model, its model parameters are set-identified when $T$ is fixed. This makes it more challenging to obtain the identified set and the outer bounds on average effects. In this part, we use the probit model to evaluate the two approaches introduced in Section \ref{sec:setidentifiedbeta}. In particular, in Section \ref{sec:approb1} we employ the approaches of Sections \ref{sec:seta1} and \ref{sec:seta2} to obtain outer bounds for the static probit model. This is followed by Section \ref{sec:approb2} where we compare the coverage of the outer bounds against that of the identified set, similar to the analysis of Section \ref{subsec:Comparision1} for the logit model.

\subsection{Obtaining outer bounds on the static probit model}\label{sec:approb1}
We consider the static probit model
\begin{align}
   Y_{it} = 1\{ X_{it}\beta + A_i \geq \varepsilon_{it} \}
   \label{eq:probitdgp}
\end{align}
where
$\varepsilon_{it}\sim N(0,1)$, $A_i \sim N(0,1)$, $X_{it}=1\{A_i \geq \eta_{it} \}$ and $\eta_{it}\sim N(0,1)$. Our interest is in obtaining bounds on the average effect
\begin{align}
 \mathbb E \{ 
    \mathbb E[ Y_{it} | X_{it}=1, A_i,\beta_0]
    -
    \mathbb E[ Y_{it} | X_{it}=0, A_i,\beta_0]
 \}.
 \label{eq:aeprob1}
\end{align}

We first use the random coefficient approach of Section \ref{sec:seta1} to obtain valid outer bounds on \eqref{eq:aeprob1}. We do this by obtaining outer bounds for the model
\begin{align}
    Y_{it} = 1\{ X_{it}\beta_i + A_i \geq \varepsilon_{it} \},
    \qquad
    \varepsilon_{it}\sim N(0,1).
    \label{eq:rcprobit}
\end{align}
which contains the probit model of \eqref{eq:probitdgp}. For each value of $\beta_0\in[-2,2]$, the simulation results are based on 1000 replications where $n=1000$ and $T\in\{2,3,5\}$. Outer bounds are obtained using the random coefficient specification given in \eqref{eq:rcprobit}. In obtaining these bounds, the support of $A_{i}$ is approximated by a grid of 50 equidistant points between $-5$ and $5$; similarly a grid of 50 equidistant points between $-7$ and $7$ is used to approximate the support of $\beta_i$. The results of this analysis are presented in Figure \ref{fig:rcprobit}.
\begin{figure}
	\includegraphics[width=1\linewidth]{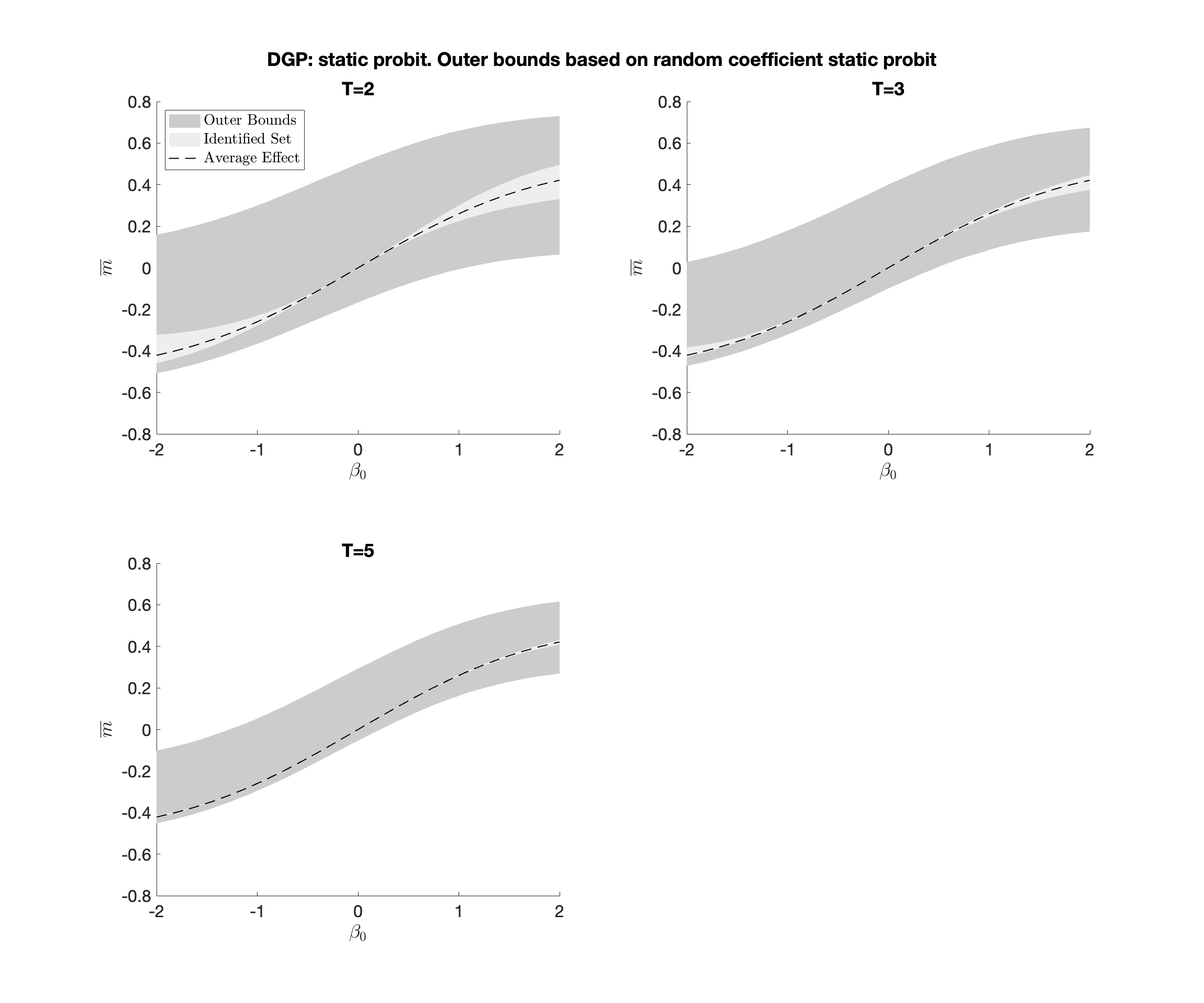}
	\caption{Simulation results for the static probit model in equation \eqref{eq:probitdgp}. The average effect of interest is  
    $\mathbb E \{ 
        \mathbb E[ Y_{it} | X_{it}=1, A_i]
        -
        \mathbb E[ Y_{it} | X_{it}=0, A_i]
     \}.
     $
     Outer bounds are obtained using the random coefficient static probit model defined in equation \eqref{eq:rcprobit}. Based on 1000 replications of panels with cross-section size $n=1000$.  Outer bounds are obtained by the linear program in \eqref{eq:UnifOptimization}. Reported outer bounds are cross-replication averages.}
	\label{fig:rcprobit}
\end{figure}	
As expected, the outer bounds are valid but also wider than the identified set, as the model in \eqref{eq:rcprobit} is much more flexible than \eqref{eq:probitdgp}. Moreover, the lack of point-identification in the static probit specification is known to be mild (see, e.g., Figures 2 and 3 in \citealt{CFHN13}). Consequently, using the random coefficient approach to side-step set-identification of $\beta$ turns out to be a valid but conservative option in this particular case.
\begin{figure}
\begin{centering}
	\includegraphics[width=1\linewidth]{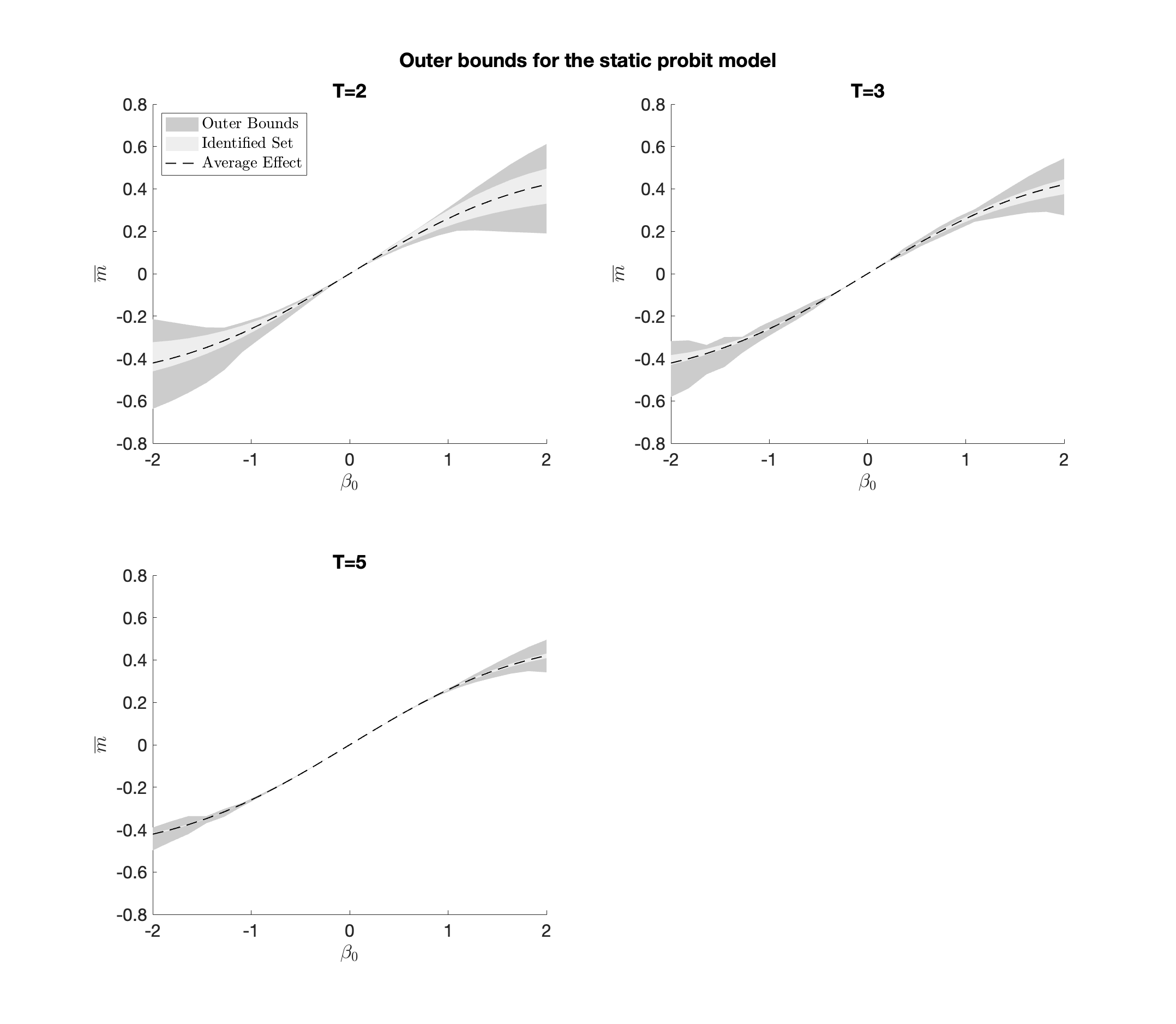}
	\caption{Simulation results for the static probit model in equation \eqref{eq:probitdgp}. The average effect of interest is  
    $\mathbb E \{ 
        \mathbb E[ Y_{it} | X_{it}=1, A_i]
        -
        \mathbb E[ Y_{it} | X_{it}=0, A_i]
     \}.
     $
     Outer bounds are obtained using the method of Section \ref{sec:seta2} which incorporates the fact that $\beta$ is not point-identified. Based on 1000 replications of panels with cross-section size $n=1000$.  Outer bounds are obtained by the linear program in \eqref{eq:UnifOptimization}. Reported outer bounds are cross-replication averages.}
     \label{fig:probit}
\end{centering}
\end{figure}	

Next, we consider the approach of Section \ref{sec:seta2} and obtain outer bounds that incorporate the identified set for $\beta$. To isolate away the estimation error arising from estimation of the identified set, in this part we use the sharp identified set for $\beta$.\footnote{The difference lies in whether one uses the `true' choice probabilities $P(Y_i|X_i)$ implied by the DGP or the estimated choice probabilities implied by data. We use the former to obtain the identified set for $\beta$. See Section \ref{sec:t10p1} for the algorithm used in obtaining the identified set.} Our simulation results, presented in Figure \ref{fig:probit}, are based on 1000 replications for panels with $n=1000$ and $T\in \{2,3,5\}$. The support for $A_i$ is approximated by a grid of 100 equi-distant points between -5 and 5. The results reveal that the outer bounds perform quite well. In particular, the width of outer bounds decreases with $T$ and the outer bounds generally collapse to the average effect whenever the average effect is point-identified. All these results echo our findings for the static logit model. Hence, the approach of Section \ref{sec:seta2} stands as a viable and computationally feasible alternative to obtaining the sharp identified set on the average effect when $\beta$ is set-identified.

\subsection{Comparison to the identified set}\label{sec:approb2}

In this part, we repeat the analysis of Section \ref{subsec:Comparision1} for the static probit model with a discrete (but non-binary) covariate. We use the same DGP as before, except that the error terms are now normally distributed:
\begin{gather*}
    Y_{it} = 1\{ X_{it}\beta + A_{i} \geq \varepsilon_{it}\},
    \quad
    A_{i} \sim N\left( \frac{1}{T} \sum_{t=1}^T X_{it} - \frac{1}{2}, 1 \right),
    \quad
    X_{it} = x_{it}/(|\mathcal X|-1),
\end{gather*}
where $\varepsilon_{it}\sim N(0,1)$ and $x_{it}$ is discrete uniform with support $[0,\mathcal X -1]$. The aim of this part is to investigate the sensitivity of inference for the sharp identified set (for the average effect) on the estimation of choice probabilities $P(Y_i|X_i)$ when $\beta$ is set-identified. We have already seen in the static logit case of Section \ref{subsec:Comparision1} that the nonlinear dependence of the identified set on choice probabilities leads to misleading inference results as the dimensionality of the choice probability estimation problem increases. We now extend this analysis to the static probit case which is subject to the additional element of set-identified $\beta$.

As before, we consider two cases $|\mathcal X|\in \{6,12\}$, and focus on the average effect
$\mathbb E \{ 
    \mathbb E[ Y_{it} | X_{it}=1, A_i]
    -
    \mathbb E[ Y_{it} | X_{it}=0, A_i]
    \}.
$
The simulation results are based on 1000 replications of panels with dimensions $T=2$ and $n=200$.
The outer bounds are obtained using the method of Section \ref{sec:seta2}. Details on the estimation of the identified set can be found in Section \ref{sec:idmanual1} of the Supplementary Appendix. The reported upper and lower $2.5\%$ percentiles correspond to the corresponding sample percentiles across replications.

\begin{figure}[tb]
    \centering
    \begin{minipage}{0.5\textwidth}
        \centering
        \includegraphics[width=1\textwidth]{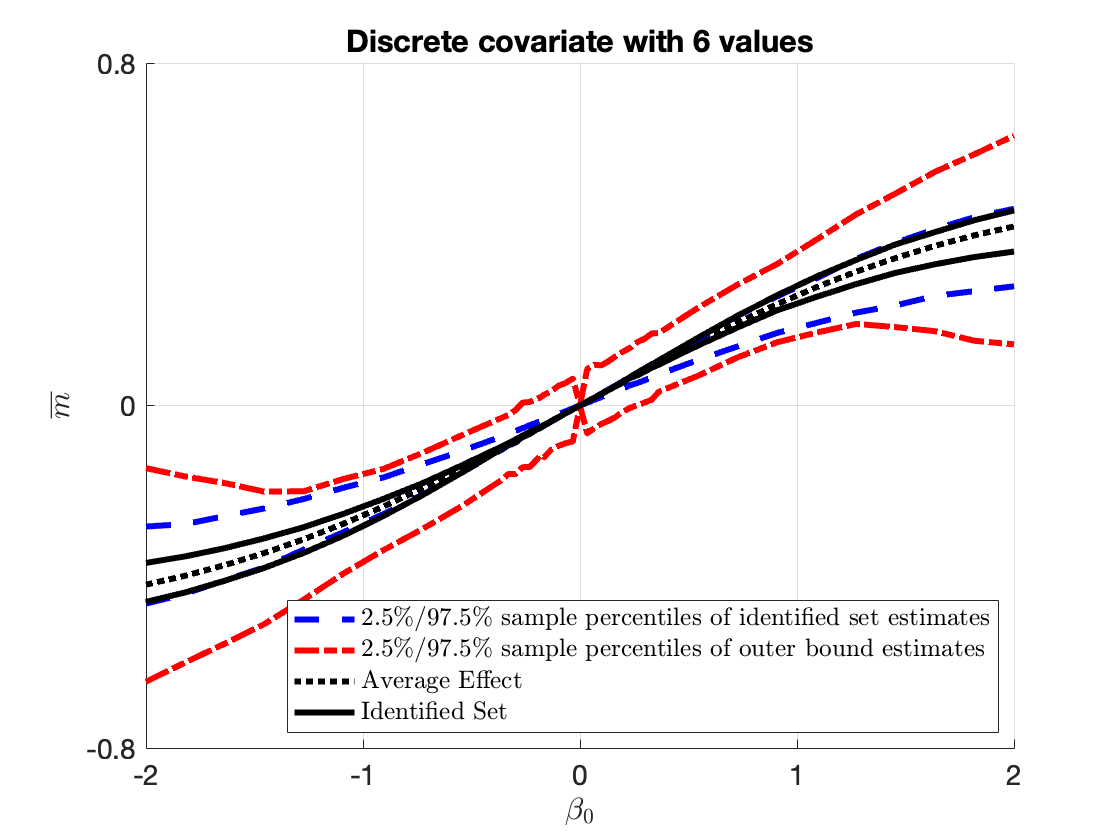}
    \end{minipage}\hfill
    \begin{minipage}{0.5\textwidth}
        \centering
        \includegraphics[width=1\textwidth]{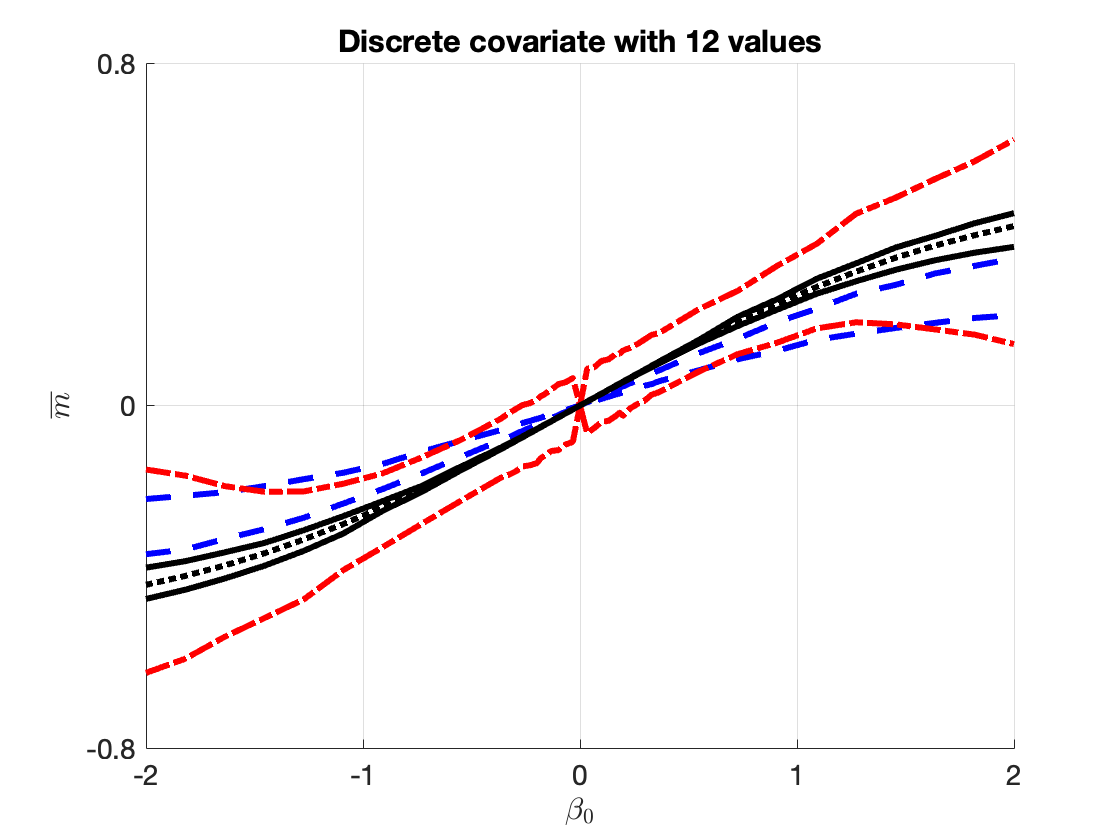}
    \end{minipage}
    \caption{Sample quantiles of estimates of the outer bounds and the identified set. The DGP is $Y_{it} = 1\{ X_{it}\beta + A_{i} \geq \varepsilon_{it}\}$ where 
    $A_{i} \sim N( T^{-1} \sum_{t=1}^T X_{it} - 1/2, 1 )$, $X_{it} = x_{it}/(|\mathcal X|-1)$, $x_{it}$ is discrete uniform with support $[0,|\mathcal X| -1]$, and $\varepsilon_{it}\sim N(0,1)$. The average effect under consideration is $\mathbb{E}[Y_{it} \, \big| \, X_{it}=1,A_i,\beta_0] 
    -  \mathbb{E}[Y_{it} \, \big| \,X_{it}=0,A_i,\beta_0]$. For each $\beta_0 \in [-2,2]$, the quantiles are calculated across 1000 replications of panels with $T=2$ and $n=200$. The left panel contains the results for $|\mathcal X| = 6$ whereas the results for $|\mathcal X| =12$ are presented in the right panel.}
    \label{fig:idanalysis_probit}
\end{figure}

The results for this analysis, presented in Figure \ref{fig:idanalysis_probit}, are generally qualitatively similar to the results obtained for the logit model in Figure \ref{fig:comb1}. In particular, as in the logit case, inference based on the sharp identified set is affected by the dimensionality of choice probabilities. We observe this even when $|\mathcal X|$ increases from 6 to 12; in the latter case, the $2.5\%/97.5\%$ percentiles of the sample analogue of the identified set completely miss the average effect. The same is not observed for the outer bound percentiles. This is all analogous to the logit analysis --- the only difference is that the sample quantiles in the probit case do not collapse towards the population average effect as $\beta_0$ approaches zero, except for when $\beta_0=0$. This is likely caused by $\beta$ being set-identified in the probit setting.

\newpage

\section{Comparison to the \citet{DDL21} outer bounds}\label{sec:ddlvspw}

In a recent paper, \citet{DDL21}---referred to as DDL henceforth---propose a novel approach for conducting inference on average effects. Their main approach focuses on the sharp identified set. However, they also propose a second approach which essentially yields ``outer bounds'' that are different than ours. 

In this part, we provide a comparison of our outer bounds to theirs, in the case of a static logit model. In particular, we consider 
\begin{align*}
    Y_{it} = 1\left\{ X_{it}\beta + A_i \geq \varepsilon_{it} \right\},    
\end{align*}
where 
$\varepsilon_{it} \sim {\rm Logit} (0,1)$, $A_i \sim N(0,1)$, $X_{it} = 1\left\{ A_i \geq \eta_{it} \right\}$ and $\eta_{it} \sim N(0,1)$. The average effect of interest is
\begin{align*}
    \mathbb E \{ 
    \mathbb E[ Y_{it} | X_{it}=1, A_i, \beta_0]
    -
    \mathbb E[ Y_{it} | X_{it}=0, A_i, \beta_0]
    \}.    
\end{align*}
We replicated 1000 panels with cross-section size $n=5000$ and various $T$. To obtain our outer bounds (which we call the PW bounds), we use the linear program of Section \ref{sec:uniflp}; the unknown $\beta$ is estimated by the  conditional likelihood method.
For the confidence intervals we use the method of Section \ref{sec:perturbed} with $\gamma=0.0001$, and we also approximate $\mathcal B_{1-\gamma}$ by a grid of 5000 equidistant points. Outer bounds by the DDL method (which we call the DDL bounds) and also the confidence bands are obtained using the \texttt{MarginalFElogit} package on R (\citealt{marginalfelogit}), which is the official R package for the DDL method. All reported outer bound and confidence bands are cross-replication averages.

The top row of Figure \ref{fig:ddlpw1} compares the outer bounds by the two methods. Clearly, both methods perform equally well. The DDL bounds become substantially wide as $|\beta_0|$ increases, especially on the negative side---however, this is most likely caused by some numerical sensitivity in the estimation package. Therefore, in our comparison we will focus on a limited range of $\beta_0$. The bottom row of Figure \ref{fig:ddlpw1} presents the differences between the widths of the PW and DDL outer bounds. A positive value corresponds to cases where the DDL bounds are narrower than the PW bounds. Negative values, on the other hand, occur when the PW bounds are narrower than the DDL bounds. These figures clearly show that both methods perform almost identically well. 

Moving next to the comparison of the 95\% confidence bands, presented in the top row of Figure \ref{fig:ddlpw2}, we observe that the PW confidence bands are slightly wider compared to the DDL bounds. However, the difference is generally modest as documented in the bottom row of Figure \ref{fig:ddlpw2}, which shows the differences between the widths of the PW and DDL bands. In particular, the difference decreases with $T$ and is almost always below 0.05. We finally note that, as mentioned before, the performance of the DDL bands deteriorate quickly as $|\beta_0|$ increases towards 1, and especially for negative $\beta_0$;  however, this is most likely due to a numerical sensitivity issue in the estimation package.

All in all, the results of this analysis confirm that the PW and DDL outer bounds perform comparably well, which underlines the reliability of our approach.

\begin{figure}[H]
  \centering
  \begin{subfigure}[t]{0.9\textwidth}
    \centering
    \includegraphics[width=\textwidth]{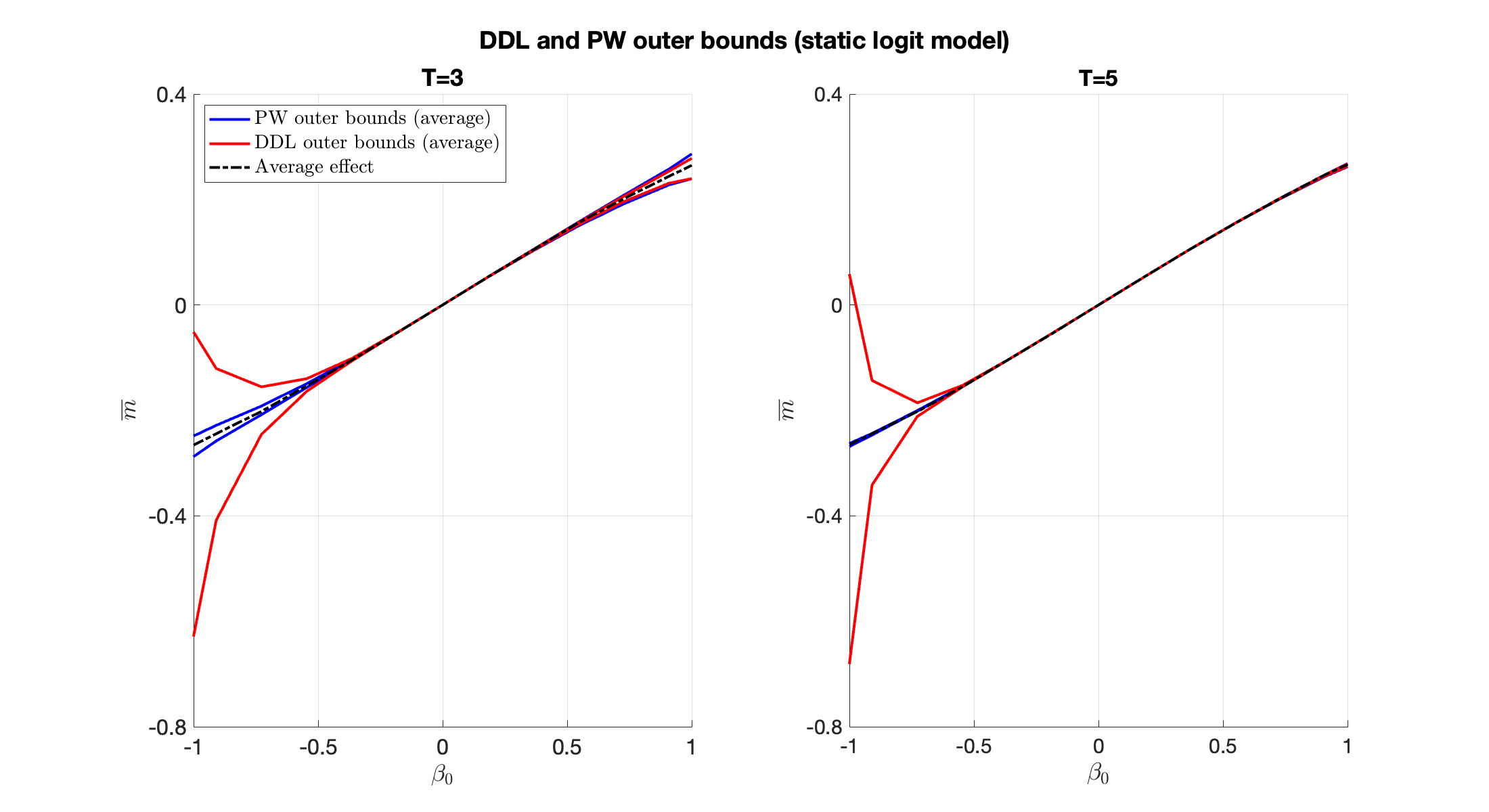}
  \end{subfigure}

  \vspace{1em} %

  \begin{subfigure}[t]{0.9\textwidth}
    \centering
    \includegraphics[width=\textwidth]{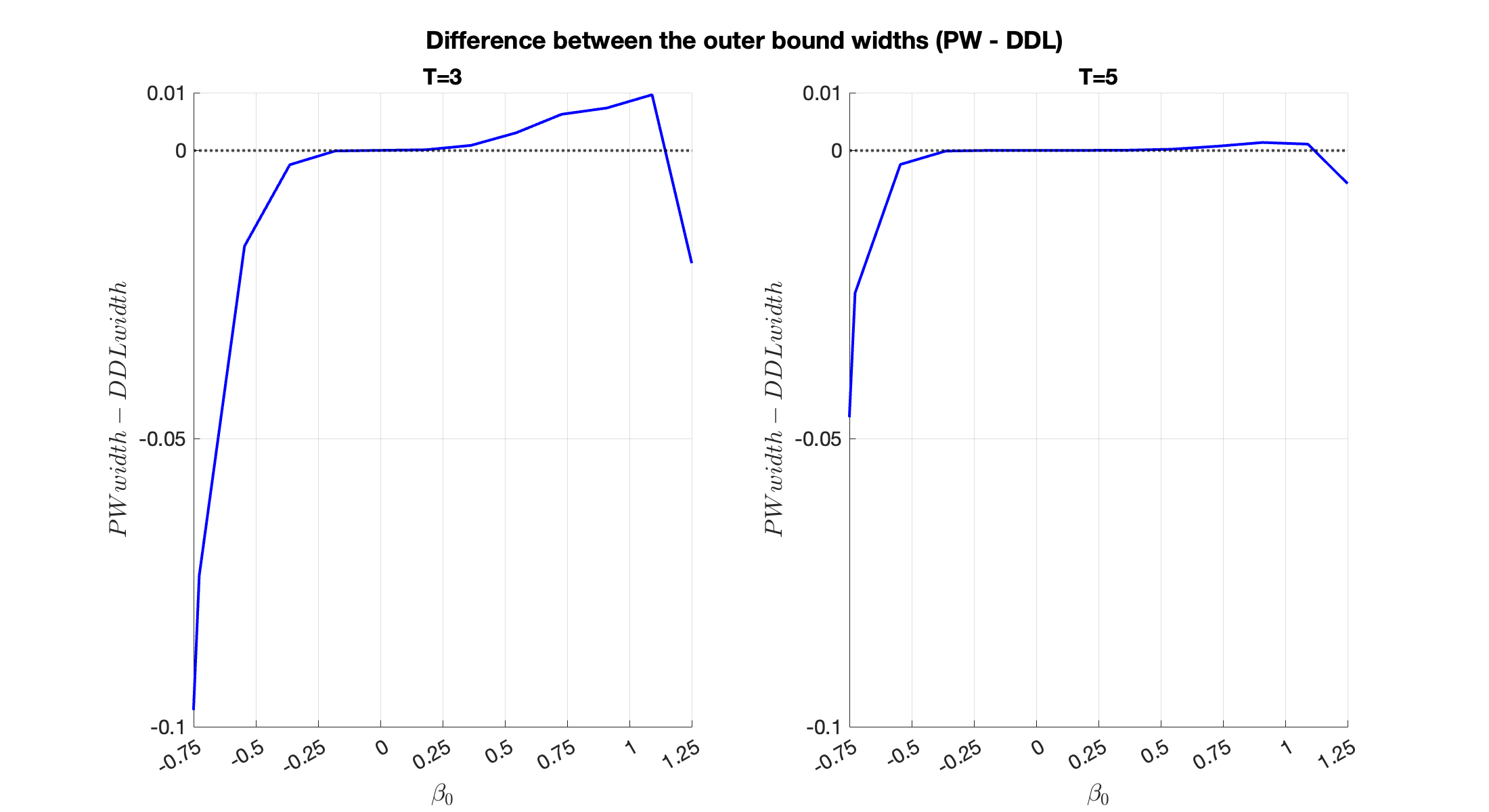}
  \end{subfigure}

  \caption{Comparison of the DDL and PW outer bounds. The DGP is $Y_{it} = 1\left\{ X_{it}\beta + A_i \geq \varepsilon_{it} \right\}$ where $\varepsilon_{it} \sim {\rm Logit} (0,1)$, $A_i \sim N(0,1)$, $X_{it} = 1\left\{ A_i \geq \eta_{it} \right\}$ and $\eta_{it} \sim N(0,1)$. The average effect under investigation is $\mathbb E \{ 
    \mathbb E[ Y_{it} | X_{it}=1, A_i, \beta_0]-\mathbb E[ Y_{it} | X_{it}=0, A_i, \beta_0]\}.$ Results for the DDL method are obtained using the \texttt{MarginalFElogit} package in R. For information on the PW bounds see Section \ref{sec:ddlvspw}. All presented bound results are cross-replication averages.}
  \label{fig:ddlpw1}
\end{figure}

\begin{figure}[H]
  \centering
  \begin{subfigure}[t]{0.9\textwidth}
    \centering
    \includegraphics[width=\textwidth]{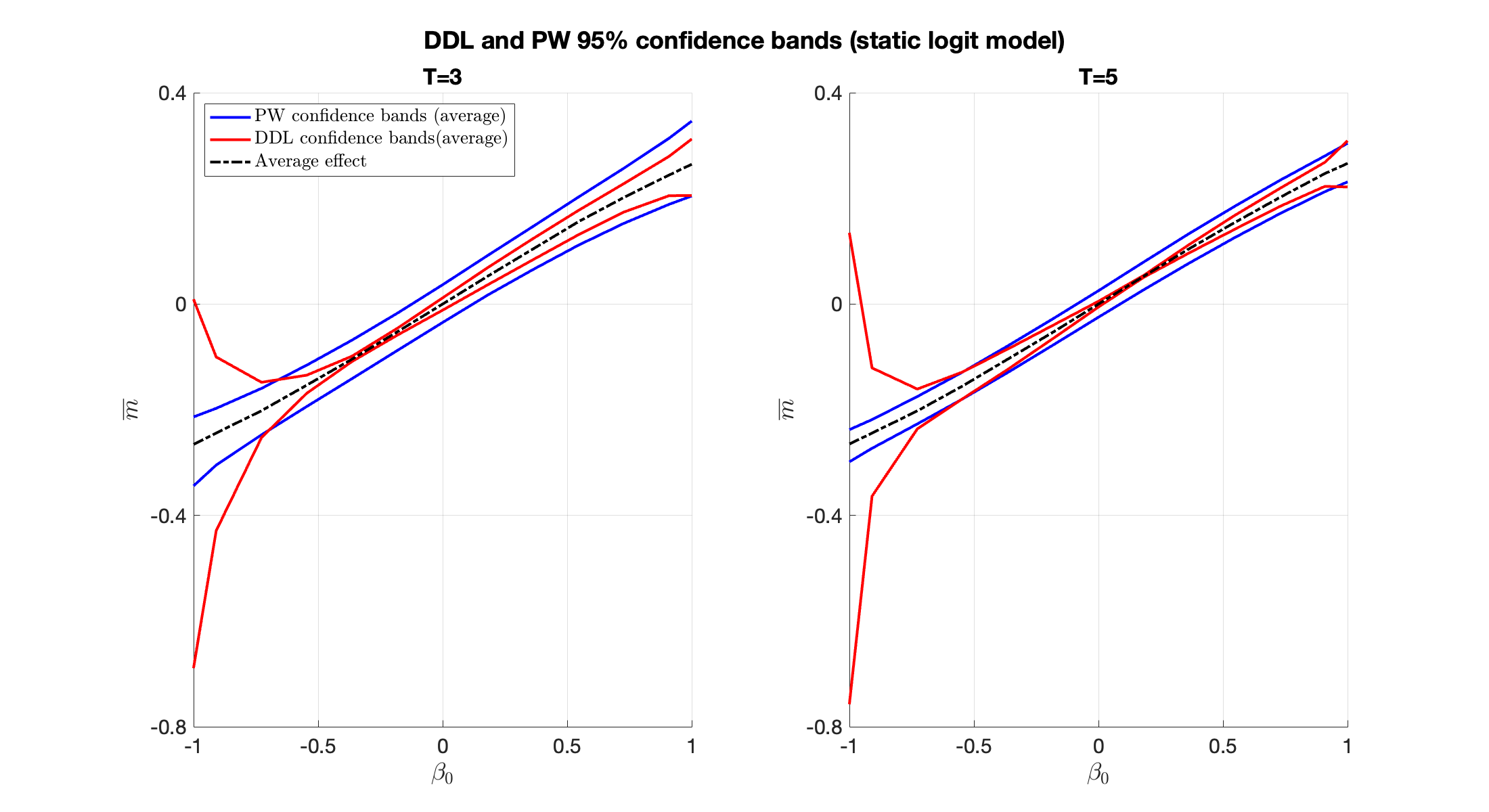}
  \end{subfigure}

  \vspace{1em}

  \begin{subfigure}[t]{0.9\textwidth}
    \centering
    \includegraphics[width=\textwidth]{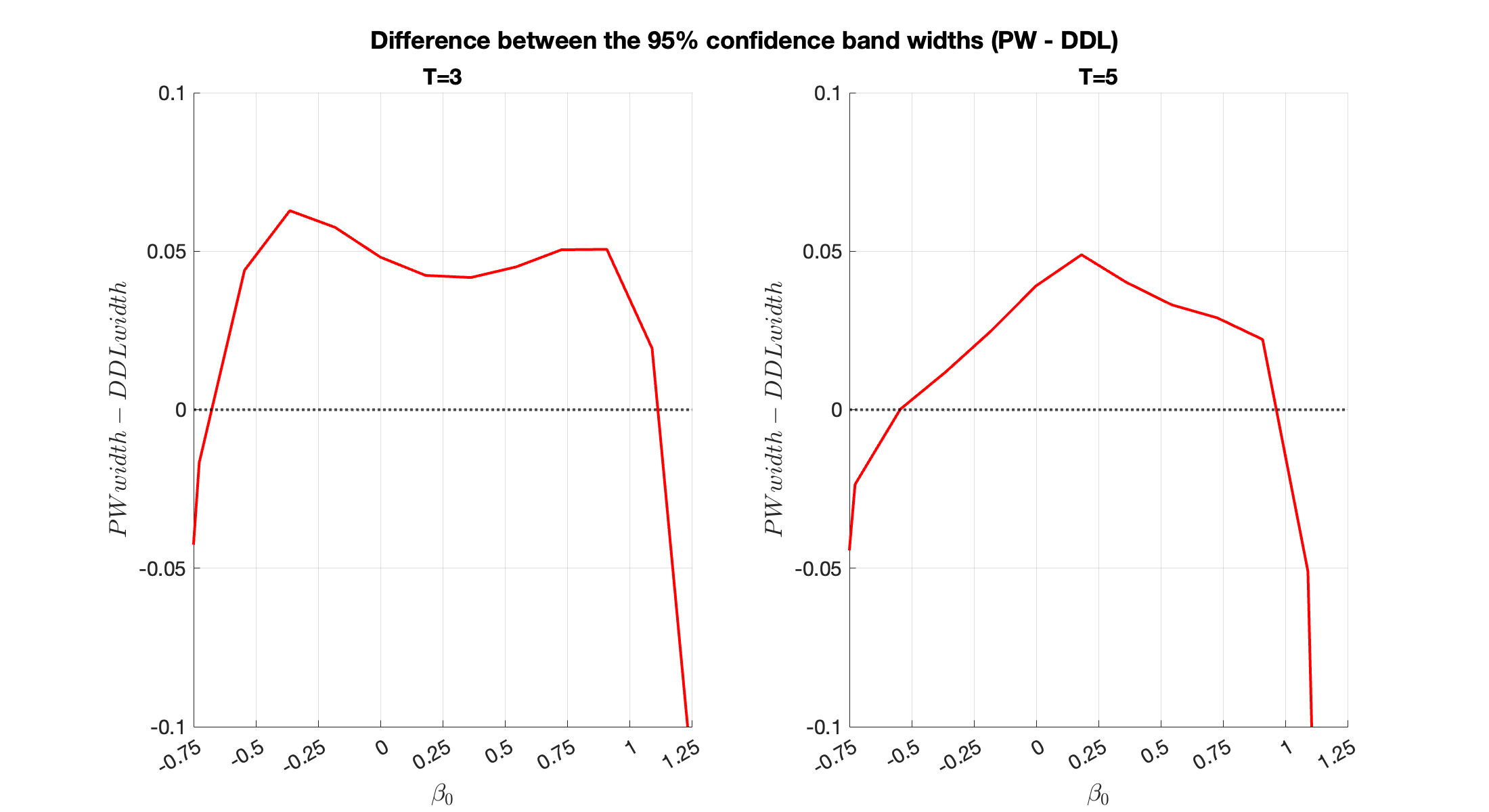}
  \end{subfigure}

  \caption{Comparison of the DDL and PW confidence bands (for the DDL and PW outer bounds). The DGP is $Y_{it} = 1\left\{ X_{it}\beta + A_i \geq \varepsilon_{it} \right\}$ where $\varepsilon_{it} \sim {\rm Logit} (0,1)$, $A_i \sim N(0,1)$, $X_{it} = 1\left\{ A_i \geq \eta_{it} \right\}$ and $\eta_{it} \sim N(0,1)$. The average effect under investigation is $\mathbb E \{ 
    \mathbb E[ Y_{it} | X_{it}=1, A_i, \beta_0]-\mathbb E[ Y_{it} | X_{it}=0, A_i, \beta_0]\}.$ Results for the DDL method are obtained using the \texttt{MarginalFElogit} package in R. For information on the PW bounds see Section \ref{sec:ddlvspw}. All presented bound results are cross-replication averages.}
  \label{fig:ddlpw2}
\end{figure}

\newpage

\section{Computational and practical details}
\label{sec:compdetails}

\subsection{Implementational details}\label{sec:appimp}

In practice, we usually cannot solve the linear programs \eqref{eq:GeneralOptimization} and \eqref{eq:UnifOptimization} exactly. This is because the functions \eqref{eq:linprog1} and \eqref{eq:s} require evaluation over $\mathcal{A}$ which  typically  has infinite cardinality. Instead, we approximate these objects by choosing a subset of grid points ${\mathcal{A}}_{\rm g} \subset \mathcal{A}$ and imposing 
the constraints only at $a \in \mathcal{A}_{\rm g}$. This yields
\begin{align*}
& Q(\ell(\cdot),u(\cdot), z, \beta) = \sum_{a \in \mathcal{A}_{\rm g}} \,  \sum_{y\in \mathcal{Y}}\left[ u(y)-\ell(y) \right] \,f(y \, | \, z,a;\beta )\,p(a|z)\, ,
\end{align*}
and
\begin{align*}
    s
    =
    \max_{a\in \mathcal{A}_g} \left[ \sum_{y\in \mathcal{Y}}\left[ u(y)-\ell(y) \right] \,f(y \, | \, z,a;\beta ) \right] .
\end{align*}

The size of the grid $\mathcal{A}_g$ directly controls the number of restrictions in \eqref{eq:GeneralOptimization} and  \eqref{eq:UnifOptimization}, and so, especially in complicated applications, computational concerns may put a limit on how fine the grid $\mathcal{A}_g$ can be. However, even then, our approach provides an easy way to obtain solutions that work on a much finer grid. To illustrate how, let $\mathcal{A}_g,\mathcal{A}_G \subset \mathcal{A}$ be two grids where the cardinality of $\mathcal{A}_G$ is 
(much) larger than that of $\mathcal{A}_{g}$. Let $(L(z,y,\beta),U(z,y,\beta))$ be the solution to, say, \eqref{eq:GeneralOptimization} on $\mathcal{A}_g$. It is computationally almost cost-free to check whether this solution satisfies the restriction \eqref{eq:boundcondition} on $\mathcal{A}_G$. If violations occur, one can adjust the original solution $(L(z,y,\beta),U(z,y,\beta))$ to fit the restriction \eqref{eq:boundcondition} on $\mathcal{A}_G$, thereby obtaining a valid solution to the constraints on the much finer  grid $\mathcal{A}_G$. A simple way to do this is to replace the original solution by $(\dot{L}(z,y,\beta),\dot{U}(z,y,\beta))$ where
\begin{align*}
    & \dot{L}(z,y,\beta) := L(z,y,\beta) + \min \left\{ 0, \textstyle{\min_{a \in \mathcal{A}_G} }
    \left[m(z,a,\beta) - \textstyle{\sum_{y\in \mathcal{Y}}} L(z,y,\beta) f(y|z,a;\beta)\right] \right\}, \\
    & \dot{U}(z,y,\beta) := U(z,y,\beta) + \max\left\{0,\textstyle{\max_{a \in \mathcal{A}_G} } \left[m(z,a,\beta) - \textstyle{\sum_{y\in \mathcal{Y}}} U(z,y,\beta) f(y|z,a;\beta)\right]\right\}.
\end{align*}
Here, we simply add the maximum deviation across all grid points to the original solution, automatically yielding a solution that satisfies \eqref{eq:boundcondition} on $\mathcal{A}_G$. The grid $\mathcal{A}_G$ can be made very fine, making the difference between $\mathcal{A}_G$ and $\mathcal A$ negligible. Of course, depending on the case at hand, one can devise options that yield less ``conservative'' solutions compared to $(\dot{L}(z,y,\beta),\dot{U}(z,y,\beta))$. We suggest comparing results from different selections of  $(\mathcal{A}_g,\mathcal{A}_G)$ in order to find some $\mathcal{A}_g$ which is computationally feasible and yet yields reliable bounds.

Computational properties will also depend on the cardinality of $\mathcal{Y}$, which determines the number of variables in the  programs \eqref{eq:GeneralOptimization} and \eqref{eq:UnifOptimization}. Although the cardinality of $\mathcal{Y}$ is finite in a fixed-$T$ setting, the size of the support can still be large enough to cause computational difficulties. In logit-based applications it is relatively straightforward to mitigate this problem, by re-writing, e.g., the restriction \eqref{eq:boundcondition} in terms of the conditional density of the sufficient statistic. We next illustrate this for Examples \ref{ex:statlogit} and \ref{ex:rclogit}.

\setcounter{example}{0}

\begin{example}[continued]
Fix $z$ and $\beta$. Let $y=(y_1,...,y_T)'$ and define  
\begin{equation*}
    \overline{P}(k|z,a,\beta) := \sum_{ \{y: \sum_t y_t = k \} } P(y|z,a,\beta),
\end{equation*} 
the conditional density of the sufficient statistic $\sum_{t=1}^T y_t$. Then, there is some $\overline{u}(k)$ such that
\begin{equation*}
	m(z,a,\beta) \leq \sum_{y\in\mathcal Y} u(y)P(y|z,a,\beta) = \sum_{k=0}^T \overline{u}(k)\overline{P}(k|z,a,\beta), \quad \forall a \in \mathbb{R}.
\end{equation*}
As such, one can solve the problem for $\overline{u}(k)$, $k=0,...,T$, and then use, for instance, $u(y)=\overline{u}(k)$ for all $y$ with $\sum_{t=1}^T y_t = k$. This effectively decreases the number of variables from $2^T$ to $T+1$. An analogous argument applies to $\ell(y)$.
\end{example}

\begin{example}[continued]
	Suppose that $z_t$ is binary. In this case there are two unobserved effects, and so the argument will be based on the sufficient statistics $\sum_{t=1}^T y_t$ and $\sum_{t=1}^T y_t z_t$. Fix $z=(z_1,...,z_T)'$. Define $\overline{P} (k_1, k_2 | z ,a_1 ,a_2)$, the conditional density of $k_1=\sum_{t=1}^T y_t$ and $k_2=\sum_{t=1}^T y_t z_t$. Then, similar to Example \ref{ex:statlogit}, there is some $\overline{u}(k_1,k_2)$ such that
	\begin{eqnarray*}
		m(z,a_1,a_2) 
		&\leq & \sum_{y\in\mathcal Y} u(y)P(y|z,a_1,a_2) 
		\\
		&=& \sum_{k_1=0}^T \sum_{k_2=0}^T \overline{u}(k_1,k_2)\overline{P} (k_1, k_2 |z,a_1,a_2 ), \quad \forall a_1 \in \mathbb{R} \text{ and } a_2 \in \mathbb{R},
	\end{eqnarray*}
	implying that it is sufficient to solve the linear program for $\overline{u}(k_1,k_2)$. At first sight, it appears that the number of variables in this problem is $(T+1)^2$. However, notice that one cannot have  $k_2 > k_1$. Moreover, for a given $z$ some combinations of $(k_1,k_2)$ will have zero probability. Consequently, the actual number of variables will usually be less than $(T+1)^2$. We note this method will not work with continuous covariates, since in that case $z$ has infinite support and so does $k_2$.
\end{example}

An analogous idea for non-logit applications (where the outlined approach does not necessarily exist) is to reduce the dimension of the problem by partitioning the support $\mathcal Y$ using some meaningful criterion. One can, for example, partition $\mathcal Y$ such that $\arg \max_{a \in \mathcal{A}_g} P(y|z,a,\beta)$ is the same for all $y\in \mathcal Y$ in the same subset. Generation of the partition can also be based on extra information specific to the application or data at hand.

Another way to decrease the computational complexities is to solve the linear program separately for the upper and lower bounds. Note that \eqref{eq:GeneralOptimization}  and \eqref{eq:UnifOptimization} put the restriction $\ell(\cdot) \leq u(\cdot)$ to avoid any crossover between the upper and lower bounds. Solving the bound problem separately would 
drop this convenient additional condition. However, for moderately large $T$ the probability of such a crossover between the bounds is expected to be quite low, 
and
a potential solution of the crossover problem is, for example, given in
\cite{Stoye21}.

In summary, computational feasibility depends primarily on $|\mathcal{A}_g|$ (which determines the number of constraints) and $|\mathcal{Y}|$ (which determines the number of variables). For logit-based models, the sufficient statistics reformulation keeps $|\mathcal{Y}|$ manageable even for moderately large $T$. For non-logit models without this structure, computation may become challenging when $T$ exceeds approximately 8, where $|\mathcal{Y}| = 2^T = 256$. The computation times reported in Table~\ref{tbl:comptimes} in the next section provide concrete guidance for the models considered in this paper.

\subsection{Computation times}\label{sec:comptimes}

In Table \ref{tbl:comptimes} we present the average estimation times required for obtaining the outer bounds in each of the four models considered in Section \ref{sect:setcomparison}. The average estimation times are calculated by estimating the outer bounds once for each value of $\beta_0$ on a grid of 23 equidistant values between $-2$ and $2$, and then taking the average of the elapsed time across all grid points. This grid for $\beta_0$ exactly corresponds to the grid of values we used in running the simulations. 

We note that the support of the outcome variable for the random coefficient static logit model depends on the the structure of the sufficient statistic used in this exercise (see the information provided in Example 2).

We use Matlab for coding and estimation. All results were obtained on a MacBook Pro (Apple M1 Max chip (10 Cores) and 64 GB RAM capacity). We did not use parallel computing tools in obtaining the results presented in Table \ref{tbl:comptimes}.
\begin{table}[H]
  \centering
  \small
  \renewcommand{\arraystretch}{0.85}
  \begin{tabular}{@{} c c c r c c  @{}}
    \toprule
    \multicolumn{1}{c}{Model} & \multicolumn{1}{c}{Conditioning variables} & \multicolumn{1}{c}{T} & \multicolumn{1}{c}{\shortstack{Estimation time\\(seconds)}} & \multicolumn{1}{c}{$|\mathcal A _g|$} & \multicolumn{1}{c}{$|\mathcal Y|$} \\
    \midrule
    \multirow{6}{*}{\shortstack{Static logit\\(Figure \ref{fig:idset_logit})}} 
    & $X_{it}$ (cont.) & 2  & 3.41  & 100  & 4 \\
    & $X_{it}$ (cont.) & 3  & 3.88  & 100  & 8 \\
    & $X_{it}$ (cont.) & 5  & 7.08  & 100  & 32 \\
    & $X_{it}$ (disc.) & 2  & 3.27  & 100  & 4 \\
    & $X_{it}$ (disc.) & 3  & 3.81  & 100  & 8 \\
    & $X_{it}$ (disc.) & 5  & 6.05  & 100  & 32 \\
    \midrule
    \addlinespace
    \multirow{4}{*}{\shortstack{RC static logit\\(Figure \ref{fig:idset_rc})}} 
    & 
    \multirow{4}{*}{$X_{it}$ (disc.)}
    & 3  & 26.22 & 2500 & 4 to 6 \\
    & & 5  & 49.36 & 2500 & 6 to 12 \\
    & & 8  & 118.59& 2500 & 9 to 25 \\
    & & 10 & 419.23& 2500 & 11 to 36 \\
    \midrule
    \addlinespace
    \multirow{3}{*}{\shortstack{Dynamic logit\\(Figure \ref{fig:idset_dlogit})}} 
    &
    \multirow{3}{*}{$Y_{i,t-1}$ \& $X_{it}$ (cont.)} 
    & 4  & 5.11  & 50   & 16 \\
    & &  6  & 13.35 & 50   & 64 \\
    & &  8  & 138.31  & 50   & 256 \\
    \midrule
    \addlinespace

    \multirow{4}{*}{\shortstack{RC dynamic logit\\(Figure \ref{fig:idset_rcd})}} 
    &
    \multirow{4}{*}{$Y_{i,t-1}$} 
    & 3  & $<1$  & 2500 & 8 \\
    & & 5  & $<1$  & 2500 & 32 \\
    & & 8  & $<1$  & 2500 & 256 \\
    & & 10 & $<1$  & 2500 & 1024 \\
    \bottomrule
  \end{tabular}

  \caption{Average time (in seconds) to obtain the outer bounds for a single model. The models correspond to those considered in Section \ref{sect:setcomparison}. Average times are based on one replication for each value of $\beta_0$ on a grid of 23 equidistant values between $-2$ and $2$. $\mathcal A _g$ is the grid for $\mathcal A$ used in outer bound estimation. $\mathcal Y$ is the support of the outcome variable. For more information on the models, see the information provided in Section \ref{sect:setcomparison}.}
  \label{tbl:comptimes}
\end{table}

\newpage

\section{Additional details}\label{sec:addsimdetails}
In this part we present some additional details on the results obtained in Sections \ref{sect:setcomparison} and \ref{sect:simulations}. First we consider a comparison of the widths of outer bounds and the identified sets for all the examples considered in Section \ref{sect:setcomparison}. In particular (i) Figure \ref{fig:slogit_width} represents this information for the static logit model example of Figure \ref{fig:idset_logit}; (ii) Figure \ref{fig:rcslogit_width} provides this information for the random coefficient static logit model example of Figure \ref{fig:idset_rc}; (iii) Figure \ref{fig:dlogit_width} represents this comparison for the dynamic logit model analysis of Figure \ref{fig:idset_dlogit}; and finally (iv) Figure \ref{fig:rcdlogit_width} provides the same for the random coefficient dynamic logit analysis of Figure \ref{fig:idset_rcd}. We note that the results for the dynamic logit model do not include the widths of the identified sets as in this example the average effect is point-identified.

Second, for the static logit and random coefficient static logit models considered in the simulation analysis of Section \ref{sect:simulations}, we provide additional figures that show the width difference between (i) the outer bounds and the identified set, and (ii) the confidence bands and the outer bounds. The former provides a measure of the `loss' due to using outer bounds rather than the identified set, whereas the latter yields a measure of the `loss' due to imprecision in the estimation of confidence bands. Figures \ref{fig:slogit_uncertainty} and \ref{fig:rcslogit_uncertainty} provide this information for the static logit and random coefficient static logit model analyses of Figures \ref{fig:sim_logit_pb} and \ref{fig:sim_randc}, respectively.

\begin{figure}[H]
    \centering
    \begin{minipage}{0.5\textwidth}
        \centering
        \includegraphics[width=1\textwidth]{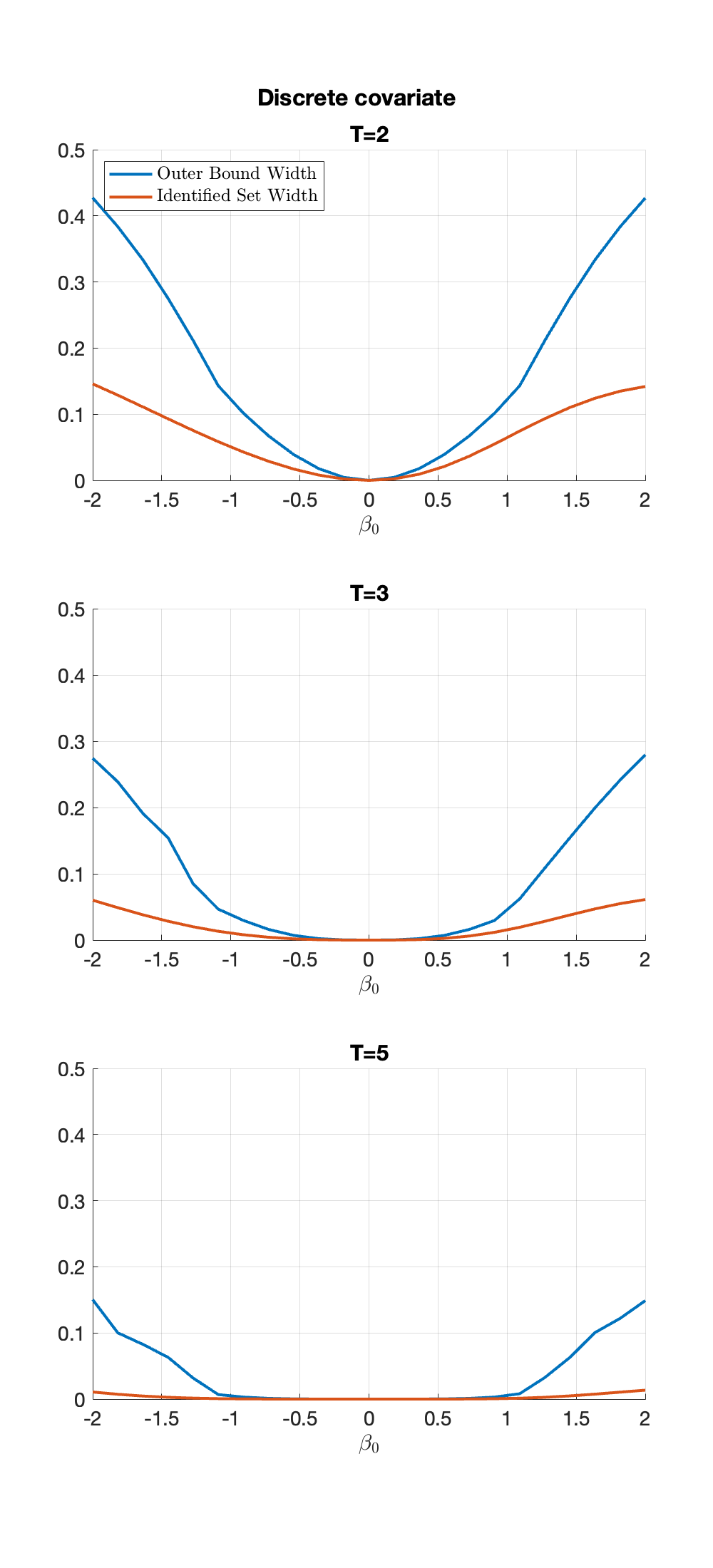}
    \end{minipage}\hfill
    \begin{minipage}{0.5\textwidth}
        \centering
        \includegraphics[width=1\textwidth]{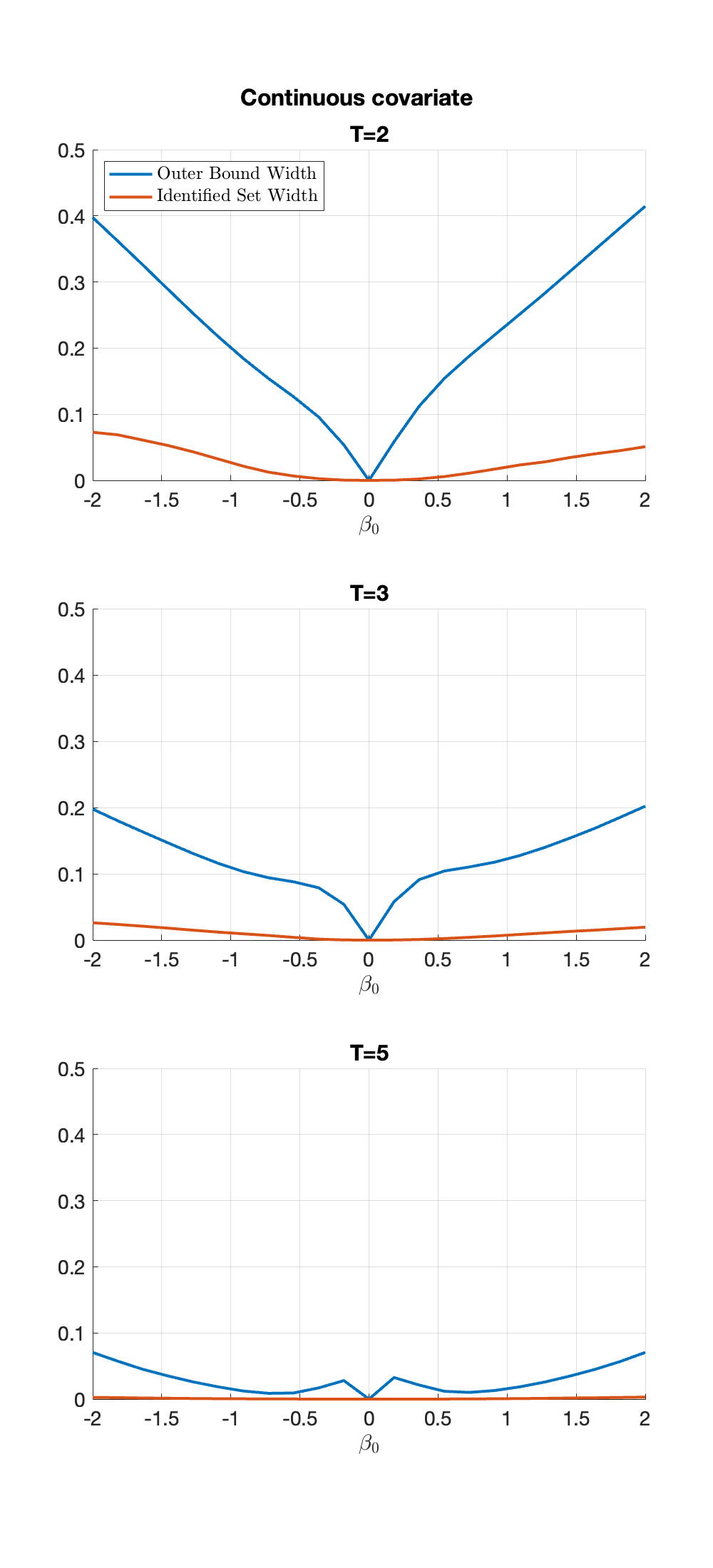}
    \end{minipage}
    \caption{Widths of the outer bounds and identified sets for the static logit model analysis of Figure \ref{fig:idset_logit}.}
    \label{fig:slogit_width}
\end{figure}

\begin{figure}[H]
    \centering
    \includegraphics[width=0.7\linewidth]{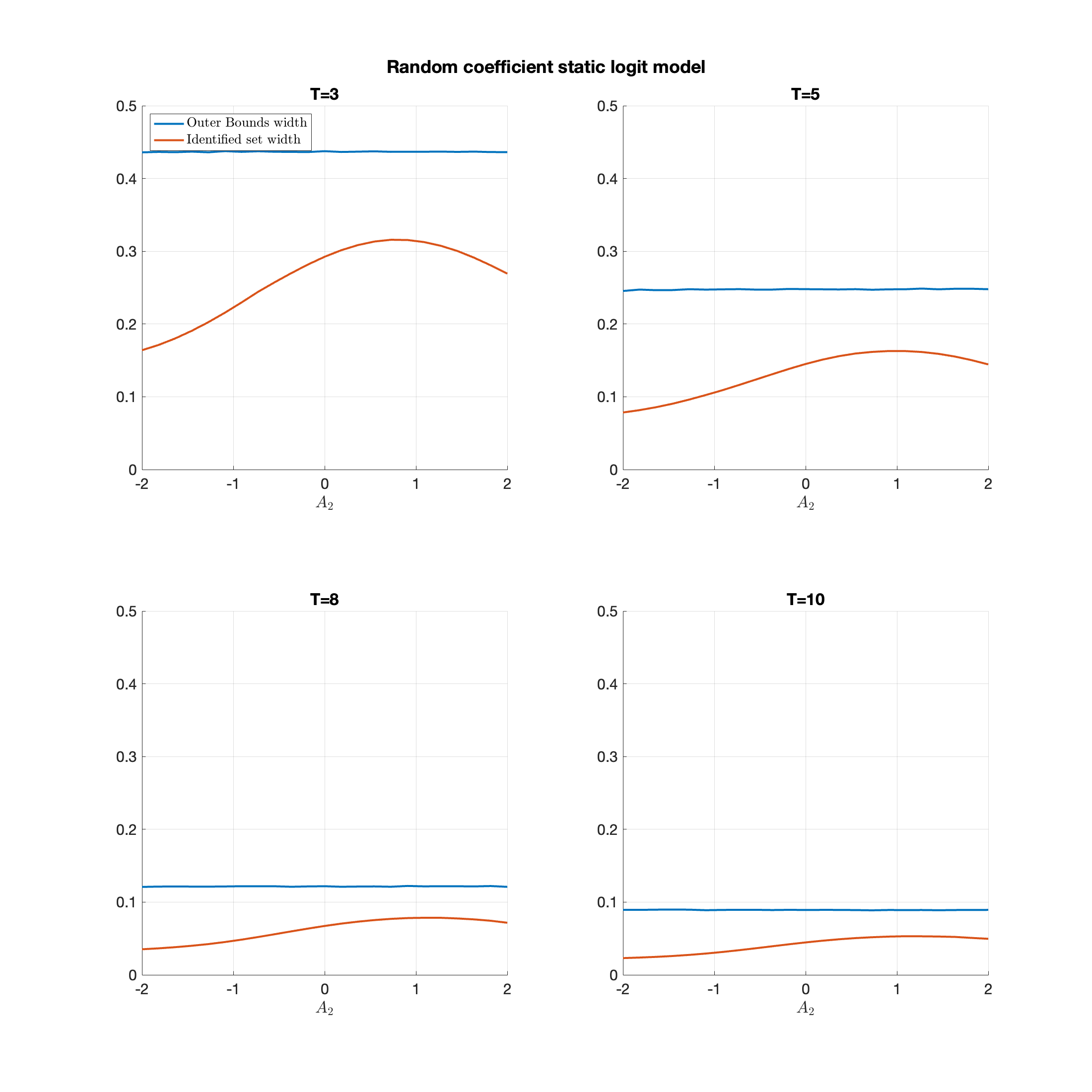}
    \caption{Widths of the outer bounds and identified sets for the random coefficient static logit model analysis of Figure \ref{fig:idset_rc}.}
    \label{fig:rcslogit_width}
\end{figure}

\begin{figure}[H]
    \centering
    \includegraphics[width=0.5\linewidth]{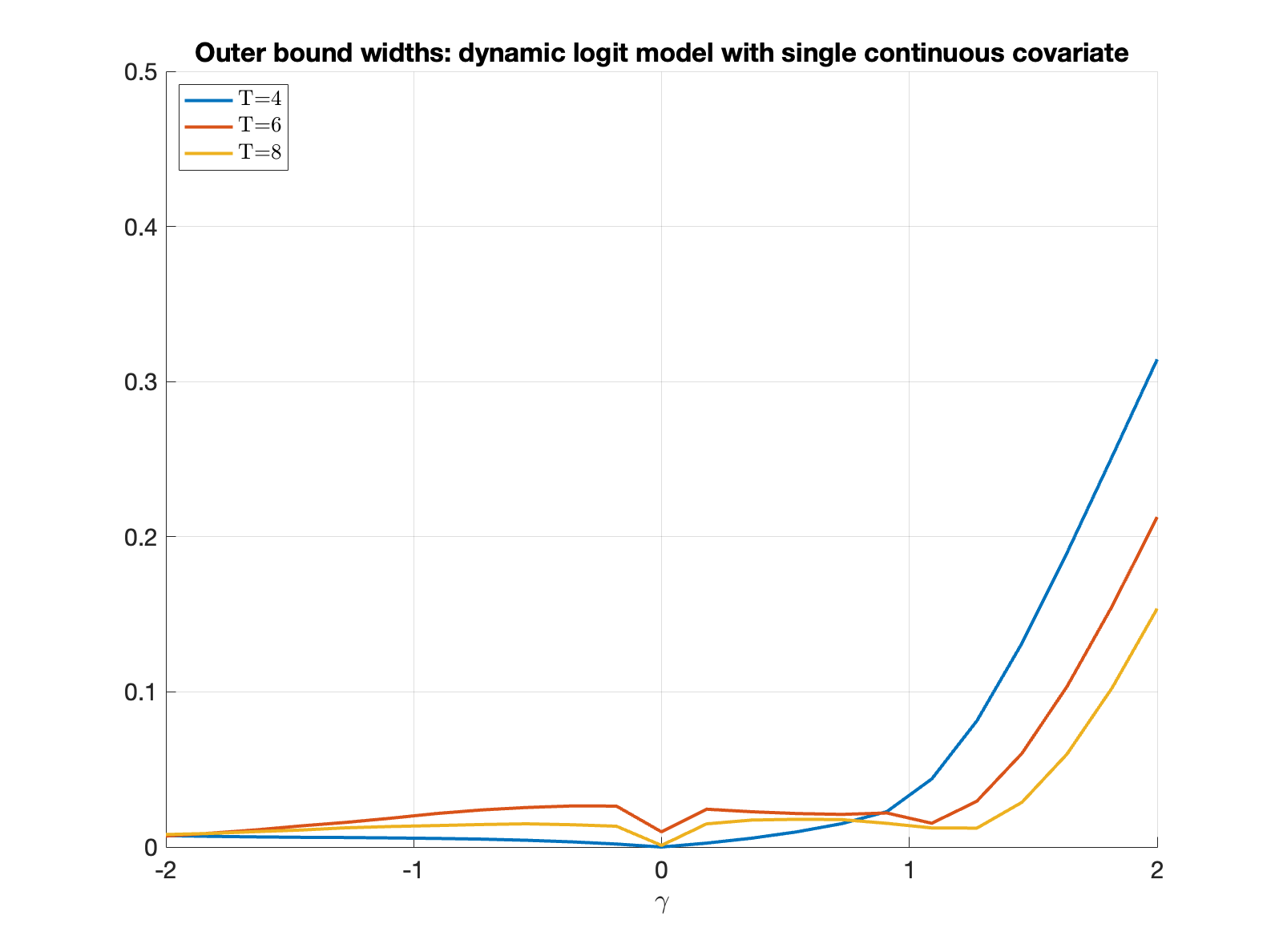}
    \caption{Widths of the outer bounds for the dynamic logit model analysis of Figure \ref{fig:idset_dlogit}.}
    \label{fig:dlogit_width}
\end{figure}

\begin{figure}[H]
\begin{center}
    \includegraphics[width=0.9\linewidth]{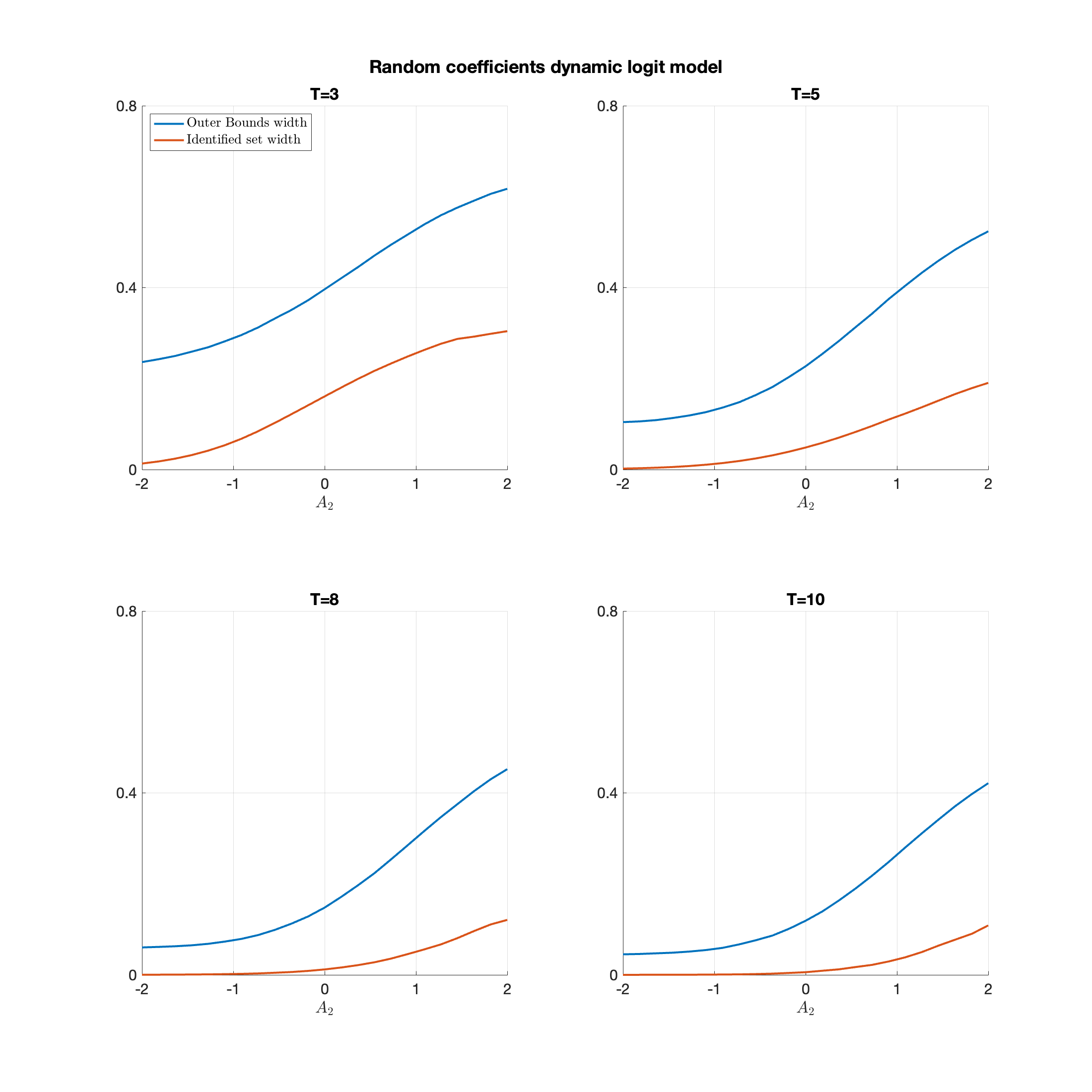}
    \caption{Widths of the outer bounds and identified sets for the random coefficient dynamic logit model analysis of Figure \ref{fig:idset_rcd}.}
    \label{fig:rcdlogit_width}
\end{center}
\end{figure}

\begin{figure}[H]
    \centering
    \begin{minipage}{0.5\textwidth}
        \centering
        \includegraphics[width=1\textwidth]{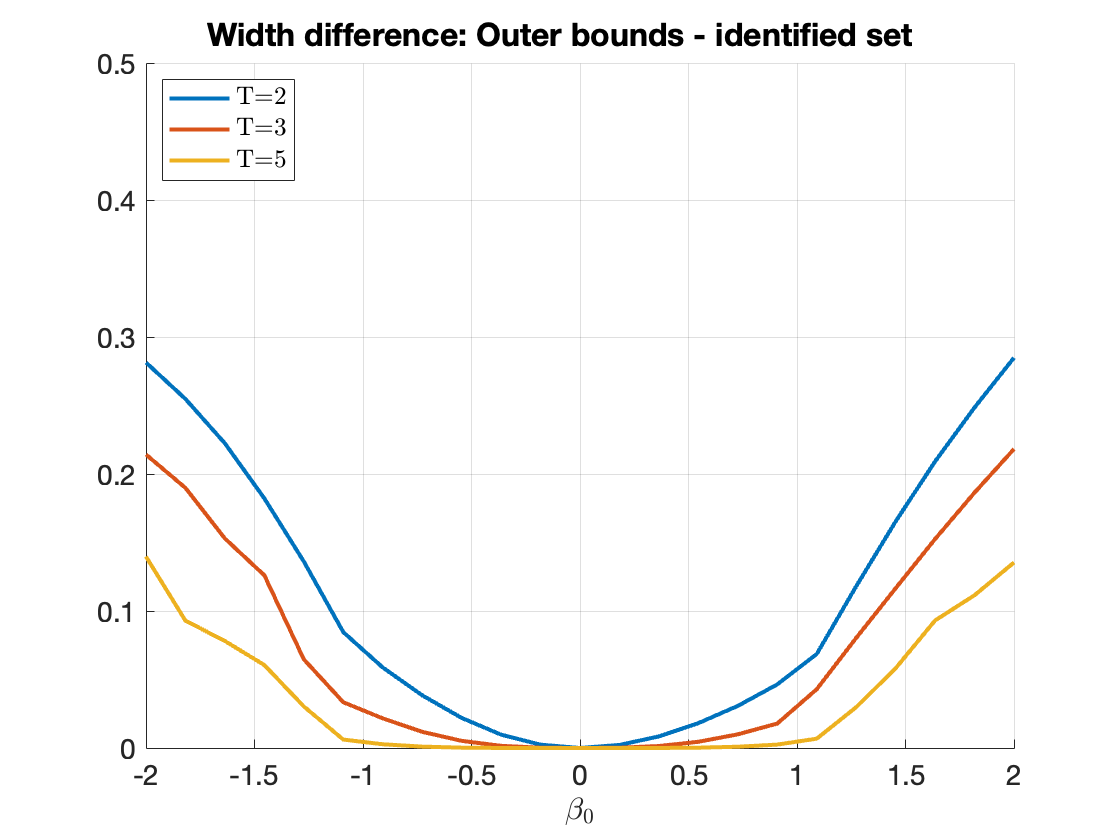}
    \end{minipage}\hfill
    \begin{minipage}{0.5\textwidth}
        \centering
        \includegraphics[width=1\textwidth]{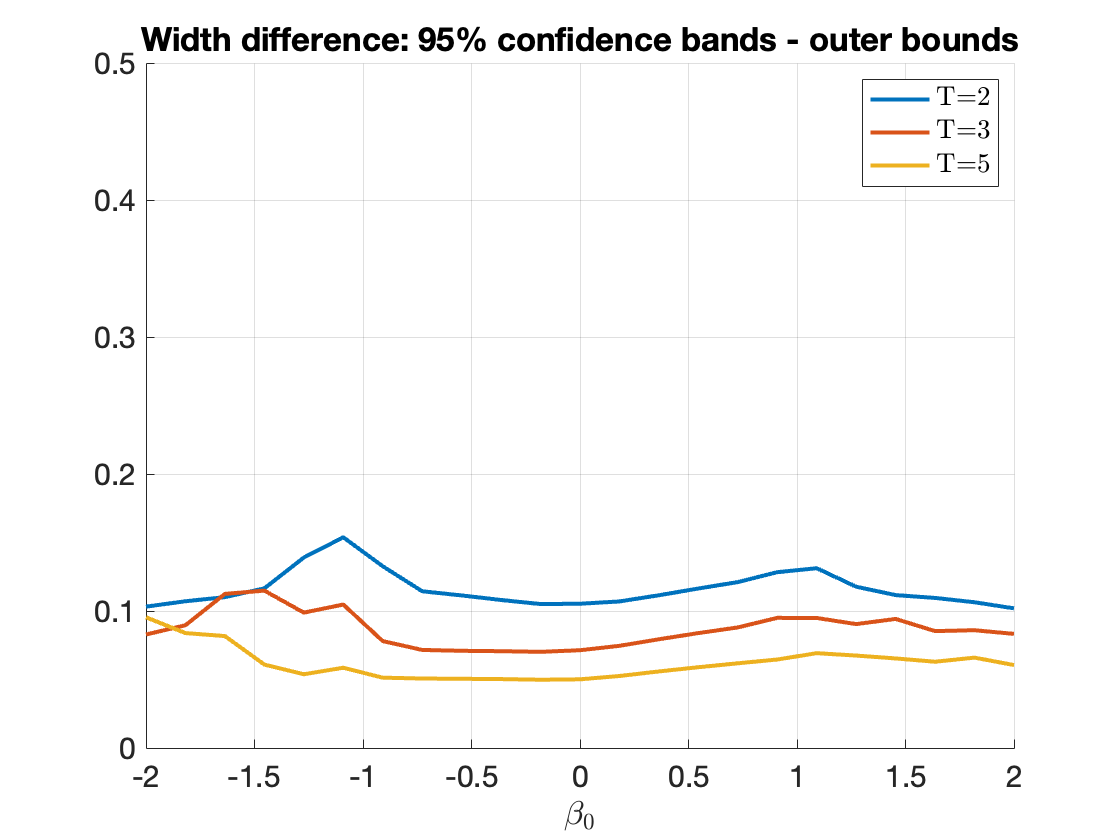}
    \end{minipage}
    \caption{Width differences between the outer bounds and the identified set (left panel) and the confidence bands and the outer bounds (right panel) for the static logit analysis of Figure \ref{fig:sim_logit_pb}.}
    \label{fig:slogit_uncertainty}
\end{figure}

\begin{figure}[H]
    \centering
    \begin{minipage}{0.5\textwidth}
        \centering
        \includegraphics[width=1\textwidth]{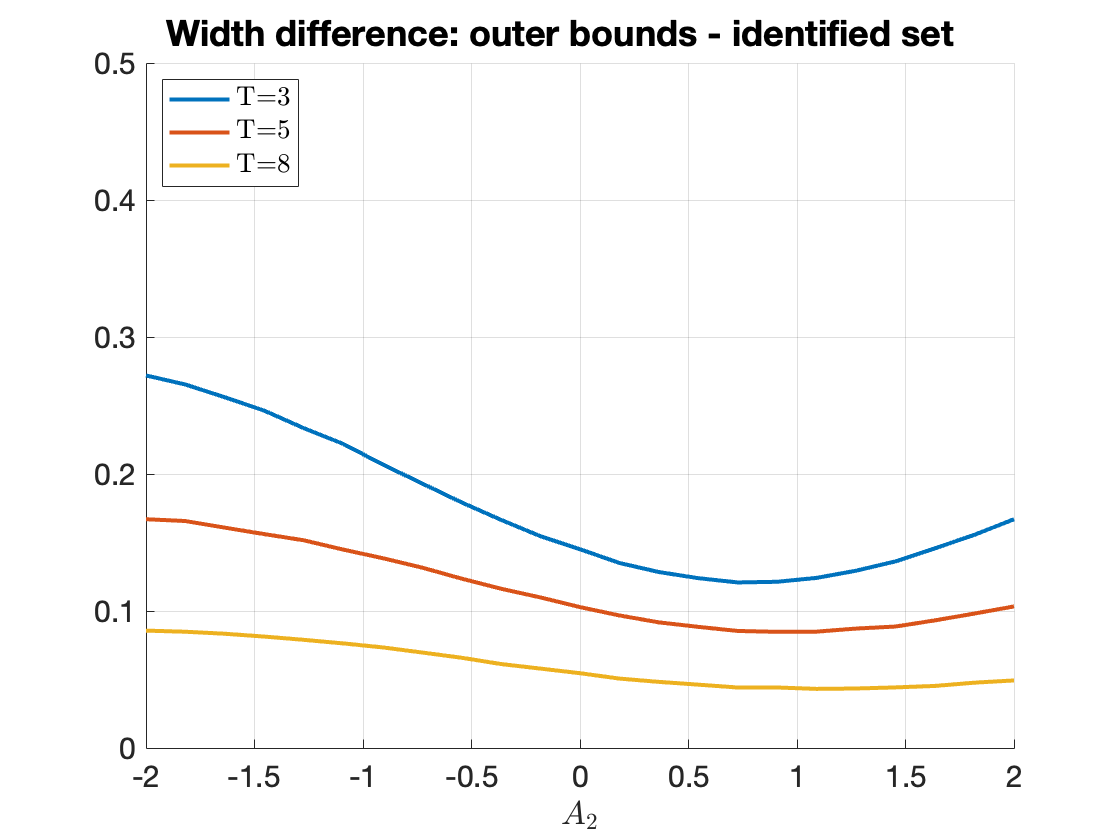}
    \end{minipage}\hfill
    \begin{minipage}{0.5\textwidth}
        \centering
        \includegraphics[width=1\textwidth]{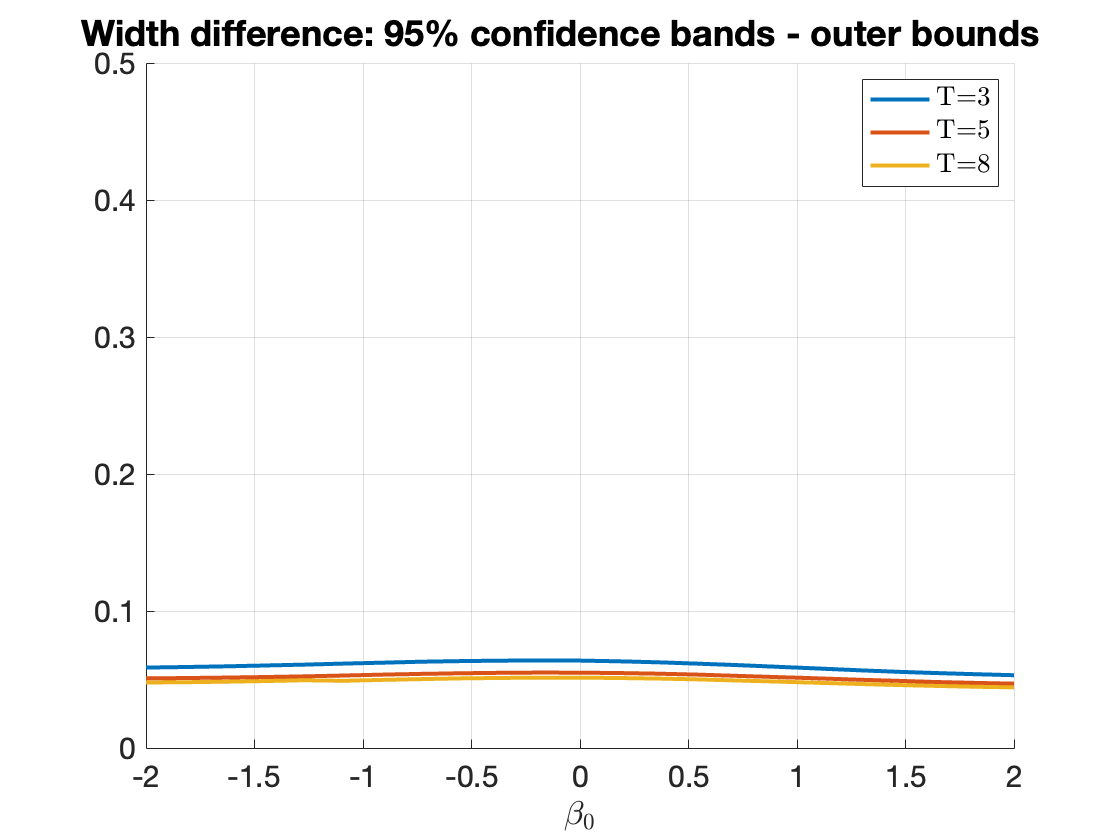}
    \end{minipage}
    \caption{Width differences between the outer bounds and the identified set (left panel) and the confidence bands and the outer bounds (right panel) for the random coefficient static logit analysis of Figure \ref{fig:sim_randc}.}
    \label{fig:rcslogit_uncertainty}
\end{figure}

\newpage
\section{Identified set calculations}\label{sec:archident}
\subsection{Identified set calculations underlying Figures \ref{fig:comb1} and \ref{fig:idanalysis_probit}}\label{sec:idmanual1}
To construct Figures \ref{fig:comb1} and \ref{fig:idanalysis_probit}, we obtain estimates of the identified sets for the average effects. In both cases, our objective is to compare inference for the set-identified average effect when one uses outer bounds vs when one uses the identified set. Therefore, in order to abstract away from estimation/identification of $\beta_0$, in the logit case we use the true $\beta_0$ whereas in the probit case we use the population identified set for $\beta_0$.

In Section \ref{sec:t10p1} we go through the algorithm for obtaining the identified set for $\beta_0$. This is followed by Section \ref{sec:aeidrecipe} where we provide the algorithm for estimating the identified set for the average effect when the identified set for $\beta_0$ is already given. This algorithm can also be used for the case where $\beta_0$ is point-identified: in that case, the algorithm has to be run only once, for the true value of $\beta_0$.

\subsubsection{Obtaining the identified set for $\beta_0$}\label{sec:t10p1}
The analysis is conducted for a grid of values of $\beta_0$ between -2 and 2; call this \texttt{b0grid}. In the following, we explain how the identified set for $\beta_0$ is obtained for a given value $\mathrm{\mathbf{b}}$ on this grid:
\begin{enumerate}
    \item Given $\mathrm{\mathbf{b}}$, we first generate a grid of values around $\mathrm{\mathbf{b}}$, which we will use to construct the identified set; call this grid $\texttt{idgrid}(\mathrm{\mathbf{b}})$. This grid is generated using quantiles of the normal distribution $N(\mathrm{\mathbf{b}},\sigma^2)$. $\sigma^2$ is user-specified and can be used to make the grid tighter/looser around $\mathrm{\mathbf{b}}$. 
    
    \item To check whether a value on $\texttt{idgrid}(\mathrm{\mathbf{b}})$ belongs to the identified set, we use the following algorithm:
    
    \begin{enumerate}
        \item Let $y\in \mathcal Y$ and $x\in \mathcal X$ be some chosen realizations of the outcome and covariate variables. 
        
        \item We discretize $A$ as follows: first we obtain a grid of $Q+1$ equi-distant values between some user-specified lower and upper bounds; call this \texttt{agrid} which consists of $a_0,\ldots,a_Q$. Next, for the given $x\in \mathcal X$ obtain the average of the entries of $x$. Calling this $\overline x$, we next calculate the cdf of $N(\overline x - 1/2,1)$ at every value on \texttt{agrid}. Let the resulting set of values be called $\texttt{agridcdf}(x)$ with elements $\texttt{agridcdf}_0(x),\ldots,\texttt{agridcdf}_Q(x)$.
        Then, the conditional probabilities for the discretized $A$ are  given by the $Q \times 1$ vector $P_{A|x}$:
        \begin{align*}
            P_{A|x}
            =
            \begin{bmatrix}
                P(a_1|x)
                \\
                \vdots
                \\
                P(a_Q|x)
            \end{bmatrix}
            =
            \begin{bmatrix}
                \texttt{agridcdf}_{1}(x)- \texttt{agridcdf}_{0}(x)
                \\
                \vdots
                \\
                \texttt{agridcdf}_{Q}(x)- \texttt{agridcdf}_{Q-1}(x)
            \end{bmatrix}
            .
        \end{align*}

        \item Let $P_{Y|x}$ be the $|\mathcal Y|\times 1$ vector with the $r^{th}$ entry given by $P(y_r|x)$ where $y_r$ is the $r^{th}$ element of $|\mathcal Y|$. Let, also the $|\mathcal Y| \times Q$ matrix $P_{Y|x,A}(\beta)$ be defined such that its row $r$ and column $q$ entry is given by $P(y_r|x,a_q;\beta)$.

        Let $x_c$ be the $c^{th}$ element of $\mathcal X$. Then, for the case where the true parameter value is $\mathbf{b}$, the model probabilities conditional on $x_c$ are given by the $|\mathcal Y| \times 1$ vector
        \begin{align*}
            P_{Y|x_c} = P_{Y|x_c,A}(\mathbf{b}) \,\, P_{A|x_c}.
        \end{align*}
        Consequently, the 
        $|\mathcal Y | \times |\mathcal X |$ matrix 
        \begin{align*}
            P_{Y|X} 
            =
            \begin{bmatrix}
                P_{Y|x_1} 
                &
                \cdots
                &
                P_{Y|x_{|\mathcal X|}} 
            \end{bmatrix},
        \end{align*}
        yields the full collection of model probabilities.

        \item We can now check whether a particular value $\beta$ on $\texttt{idgrid}(\mathrm{\mathbf{b}})$ belongs to the identified set. For a given $x\in \mathcal X$, consider the minimization problem
        \begin{align*}
            \Delta_Y(x, \beta)
            =
            \min_{\mathbf{p}_1(x),\ldots,\mathbf{p}_Q(x)}
            \left| \left| 
                P_{Y|x} 
                -
                P_{Y|x,A}(\beta) \, 
                \begin{bmatrix}
                    \mathbf{p}_1(x)
                    \\
                    \vdots
                    \\
                    \mathbf{p}_Q(x)
                \end{bmatrix}
            \right| \right|    
            ,
        \end{align*}
        subject to
        \begin{align}
            \sum_{q=1}^Q \mathbf{p}_q(x) = 1
            \qquad
            \text{and}
            \qquad
            \mathbf{p}_q(x) \geq 0 
            \quad
            \forall q=1,\ldots,Q.
            \label{T10r}
        \end{align}
        Let \texttt{tol} be some pre-specified tolerance value. Then, we retain a $\beta \in \texttt{idgrid}(\mathrm{\mathbf{b}})$ in the identified set if
        \begin{align*}
            \Delta_Y(x,\beta)<\texttt{tol} 
            \qquad
            \text{for all } 
            x\in \mathcal X.
        \end{align*}
        The default \texttt{tol} is set to $10^{-6}$. If this condition is violated, then this $\beta$ does not belong to the identified set.
        
    \end{enumerate}
    Running the above algorithm for all $\beta \in \texttt{idgrid}$ yields the identified set for $\mathbf{b}$.

    \item Going through these steps for every $\mathbf{b}\in \texttt{b0grid}$ yields the required information for constructing the identified set for $\beta_0$ across the grid of points \texttt{b0grid}.

\end{enumerate}

\subsubsection{Obtaining the estimated identified set for the average effect}\label{sec:aeidrecipe}

Here we obtain the identified set for the average effect separately for each $\mathbf{b}\in\texttt{b0grid}$. Therefore, the following steps have to be repeated for each $\mathbf{b}\in \texttt{b0grid}$. 
\begin{enumerate}
    \item We first generate a grid of \texttt{Qae} equi-distant points
    between the upper and lower bounds of the identified set for $\beta_0$ when $\beta_0=\mathbf b$. Call this set of grid points \texttt{beta\_id\_grid}. \textbf{If $\beta_0$ is point-identified}, \texttt{beta\_id\_grid} for a given $\mathrm{\mathbf{b}}$ simply collapses to $\mathrm{\mathbf{b}}$.

    \item Given a grid of values for $A_i$ given by $a_1,\ldots,a_Q$, for every $\beta \in \texttt{beta\_id\_grid}$ we go through the following steps (if $\beta_0$ is point-identified, then $\texttt{beta\_id\_grid}=\mathrm{\mathbf{b}}$ so the following steps are executed only once, for $\beta=\mathrm{\mathbf{b}}$):
    \begin{enumerate}
        \item We first estimate some population probabilities by using their sample analogues (empirical probabilities). In particular, for every $x \in \mathcal X$ we obtain $\widehat P (x)$, and the $|\mathcal Y| \times 1$ matrix $\widehat P _{Y|x}$ with its row $i$ entry given by $\widehat P (y_i | x)$ where $i=1,\dots,|\mathcal Y|$.

        \item Next, for each $x\in \mathcal X$ we obtain the 
        $|\mathcal Y| \times Q$ matrix $ P _{Y|x,A}(\beta)$ 
        with the row $i$ and column $q$ entry given by $ P(y_i|x,a_q;\beta)$ where $i=1,\ldots,|\mathcal Y|$ and 
        $q=1,\ldots,Q$. 
    
        \item We then obtain, for each $x\in \mathcal X$,
        \begin{align*}
            \widetilde P_{Y|x} (\beta)
            =
            P_ {Y|x,A}(\beta) 
            \begin{bmatrix}
                \widetilde{\mathbf{p}} _1 (x)
                \\
                \vdots
                \\
                \widetilde{\mathbf{p}} _Q (x)
            \end{bmatrix}
        \end{align*}
        where
        \begin{align*}
            \begin{bmatrix}
                \widetilde{\mathbf{p}} _1 (x)
                \\
                \vdots
                \\
                \widetilde{\mathbf{p}} _Q (x)
            \end{bmatrix}
            = 
            \arg\min_{\widetilde{\mathbf p} _1,\ldots,\widetilde{\mathbf p} _Q} \left|\left|
            \widehat P _{Y|x}
            -
            P_ {Y|x,A} (\beta)
            \begin{bmatrix}
                \widetilde{\mathbf{p}} _1
                \\
                \vdots
                \\
                \widetilde{\mathbf{p}} _Q
            \end{bmatrix}
            \right|\right|.
        \end{align*}
    
        \item In the next stage, defining
        $1\times Q$ row vector $m_{x,A}(\beta)$ which contains $m(x,a_q,\beta)$ in its column $q$ entry, we solve, for every $x\in \mathcal X$ ,
        \begin{align*}
            \max_{\mathbf p _1 (x),\ldots,\mathbf p _Q (x)}
            m_{x,A}(\beta) 
            \begin{bmatrix}
                \mathbf p _1 (x)
                \\
                \vdots
                \\
                \mathbf p _Q (x)
            \end{bmatrix}
        \end{align*}
        subject to the constraints
        \begin{align*}
            \widetilde P _{Y|x}(\beta)
            =
            P_ {Y|x,A}(\beta) 
            \begin{bmatrix}
                \mathbf p _1 (x)
                \\
                \vdots
                \\
                \mathbf p _Q (x)
            \end{bmatrix}
            ,
            \qquad
            \sum_{q=1}^Q \mathbf p _q(x) 
            =
            1,
        \end{align*}
        and
        \begin{align*}
            \mathbf p _q(x) \geq 0 \qquad \forall q=1,\ldots,Q.
        \end{align*}
    
        \item We similarly obtain the solution to the minimization problem
        \begin{align*}
            \min_{\mathbf p _1 (x),\ldots,\mathbf p _Q (x)}
            m_{x,A}(\beta) 
            \begin{bmatrix}
                \mathbf p _1 (x)
                \\
                \vdots
                \\
                \mathbf p _Q (x)
            \end{bmatrix}
        \end{align*}
        subject to the same constraints. Let $U_x(\beta)$ and $L_x (\beta)$ be the solutions to these problems, respectively.
        
        \item Once we obtain these solutions for all $x\in \mathcal X$ we calculate
        \begin{align*}
            L(\beta) = \sum_{x\in \mathcal X} L_x(\beta) \widehat P (x)
            \qquad
            \text{and}
            \qquad
            U(\beta) = \sum_{x\in \mathcal X} U_x(\beta) \widehat P (x).
        \end{align*}

    \end{enumerate}

    \item The final estimated bounds on the average effect are given by
    \begin{align*}
        L = \min_{\beta \in \texttt{beta\_id\_grid}} L(\beta)
        \qquad
        \text{and}
        \qquad
        U = \max_{\beta \in \texttt{beta\_id\_grid}} U(\beta).
    \end{align*}
        
\end{enumerate}

\subsection{Identified set calculations underlying Figures \ref{fig:idset_logit}, \ref{fig:idset_rc}, \ref{fig:idset_dlogit} and \ref{fig:idset_rcd}}\label{sec:idmanual2}

The identified sets in Figures \ref{fig:idset_logit}, \ref{fig:idset_rc}, \ref{fig:idset_dlogit} and \ref{fig:idset_rcd}---where the true parameter value $\beta_0$ is taken as given---are the `population' identified sets, in the sense that they are based on the true choice probabilities implied by the model structure. To be more precise, given the value of $\beta_0$ the algorithm is based on the following steps:
\begin{enumerate}
    \item First, we obtain the discretized support of $A_i$, given by $\left\{ a_1,\ldots,a_Q\right\}$ and then obtain $P(a_q|x)$
    for all $x\in \mathcal X$ and $a_q$, $q=1,\ldots,Q$. The discretization of the support of $A_i$ and the calculation of $P(a_q|x)$ are based on the DGP and so differ from example to example. 
    
    As an example, here we illustrate the case of the static logit model with a discrete covariate; to recall from equation \eqref{eq:idlogitdisc}, this model is given by
    \begin{align*}
        Y_{it} = 1\left\{ X_{it}\beta + A_i \geq \varepsilon_{it} \right\}, \quad A_i \sim N(0,1),  \quad X_{it} = 1\left\{ A_i \geq \eta_{it} \right\}, \quad \eta_{it} \sim N(0,1).
    \end{align*}
    The discretized support of $A_i$ is given by 
    $\{a_1,\ldots,a_Q\}$ where $a_q=\Phi^{-1}(q/(Q+1))$. The conditional probability $P(x|a_q)$ is then obtained as
    \begin{align*}
        P(x|a_q) = P(x_1|a_q) \times \ldots \times P(x_T|a_q),
    \end{align*}
    where $x_1,\ldots,x_T$ are the individual entries of the $T\times 1$ vector $x$, and
    \begin{align*}
        P(x_t|a_q) 
        = 
        1\{x_t=1\} \Phi(a_q) + 1\{x_t=0\} (1-\Phi(a_q)),
    \end{align*}
    which mimics the DGP for the discrete covariate. Next, due to the way the support $\mathcal A$ is discretized, we have 
    $P(a_q)=1/Q$ for all $q=1,\ldots,Q$. From these, we obtain $P(x, a_q) = P(x|a_q)/Q$, 
    $P(x) = \sum_{q=1}^Q P(x,a_q) $, and finally
    \begin{align*}
        P(a_q|x) = \frac{P(a_q,x)}{P(x)}.
    \end{align*}

    \item For any $x\in \mathcal X$ and $y\in \mathcal Y$, 
    the choice probabilities are then given by
    \begin{align*}
        P(y|x) = \sum_{q=1}^Q P(y|x,a_q;\beta_0) \times P(a_q|x).
    \end{align*}

    \item Let $P_{Y|x}$ be the $|\mathcal Y|\times 1$ vector with the $r^{th}$ entry given by $P(y_r|x)$ where $y_r$ is the $r^{th}$ element of $|\mathcal Y|$. Let also the $|\mathcal Y| \times Q$ matrix $P_{Y|x,A}$ be defined such that its row $r$ and column $q$ entry is given by $P(y_r|x,a_q;\beta_0)$. Define also the $1\times Q$ row vector $m_{x,A}$ which contains $m(x,a_q,\beta_0)$ in its column $q$ entry. All these terms are obtained for all $x \in \mathcal X$.

    \item\label{1b} We solve the following optimization problem
        \begin{align}
                \max_{\mathbf{p}_1(x),\ldots,\mathbf{p}_Q(x)} m_{x,A}
                \begin{bmatrix}
                    \mathbf{p}_1(x)
                    \\
                    \vdots
                    \\
                    \mathbf{p}_Q(x)
                    \label{opto1}
                \end{bmatrix}
        \end{align}
        subject to the conditions
        \begin{align}
            P_{Y|x} = P_{Y|x,A}
            \begin{bmatrix}
                \mathbf{p}_1(x)
                \\
                \vdots
                \\
                \mathbf{p}_Q(x)
            \end{bmatrix}
            ,
            \qquad
            \sum_{q=1}^{Q} \mathbf{p}_q(x) =1,
            \label{optr1}
        \end{align}
        and
        \begin{align}
            \mathbf{p}_q(x)
            \geq
            0
            \qquad
            \forall
            q=1,\ldots,Q .
            \label{optr2}
        \end{align}
        Call the value of the objective function at the solution
        \begin{align*}
            U(x).
        \end{align*}
        Similarly, solve, with respect to the same conditions as in \eqref{optr1} and \eqref{optr2}, the optimization problem
        \begin{align}
                \min_{\mathbf{p}_1(x),\ldots,\mathbf{p}_Q(x)} m_{x,A}
                \begin{bmatrix}
                    \mathbf{p}_1(x)
                    \\
                    \vdots
                    \\
                    \mathbf{p}_Q(x)
                    \label{opto2}
                \end{bmatrix}
        \end{align}
        and call the value of the objective function at the solution
        \begin{align*}
            L(x).
        \end{align*}

        \item Repeating this algorithm for all $x\in \mathcal X$ yields the set $\{(L(x),U(x)):x\in \mathcal X\}$. The final lower and upper bounds of the identified set for the average effect are then given by
        \begin{align*}
            L = \sum_{x\in \mathcal X } L(x) P(x)
            \qquad
            \text{and}
            \qquad
            U = \sum_{x\in \mathcal X } U(x) P(x),
        \end{align*}
        respectively.

\end{enumerate}

\end{appendix}

\end{document}